\documentclass[pra,twocolumn, showpacs, preprintnumbers,amsmath,amssymb]{revtex4}

\usepackage{graphicx, color, graphpap}
\usepackage{amsmath}
\usepackage{enumitem}
\usepackage{amssymb}
\usepackage{amsthm}
\usepackage{pstricks}
\usepackage{float}
\usepackage{multirow}
\usepackage{hyperref}

\long\def\ca#1\cb{} 


\newcommand{\ket}[1]{|#1\rangle}               
\newcommand{\colo}{\,\hbox{:}\,}              
\newcommand{\bra}[1]{\langle #1|}              
\newcommand{\dya}[1]{\ket{#1}\!\bra{#1}}
\newcommand{\dyad}[2]{\ket{#1}\!\bra{#2}}        
\newcommand{\fid}{\text{fid}}
\newcommand{\rhot}{\tilde{\rho}}
\newcommand{\Sbf}{\mathbf{S}}

\newcommand{\Ebb}{\mathbb{E}}

\newcommand{\Fbar}{\overline{F}}

\newcommand{\sgbar}{\overline{\sigma}}
\newcommand{\taubar}{\overline{\tau}}
\newcommand{\Khat}{\widehat{K}}

\newcommand{\BC}{\mathcal{B}}

\newcommand{\DC}{\mathcal{D}}
\newcommand{\EC}{\mathcal{E}}
\newcommand{\FC}{\mathcal{F}}

\newcommand{\HC}{\mathcal{H}}

\newcommand{\VC}{\mathcal{V}}
\newcommand{\WC}{\mathcal{W}}
\newcommand{\XC}{\mathcal{X}}
\newcommand{\YC}{\mathcal{Y}}

\newcommand{\Tr}{{\rm Tr}}

\renewcommand{\geq}{\geqslant}
\renewcommand{\leq}{\leqslant}

\newcommand{\MQ}{\textsf{MQ}}
\newcommand{\mQ}{\textsf{mQ}}

\newcommand{\ot}{\otimes}
\newcommand{\ad}{^\dagger}

\newcommand*{\id}{\openone}

\newcommand{\al}{\alpha }

\newcommand{\dl}{\delta }
\newcommand{\Dl}{\Delta}
\newcommand{\ep}{\epsilon}


\newcommand{\lm}{\lambda }
\newcommand{\Lm}{\Lambda }

\newcommand{\sg}{\sigma }

\newcommand{\om}{\omega }

\newtheoremstyle{example}{\topsep}{\topsep}%
{}
{}
{\bfseries}
{.}
{   }
{\thmname{#1}\thmnumber{ #2}}
\theoremstyle{example}

\newtheorem{theorem}{Theorem}
\newtheorem{lemma}[theorem]{Lemma}

\theoremstyle{definition}

\begin{document}

\title{Collapse of the quantum correlation hierarchy links entropic uncertainty to entanglement creation}

\author{Patrick J. Coles}
\affiliation{Department of Physics, Carnegie Mellon University, Pittsburgh, Pennsylvania 15213, USA}
\affiliation{Centre for Quantum Technologies, National University of Singapore, Singapore}

\begin{abstract}
Quantum correlations have fundamental and technological interest, and hence many measures have been introduced to quantify them. Some hierarchical orderings of these measures have been established, e.g.\ discord is bigger than entanglement, and we present a class of bipartite states, called premeasurement states, for which several of these hierarchies collapse to a single value. Because premeasurement states are the kind of states produced when a system interacts with a measurement device, the hierarchy collapse implies that the uncertainty of an observable is quantitatively connected to the quantum correlations (entanglement, discord, etc.)\ produced when that observable is measured. This fascinating connection between uncertainty and quantum correlations leads to a reinterpretation of entropic formulations of the uncertainty principle, so-called entropic uncertainty relations, including ones that allow for quantum memory. These relations can be thought of as lower-bounds on the entanglement created when incompatible observables are measured. Hence, we find that entanglement creation exhibits \emph{complementarity}, a concept that should encourage exploration into ``entanglement complementarity relations". 
\end{abstract}

\pacs{03.67.Mn, 03.65.Ta, 03.67.Hk}

\maketitle

\section{Introduction}

As researchers attempt to develop the ultimate theory of information, encompassing both classical and quantum information, it is becoming increasingly apparent that quantum correlations - correlations that go beyond classical correlations - are of great fundamental and technological interest. Questions like, what gives the quantum advantage in computing tasks \cite{DatShaCav08}, have motivated the definition and study of many quantitative measures of quantum correlations, ranging from entanglement \cite{HHHH09} to discord \cite{OllZur01} and other related measures \cite{ModiEtAl2011review}. Some of these measures are operationally motivated, e.g.\ the number of Einstein-Podolsky-Rosen (EPR) pairs that can be distilled from the state, others are geometrically motivated like the distance to the nearest separable state or the nearest classical state, while others are motivated due to their ease of calculation. The zoo of quantum correlation measures is vast, and yet the story is simple for bipartite pure states, where the entropy of the reduced state pretty much captures it all. While it would be nice if the correlations of mixed states shared the simplicity of those of pure states, in general, we must settle for a hierarchical ordering of the various correlation measures, e.g., discord is bigger than entanglement \cite{PianiAdessoPRA.85.040301, HorEtAl05}, which in turn is bigger than coherent information \cite{DevWin05}.

In the present article, we consider a class of bipartite states for which this zoo dramatically simplifies to a single number; various quantum correlation measures which are in general related by a hierarchy of \textit{inequalities} become equal for these states, so we say that these states ``collapse the quantum correlation hierarchy". Hence these states are like pure states in that their correlations are ``simple", even though the set includes not only pure states but also some mixed states. Interestingly, the set of states that collapse the quantum correlation hierarchy corresponds precisely to the set of states that can be produced when a system interacts with a measurement device. These states have been called premeasurement states, since the unitary interaction (called premeasurement) that potentially correlates the system to the measurement device is the first step in the measurement process \cite{ZurekReview}. The fact that premeasurement states collapse the quantum correlation hierarchy has significant consequences, and much of this article is devoted to exploring these consequences. 

The most interesting consequence is a connection to uncertainty and the uncertainty principle. While the study of quantum correlations has seen a revolution of sorts recently, so has the study of the uncertainty principle. In quantitative expressions of the uncertainty principle, so-called uncertainty relations, researchers have replaced the standard deviation, the uncertainty measure employed in the original formulations \cite{Heisenberg, Robertson}, with \emph{entropy} measures, leading to a variety of different entropic uncertainty relations (EURs) \cite{EURreview1}, which are more readily applied to information-processing tasks. Allowing the observer to possess ``quantum memory" (a quantum system that may be entangled to the system of interest) has led to EURs \cite{RenesBoileau, BertaEtAl, TomRen2010, ColesEtAl, ColesColbeckYuZwolak2012PRL} with direct application in entanglement witnessing \cite{LXXLG,PHCFR} and cryptography \cite{TLGR}.

Our results allow us to establish a precise and general connection between the uncertainty of an observable and the quantum correlations, such as entanglement, created when that observable is measured (more precisely, premeasured). As a consequence, a wide variety of EURs, including those allowing for quantum memory, are subject to reinterpretation. The conventional interpretation is that EURs are lower bound on our inability to predict the outcomes of incompatible measurements, but our results imply that EURs can be thought of as lower bounds on the \textit{entanglement created} in incompatible measurements.

It is helpful to illustrate this connection with a simple example. Consider a qubit in state $\ket{0}$, then the unitary associated with a $Z$-measurement is a controlled-not (CNOT) acting on a register qubit that is initially in state $\ket{0}$. In this case, the overall state evolves trivially: $\ket{0} \ket{0}\to \ket{0} \ket{0}$, producing no entanglement. But if instead we did an $X$-measurement, with a CNOT controlled by the $\{\ket{+}, \ket{-}\}$ basis, then the state evolves as $\ket{0} \ket{0}= (\ket{+}+\ket{-}) \ket{0}/\sqrt{2} \to (\ket{+} \ket{0} +\ket{-} \ket{1})/\sqrt{2} $, which is maximally entangled. Note that the uncertainty of the $Z$ ($X$) observable was zero (maximal), which is connected to the final entanglement being zero (maximal). This example shows the connection of uncertainty to entanglement creation, and it also shows the \emph{complementarity} of entanglement creation: the $X$ measurement must create entanglement because the $Z$ measurement does not.

We remark that the entanglement created in measurements has been an area of interest previously \cite{ZurekReview, VedralPRL2003}, and there is renewed interest in this as it provides a general framework for quantifying discord \cite{PianiEtAl11, StrKamBru11, PianiAdessoPRA.85.040301}. It should, therefore, be of interest that our reinterpretation of EURs implies that the entanglement (and discord) created in measurements exhibits complementarity. This idea, which seems to be a general principle, suggests that there are classes of inequalities that capture the complementarity of quantum mechanics, which have yet to be explored and involve entanglement (or discord) creation. There is generally a trade-off; for a given quantum state, if one avoids creating quantum correlations in one measurement, then a complementary measurement will necessarily create such correlations.

In summary, we emphasize three main concepts in this article: (1) the quantum correlation hierarchy dramatically simplifies for premeasurement states, (2) an observable's uncertainty quantifies the entanglement created upon measuring that observable, and (3) entanglement creation exhibits complementarity. Mathematically speaking, concept (1) implies concept (2) which in turn implies concept (3), as we will discuss.

The rest of the manuscript is organized as follows. In Section~\ref{sct22} we define various classes of bipartite quantum states, including premeasurement states. In Section~\ref{sct5} we consider several different quantum correlation hierarchies, and we show that premeasurement states collapse these hierarchies. In particular, we consider hierarchies of measures based on a generic relative entropy, measures related to the von Neumann entropy, and measures related to smooth entropies. In Section~\ref{sct6}, we use these results to connect an observable's uncertainty to the quantum correlations created when that observable is measured. Then we argue that this gives a reinterpretation for EURs in Section~\ref{sct7}, focusing particularly on the complementarity of entanglement creation. Section~\ref{sct8} gives a few more implications of our results and discusses the future outlook for ``entanglement complementarity relations". Section~\ref{sct9} gives some concluding remarks.

\section{Classes of bipartite states}\label{sct22}

\subsection{Classical, separable, and entangled states}\label{sct22a}

Since we will be considering various correlation measures, it is helpful to define particular classes of bipartite quantum states. First, consider the set of all separable states, hereafter denoted $\textsf{Sep}$, which have the general form of a convex combination of tensor products:
\begin{equation}
\label{eqn1}
\rho_{AB}= \sum_j p_j \rho_{A,j} \ot \rho_{B,j},
\end{equation}
where $\{p_j\}$ is some probability distribution and $\rho_{A,j}$ and $\rho_{B,j}$ are density operators on systems $A$ and $B$. Entangled states are defined as those states that are not separable; we denote this set as $\textsf{Ent}$, the complement of $\textsf{Sep}$.

A special kind of separable state is a classical state, often called a classical-classical or $\textsf{CC}$ state, with the general form:
\begin{equation}
\label{eqn2}
\rho_{AB}= \sum_{j,k} p_{j,k} \dya{j} \ot \dya{k},
\end{equation}
which is like the embedding of a classical joint probability distribution $\{p_{j,k}\}$ in a Hilbert space, where $\{\ket{j}\}$ and $\{\ket{k}\}$ are orthonormal bases on $\HC_A$ and $\HC_B$, respectively. More generally, a state can be classical with respect to one of the subsystems, e.g., of the form:
\begin{equation}
\label{eqn3}
\rho_{AB}= \sum_{j} p_{j} \dya{j} \ot \rho_{B,j},
\end{equation}
in which case it is called classical-quantum or $\textsf{CQ}$, and naturally is called quantum-classical or $\textsf{QC}$ if it is classical with respect to system $B$. The following relations between these sets should be clear from the above definitions:
\begin{equation}
\label{eqn4}
\textsf{CQ}\subset \textsf{Sep}, \quad \textsf{QC}\subset \textsf{Sep}, \quad \textsf{CQ} \cap \textsf{QC} = \textsf{CC},
\end{equation}
and a Venn diagram in Fig.~\ref{fgr2} depicts these relations.

\subsection{Pure states}\label{sct22aa}

Pure states can be either separable or entangled, though if a pure state is separable, it is necessarily a classical state (more specifically, a product state), in other words,
\begin{equation}
\label{eqn5}
(\textsf{Pure} \cap \textsf{Sep}) \subset \textsf{CC},
\end{equation}
as depicted in Fig.~\ref{fgr2}. The correlations of pure states are very well understood, e.g., see \cite{HHHH09,NieChu00}, and one of our contributions is to characterize a set of states whose correlations are somewhat analogous to those of pure states, a set that encompasses, but goes beyond, pure states. We discuss this set below.

\begin{figure}
\begin{center}
\includegraphics[width=2in]{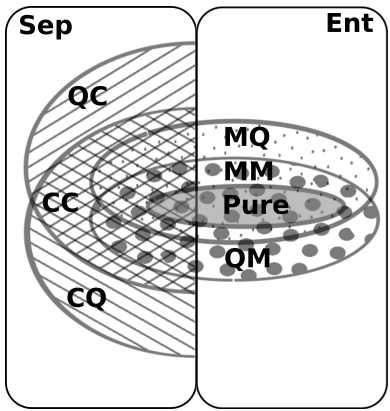}
\caption{Venn diagram for several classes of bipartite states. Bipartite states are either separable ($\textsf{Sep}$) or non-separable ($\textsf{Ent}$). Subsets of $\textsf{Sep}$ include $\textsf{QC}$ and $\textsf{CQ}$, which are shaded with lines slanted up-to-the-right and up-to-the-left, respectively, and $\textsf{CC}$ is the intersection of these two sets. The set of pure states is shaded solid gray and is contained inside $\textsf{MM}$, the intersection of $\textsf{MQ}$ and $\textsf{QM}$, which are respectively shaded with small dots and large dots. Note: the figure is not to scale, and is only meant to convey the set relationships given in Eqs.~\eqref{eqn4}, \eqref{eqn5}, and \eqref{eqn7}--\eqref{eqn9}.}
\label{fgr2}
\end{center}
\end{figure}

\subsection{Premeasurement states}\label{sct22b}

Consider the following set of bipartite states:
\begin{equation}
\label{eqn6}
\textsf{MQ}:=\{\rho_{AB} : \rho_{AC}\in \textsf{CQ}\text{ for pure }\rho_{ABC} \}.
\end{equation}
Here, $\rho_{ABC}$ is any purification of $\rho_{AB}$, so we are considering the set of states $\rho_{AB}$ such that there exists a purification $\rho_{ABC}$ whose marginal $\rho_{AC}$ is of the general form of \eqref{eqn3}, i.e., classical with respect to system $A$. (If $\rho_{AC}\in \textsf{CQ}$ for some purification of $\rho_{AB}$, then the same will be true for all other purifications.) If $A$ and $B$ change roles in \eqref{eqn6}, i.e., if $\rho_{BC} \in \textsf{CQ}$, then we denote this set as $\textsf{QM}$, and if both $\rho_{AC}$ and $\rho_{BC}$ are $\textsf{CQ}$, then we say that $\rho_{AB}\in \textsf{MM}$. In other words, 
\begin{equation}
\label{eqn7}
\textsf{MQ} \cap \textsf{QM} = \textsf{MM}.
\end{equation}
It turns out, as we will see below, that $\textsf{MM}$ corresponds precisely to the ``maximally correlated states", introduced by Rains \cite{RainsPhysRevA.60.179}.

It is clear that if $\rho_{AB}$ is pure, then any purifying system $C$ will be in a tensor product with (uncorrelated with) $AB$, hence both marginals $\rho_{AC}$ and $\rho_{BC}$ will be classical. So all pure states are in $\textsf{MM}$,
\begin{equation}
\label{eqn8}
\textsf{Pure}\subset \textsf{MM}.
\end{equation}

Figure~\ref{fgr2} schematically depicts Eqs.~\eqref{eqn7} and \eqref{eqn8}. Also captured by this figure is an extension of \eqref{eqn5} to $\textsf{MQ}$ and $\textsf{QM}$ states: 
\begin{equation}
\label{eqn9}
(\textsf{MQ} \cap \textsf{Sep}) \subset \textsf{CC},\quad (\textsf{QM} \cap \textsf{Sep}) \subset \textsf{CC}.
\end{equation}
While \eqref{eqn9} is not at all obvious, it is a consequence of our results proven in Sec.~\ref{sct5}.

Our curious notation $\textsf{MQ}$ is motivated by the fact that one of the subsystems, namely system $A$ in \eqref{eqn6}, is behaving like a measurement device in a way that we elaborate on below. Thus, one can read $\textsf{MQ}$ as ``measurement device - quantum", analogous to how one reads $\textsf{CQ}$ as ``classical - quantum".

\begin{figure}[t]
\begin{center}

\begin{pspicture}(-1.0,-0.3)(5.5,2.4) 
\newpsobject{showgrid}{psgrid}{subgriddiv=1,griddots=10,gridlabels=6pt}
\def\lwd{0.035} 
\def\lwb{0.10}  
\psset{
labelsep=2.0,
arrowsize=0.150 1,linewidth=\lwd}
\def\circ{%
\pscircle[fillcolor=white,fillstyle=solid]{0.4}}
\def\rectg(#1,#2,#3,#4){
\psframe[fillcolor=white,fillstyle=solid](#1,#2)(#3,#4)}
\def\rectc(#1,#2){%
\psframe[fillcolor=white,fillstyle=solid](-#1,-#2)(#1,#2)}
\def\brecs(#1,#2){%
\psframe[fillcolor=white,fillstyle=solid,linestyle=none](-#1,-#2)(#1,#2)}
\def\squ{\rectc(0.35,0.35)}
\def\lbrk#1{\left\{\vrule height #1cm depth #1cm width 0pt\right.}
\def\rbrk#1{\left.\vrule height #1cm depth #1cm width 0pt\right\}}
\def\vvv#1{\vrule height #1 cm depth #1 cm width 0pt}
\def\rbrac{$\left.\vvv{0.5}\right\}$}
\def\lbrac{$\left\{\vvv{0.6}\right.$}

\def\rdet{0.35}  
\def\detect{
\psarc[fillcolor=white,fillstyle=solid](0,0){\rdet}{-90}{90}
\psline(0,-\rdet)(0,\rdet)}
\def\ddetect{
\psarc[fillcolor=white,fillstyle=solid](0.2,0){\rdet}{-90}{90}
\rectg(0,-\rdet,0.2,\rdet)}
\def\delx{0.8} 
\def\edetect{
\psarc[fillcolor=white,fillstyle=solid](\delx,0){\rdet}{-90}{90}
\psframe[fillcolor=white,fillstyle=solid,linestyle=none](0,-\rdet)(\delx,\rdet)
\psline(\delx,\rdet)(0,\rdet)(0,-\rdet)(\delx,-\rdet)}
\def\rdot{0.10} \def\rodot{0.12} 
\def\dot{\pscircle*(0,0){\rdot}} 
\def\odot{\pscircle[fillcolor=white,fillstyle=solid](0,0){\rodot}} 
\def\dput(#1)#2#3{\rput(#1){#2}\rput(#1){#3}} 
\def\rcp{0.25}  
\def\cnt{\pscircle(0,0){\rcp}\psline(-\rcp,0)(\rcp,0)\psline(0,-\rcp)(0,\rcp)}
\def\cnotg(#1,#2,#3){%
\psline(#1,#2)(#1,#3)\rput(#1,#2){\cnt}\pscircle*(#1,#3){\rdot}}
\def\cpxu{0.15} 
\def\cpx{\psline(-\cpxu,-\cpxu)(\cpxu,\cpxu)%
\psline(-\cpxu,\cpxu)(\cpxu,-\cpxu)} 
\def\cphase(#1,#2,#3){
\psline(#1,#2)(#1,#3)\rput(#1,#2){\cpx}\rput(#1,#3){\cpx}} 
\def\csqug(#1,#2,#3){%
\psline(#1,#2)(#1,#3)\rput(#1,#2){\squ}\pscircle*(#1,#3){\rdot}}
\def\csquug(#1,#2,#3)#4{%
\psline(#1,#2)(#1,#3)\dput(#1,#2){\squ}{#4}\pscircle*(#1,#3){\rdot}}
\def\lwdsh{0.020} 
\def\tmline(#1,#2,#3)#4{
\rput(#1,#2){\psline[linestyle=dashed,linewidth=\lwdsh](0.0,-0.3)(0.0,#3)}
\rput(0,-0.4){\rput[t](#1,#2){#4}}}
\def\tnline(#1,#2,#3)#4{
\rput(#1,#2){\psline[linestyle=dashed,linewidth=\lwdsh](0.0,-0.3)(0.0,#3)}
\rput(0,-0.7){\rput[B](#1,#2){#4}}}
\def\vertdash{\psline[linestyle=dashed,linewidth=\lwdsh](0.0,-0.3)(0.0,1.5)}
\psline{>-}(0.0,2.0)(4.0,2.0)
\psline{>-}(0.0,0.0)(4.0,0.0)
\psline{>-}(0.0,1.0)(4.0,1.0)

\cnotg(2.0,0.0,1.0)

\rput(-0.45,1.5){\lbrac}
\rput(4.12,0.5){\rbrac}

\rput[r](-0.58,1.5){$\ket{\psi} $} 
\rput[r](0.0,1.0){$\rho_S $} 
\rput[r](0.0,2.0){$\rho_E $} 
\rput[r](0.0,0.0){$\ket{0} $} 
\rput[r](5.23,0.5){$\tilde{\rho}_{M_XS} $} 
\rput[b](0.6,2.1){$E$} 
\rput[b](0.6,1.1){$S$} 
\rput[b](0.6,0.1){$M_X $} 
\rput[b](2.0,1.18){$X$} 
\end{pspicture}

\caption{%
The interaction of $S$ with an $X$-measurement device is modeled as a generalized CNOT, controlled by a PVM $X$ on $S$, as given by Eq.~\eqref{eqn10}. System $E$ purifies $\rho_S$. 
\label{fgr1}}
\end{center}
\end{figure}

To make the connection to measurement, it is helpful to switch to a more intuitive notation for the various subsystems. We consider the interaction of system $S$ with a device $M_X$ that measures observable $X=\{X_j\}$ of $S$, where the $X_j$ are orthogonal projectors that sum to the identity on $\HC_S$. [We emphasize that the $X_j$ are not necessarily rank-one; $X$ is a general projection valued measure (PVM).] This can be modeled by considering a set of orthonormal states $\{\ket{j}\}$ on $ M_X$, and if $S$ is hit by projector $X_j$, then $M_X$ goes to the state $\ket{j}$, as follows:
\begin{equation}
\label{eqn10}
\ket{0}_{M_X} \ket{\psi}_S \to \sum_j \ket{j}_{M_X} (X_j \ket{\psi})_S =V_X\ket{\psi}_S,
\end{equation}
which is essentially a controlled-shift operation, and the notation is simplified by defining the isometry:
\begin{equation}
\label{eqn11}
V_X = \sum_j   \ket{j}_{M_X} \ot (X_j)_S.
\end{equation}
In \eqref{eqn10} we assumed that both $S$ and $M_X$ were initially described by pure states. More generally either state could be mixed, although we could always lump the measurement device's environment into system $M_X$ and hence purify the state of $M_X$ and call it the $\ket{0}$ state. We make this simplification throughout, although see \cite{VedralPRL2003} for a treatment allowing the measurement device to be in a mixed state. On the other hand, we find it convenient and natural to the think of the system's initial state as being a (possibly mixed) density operator $\rho_S$; then the final state after the interaction with $M_X$ is:
\begin{equation}
\label{eqn12}
\rhot_{M_X S} = V_X \rho_S V_X\ad .
\end{equation}
The circuit diagram for this process is depicted in Fig.~\ref{fgr1}, using the controlled-not (CNOT) symbol even though the process is slightly more general. Also shown is the quantum system that purifies $\rho_S$, called $E$. Because it is the first step in performing a measurement, this process has been called ``premeasurement", and the resulting states $\rhot_{M_X S}$ that are produced have been called ``premeasurement states" \cite{ZurekReview}. 

Now suppose we consider the set of \textit{all} premeasurement states, i.e., the set of all bipartite states that can be thought of as resulting from a process like that depicted in Fig.~\ref{fgr1}. It turns out that this set is precisely equivalent to $\textsf{MQ}$, as shown in Appendix~\ref{app2}. To write $\textsf{MQ}$ as the set of all premeasurement states, we revert to the notation $A$ and $B$ for the two subsystems, since we are being general and abstract again. Denote the set of all orthonormal bases $W=\{\ket{W_j}\}$ on $\HC_A$ as $\WC_A$, denote the set of all PVMs $X=\{X_j\}$ on system $B$ as $\XC_B$, and denote the set of all premeasurement isometries $V_X:\HC_B\to\HC_{AB}$ as 
\begin{equation}
\label{eqn13}
\VC=\{V_X: (\exists W \in \WC_A)( \exists X\in \XC_B)(V_X=\sum_j \ket{W_j}\ot X_j ) \}.
\end{equation}
Denoting the set of (normalized) density operators on $B$ as $\DC_B$, then we have (see Appendix~\ref{app2} for the proof)
\begin{equation}
\label{eqn14}
\textsf{MQ}=\{\rho_{AB} \colo (\exists \sg_B \in \DC_B)(\exists V_X\in \VC)(\rho_{AB}=V_X \sg_B V_X\ad) \}.
\end{equation}
In other words, the general form for states in $\textsf{MQ}$ is:
\begin{equation}
\label{eqn15}
\rho_{AB}=\sum_{j,k} \dyad{W_j}{W_k}\ot X_j\sg_BX_k
\end{equation}
for some $W\in \WC_A$, $X\in \XC_B$, and $\sg_B\in\DC_B$. It is clear from \eqref{eqn15} that if all the $X_j$ projectors are rank-one and hence $X$ can be thought of as an orthonormal basis, then the state is a ``maximally correlated state", in the sense that the $W$ basis on $A$ is perfectly correlated with the $X$ basis on $B$. So maximally correlated states are a special kind of $\textsf{MQ}$ state, corresponding to $\textsf{MM}$. But more generally, we can think of $\textsf{MQ}$ states as being one-way maximally correlated in the sense that an orthonormal basis on $A$ is perfectly correlated to some projective observable (not necessarily a basis) on $B$; again $W$ and $X$ are the two observables playing this role in \eqref{eqn15}.

We remark that $\textsf{MQ}$ is a strict subset of a set of states considered in Ref.~\cite{CorDeOFan11}, defined as follows
\begin{equation}
\label{eqn16}
\mQ :=\{\rho_{AB} : \rho_{AC}\in \textsf{Sep} \text{ for pure }\rho_{ABC} \},
\end{equation}
i.e., the set of states $\rho_{AB}$ where $\rho_{AC}$ is separable for any purification $\rho_{ABC}$. Since $\textsf{CQ}\subset \textsf{Sep}$, it is clear that $\textsf{MQ} \subset \mQ $. We believe it is important to make this connection with Ref.~\cite{CorDeOFan11}, because they showed an analog of \eqref{eqn9}, namely that $(\mQ \cap \textsf{Sep}) \subset \textsf{QC}$, a consequence of the fact that states in $\mQ $ partially collapse the quantum correlation hierarchy. However, we note in Sec.~\ref{sct5.3} that $\mQ$ states do not necessarily collapse the ``full" quantum correlation hierarchy, and that the restriction of $\mQ$ to $\textsf{MQ}$ is precisely what is needed in order to obtain the ``full" collapse.

Our notation $\textsf{mQ}$ is motivated by the following observation. Unlike $\textsf{MQ}$ there are states in $\textsf{mQ}$ of the form:
\begin{equation}
\label{eqn17}
\rho_{AB}=\sum_{j,k} \dyad{\phi_j}{\phi_k}\ot X_j\sg_BX_k = \tilde{V}_X \sg_B \tilde{V}_X\ad
\end{equation}
where the $\ket{\phi_j}$ are non-orthogonal pure states, $\sg_B\in\DC_B$, $\{X_j\}\in \XC_B$, and $\tilde{V}_X = \sum_j \ket{\phi_j} \ot X_j$ is an isometry. States of the form of \eqref{eqn17} can be viewed as resulting from a sort of premeasurement, but where the conditional states on the measurement device $\{\ket{\phi_j}\}$, associated with the different $X_j$ projectors on the system being measured, are not necessarily orthogonal. Hence these states are obtained from doing a ``weak" or ``soft" premeasurement (i.e., not fully extracting the $X$ information), and the lower-case $\textsf{m}$ in $\textsf{mQ}$ emphasizes this.

\section{Collapse of quantum correlation hierarchy}\label{sct5}

\subsection{Four types of quantum correlation measures} \label{sct5.0}

To what degree is the correlation between two systems different from that of a classical joint probability distribution - this is the basic question one aims to answer with quantum correlation measures. This difference can be quantified in a wide variety of ways, but let us consider four common paradigms. (This introduction is for completeness only, please see \cite{HHHH09, ModiEtAl2011review} for review articles.)

One can quantify how far the quantum state is from the set of classical states, $\textsf{CC}$, either in terms of a distance or in terms of the information content of the states. These are sometimes called two-way quantumness or two-way discord measures, since they measure non-classicality with respect to both subsystems.

A second paradigm is to quantify how far the state is from either $\textsf{CQ}$ or $\textsf{QC}$, these are one-way quantumness (or discord) measures, since they measure the non-classicality with respect to just one subsystem. 

A third paradigm is to quantify the distance to $\textsf{Sep}$, these are called entanglement measures. In practice, the label ``entanglement measure" is restricted to those measures that are non-increasing under local operations and classical communication (LOCC) \cite{HHHH09}, though there is some connection between this criterion and quantifying the distance to $\textsf{Sep}$ \cite{VedrPlen1998}.

Finally there are measures of the form of the negative of a conditional entropy, which quantify the distance to a state of the form $\tau_A \ot \rho_B$ where $\tau_A$ is the maximally mixed state (see below), and in the case of von Neumann entropy, the measure is called coherent information.

\subsection{Basic structure of results} \label{sct5.1}

Within each of these four paradigms, there are different quantitative measures, and below we discuss measures based on relative entropies, measures related to the von Neumann entropy, and measures related to smooth entropies. But in each case there is a basic structure that negative conditional entropy (coherent information) lower bounds entanglement which lower bounds one-way discord which lower bounds two-way discord. We call this the \textit{quantum correlation hierarchy}, e.g., Ref.~\cite{PianiAdessoPRA.85.040301} discussed this idea. Many of the inequalities in these hierarchies are well-known, although some require proof. 

In what follows, we present our main technical results, that premeasurement states \textit{collapse the quantum correlation hierarchy}. For these states, which we also call $\textsf{MQ}$ states, defined by \eqref{eqn6} or \eqref{eqn14}, the inequalities relating coherent information, entanglement, one-way discord, and two-way discord turn into \textit{equalities}.

Geometrically speaking, the collapse is some reflection of the fact that the closest separable state to a premeasurement state is a $\textsf{CC}$ state. One can verify this claim (Appendix~\ref{app5}) using the Bures distance \cite{BenZyc06}, a true metric, though in what follows we observe this phenomenon using various relative entropies as (pseudo) measures of distance.

\subsection{Collapse of measures based on relative entropy} \label{sct5.2}

Here we use the relative entropy to express various correlation measures as a distance to a certain class of states \cite{ModiEtAl2010}. In particular, we consider a generalized relative entropy $D_K(P||Q)$, a function that maps two positive-semidefinite operators $P$ and $Q$ to the real numbers, that satisfies the following two properties (also considered in \cite{ColesColbeckYuZwolak2012PRL}):

\begin{enumerate}[label=(\alph{enumi})]
\item \label{a} Non-increasing under quantum channels $\EC$: $D_K(\EC(P)||\EC(Q))\leq D_K(P||Q)$.
\item \label{b} Being unaffected by null subspaces: $D_{K}(P \oplus 0 || Q\oplus Q')=D_{K}(P||Q)$, where $\oplus$ denotes direct sum.  
\end{enumerate}

These properties are satisfied by several important examples \cite{ColesColbeckYuZwolak2012PRL}, and so there is power in formulating a general result that relies only on the properties. Examples include the von Neumann relative entropy \cite{VedralReview02, NieChu00}:
\begin{equation}
\label{eqn18}
D(P || Q) := \Tr(P \log P)- \Tr(P \log Q),
\end{equation}
the Renyi relative entropies \cite{Renyi, Petz84} within the range $\al \in (0,2]$:
\begin{equation}
\label{eqn19}
D_{\al}(P || Q) := \frac{1}{\al-1}\log \Tr (P ^{\al} Q^{1-\al}),
\end{equation}
and the relative entropies associated with the min- and max-entropies \cite{RennerThesis05, KonRenSch09}, respectively,
\begin{align}
\label{eqn20}D_{\max}(P || Q)&:=\log\min\{\lm : P\leq \lm Q \},\\
\label{eqn21}D_{\fid}(P || Q)&:= -2\log \Tr [(\sqrt{P} Q\sqrt{P})^{1/2}].
\end{align}
We label \eqref{eqn20} as $D_{\max}$ (even though it is associated with the min-entropy) because in general $D_{\max}(P  || Q) \geq D( P || Q)$ \cite{Datta09}, and we label \eqref{eqn21} as $D_{\fid}$ because it is closely related to the fidelity.

Consider an entanglement measure \cite{VedrPlen1998} based on $D_K$:
\begin{equation}
\label{eqn22}
\Ebb_K^{A|B}(\rho_{AB}):=\min_{\sg_{AB}\in \textsf{Sep}}D_K(\rho_{AB}|| \sg_{AB}).
\end{equation}
Property~\ref{a} implies, for any LOCC $\Lm$,
\begin{equation}
\label{eqn23}
\Ebb_K^{A|B}(\rho_{AB})\geq \Ebb_K^{A|B}(\Lm(\rho_{AB})),
\end{equation}
which is the well-known monotonicity property \cite{HHHH09}. Let us also define one-way and two-way measures of quantumness (a.k.a.\ discord) \cite{ModiEtAl2011review}:
\begin{align}
\label{eqn24}\Dl_K^{\overrightarrow{A|B}}(\rho_{AB}):=\min_{\sg_{AB}\in \textsf{CQ}}D_K(\rho_{AB}|| \sg_{AB}),\\
\label{eqn25}\Dl_K^{\overleftrightarrow{A|B}}(\rho_{AB}):=\min_{\sg_{AB}\in \textsf{CC}}D_K(\rho_{AB}|| \sg_{AB}).
\end{align}
Finally, let us define a conditional entropy \cite{ColesColbeckYuZwolak2012PRL},
\begin{equation}
\label{eqn26}
H_K(A|B):=\max_{\sigma_B}[-D_K(\rho_{AB}||\id\ot\sigma_B)],
\end{equation}
where the maximization is over all (normalized) density operators $\sigma_B$ on $B$.

To prove our result, we note two additional properties, which were discussed in \cite{ColesColbeckYuZwolak2012PRL}. If $D_K$ satisfies~\ref{a} and~\ref{b}, and if $\tilde{Q}\geq Q$, then
\begin{equation}
\label{eqn27}
D_{K}(P||Q)\geq D_{K}(P||\tilde{Q}),
\end{equation}
and if $\Pi_P$ is a projector onto a space that includes the support of $P$, then
\begin{equation}
\label{eqn28}
D_{K}( P || Q )\geq D_{K}(P|| \Pi_P Q\Pi_P).
\end{equation}

We now show that the correlation measures defined above form a hierarchy. This hierarchy, though interesting in itself, will be useful below in proving that the various correlation measures become equal in the special case of premeasurement states.
\begin{lemma}
\label{thm1}
Let $D_K$ satisfy~\ref{a} and~\ref{b}, then for any $\rho_{AB}$,
\begin{equation} 
\label{eqn29}
-H_K(A|B) \leq \Ebb_K^{A| B}\leq  \Dl_K^{\overrightarrow{A|B}}, \Dl_K^{\overrightarrow{B|A}}\leq \Dl_K^{\overleftrightarrow{A|B}}.
\end{equation} 
\end{lemma}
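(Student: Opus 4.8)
The plan is to establish the chain of four inequalities in \eqref{eqn29} one link at a time, in each case exploiting that the minimizations defining the successive quantities are over nested sets of states, together with the structural properties \ref{a}, \ref{b} and their consequences \eqref{eqn27}--\eqref{eqn28}. The rightmost inequalities, $\Dl_K^{\overrightarrow{A|B}}\leq \Dl_K^{\overleftrightarrow{A|B}}$ and $\Dl_K^{\overrightarrow{B|A}}\leq \Dl_K^{\overleftrightarrow{A|B}}$, are immediate from the set inclusions $\textsf{CC}\subset\textsf{CQ}$ and $\textsf{CC}\subset\textsf{QC}$ in \eqref{eqn4}: minimizing $D_K(\rho_{AB}||\cdot)$ over a larger set can only decrease the value. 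The same observation handles the middle inequalities $\Ebb_K^{A|B}\leq \Dl_K^{\overrightarrow{A|B}}$ and $\Ebb_K^{A|B}\leq\Dl_K^{\overrightarrow{B|A}}$: since $\textsf{CQ}\subset\textsf{Sep}$ and $\textsf{QC}\subset\textsf{Sep}$, the minimum over $\textsf{Sep}$ in \eqref{eqn22} is no larger than the minimum over either $\textsf{CQ}$ or $\textsf{QC}$ in \eqref{eqn24}.

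The only link requiring real work is the leftmost one, $-H_K(A|B)\leq \Ebb_K^{A|B}$. Unpacking the definitions \eqref{eqn22} and \eqref{eqn26}, this amounts to showing that for every $\sg_{AB}\in\textsf{Sep}$ there exists a density operator $\tau_B$ on $B$ with
\begin{equation*}
D_K(\rho_{AB}||\id_A\ot\tau_B)\leq D_K(\rho_{AB}||\sg_{AB}).
\end{equation*}
The idea is to start from an arbitrary separable $\sg_{AB}=\sum_j p_j\,\rho_{A,j}\ot\rho_{B,j}$, replace each local factor $\rho_{A,j}$ by the identity to dominate $\sg_{AB}$ by an operator of product-with-$\id_A$ form, and invoke \eqref{eqn27} to conclude the relative entropy does not increase. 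Concretely, since $\rho_{A,j}\leq \id_A$ for every $j$, we have $\sg_{AB}\leq \id_A\ot\big(\sum_j p_j\,\rho_{B,j}\big)=\id_A\ot\tau_B$ with $\tau_B:=\sum_j p_j\,\rho_{B,j}$ a genuine density operator; applying \eqref{eqn27} with $Q=\sg_{AB}$ and $\tilde Q=\id_A\ot\tau_B$ gives $D_K(\rho_{AB}||\sg_{AB})\geq D_K(\rho_{AB}||\id_A\ot\tau_B)\geq -H_K(A|B)$, where the last step uses that $H_K(A|B)$ maximizes over all such $\tau_B$. Taking the minimum over $\sg_{AB}\in\textsf{Sep}$ yields $\Ebb_K^{A|B}\geq -H_K(A|B)$.

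The main obstacle, such as it is, lies entirely in this last step, and it is a mild one: one must be careful that \eqref{eqn27} is applicable, i.e.\ that the hypothesis $\tilde Q\geq Q$ is met in the operator-ordering sense (which it is, by the positivity of each $p_j$ and the operator inequality $\rho_{A,j}\leq\id_A$ combined with positivity of $\rho_{B,j}$), and that $\tau_B$ as constructed is normalized (which follows from $\Tr\rho_{B,j}=1$ and $\sum_j p_j=1$). Everything else is bookkeeping with nested feasible sets. I would present the argument in the order above—first dispatch the three easy inclusions in a sentence each, then give the domination argument for the coherent-information bound—since that mirrors the logical dependence and isolates the one place where a property of $D_K$ beyond monotonicity is actually used.
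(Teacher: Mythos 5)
Your proposal is correct and follows essentially the same route as the paper: the three right-hand inequalities come from the nested feasible sets $\textsf{CC}\subset\textsf{CQ},\textsf{QC}\subset\textsf{Sep}$, and the leftmost one uses the operator domination $\sg_{AB}\leq \id_A\ot\Tr_A(\sg_{AB})$ for separable $\sg_{AB}$ together with \eqref{eqn27}. Your $\tau_B=\sum_j p_j\rho_{B,j}$ is exactly the paper's $\sg_B=\Tr_A(\sg_{AB})$, so the arguments coincide.
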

\begin{proof}
The left-most inequality is proven by supposing $\sg_{AB}\in\textsf{Sep}$ achieves the minimization in $\Ebb_K^{A| B}(\rho_{AB})$, then
\begin{align}
\Ebb_K^{A| B}(\rho_{AB})&=D_K(\rho_{AB}|| \sg_{AB})\notag\\
&\geq D_K(\rho_{AB}|| \id \ot \sg_B)\geq - H_{K}(A|B)\notag
\end{align}
where we invoked~\eqref{eqn27} and the fact that, if $\sg_{AB}$ is separable, then $\id \ot \sg_B\geq \sg_{AB}$ with $\sg_B=\Tr_A(\sg_{AB})$. The other inequalities follow from $\textsf{CC} \subset \textsf{CQ} \subset \textsf{Sep}$.
\end{proof}

Now we can state one of our main technical results, that the hierarchy in \eqref{eqn29} collapses onto a single value for $\textsf{MQ}$ states, which we also call premeasurement states (see Sec.~\ref{sct22b}) since system $A$ plays the role of a measurement device $M_X$ and $B$ is the system $S$ being measured.
\begin{theorem}
\label{thm2}
Let $D_K$ satisfy~\ref{a} and~\ref{b}, then for any premeasurement state $\tilde{\rho}_{M_X S}= V_X\rho_S V_X\ad$,
\begin{align}
\label{eqn30}
-H_K(M_X|S)= \Ebb_K^{M_X| S}=  \Dl_K^{\overrightarrow{M_X|S}}=\Dl_K^{\overrightarrow{S|M_X}}= \Dl_K^{\overleftrightarrow{M_X|S}}.
\end{align}
\end{theorem}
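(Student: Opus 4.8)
The plan is to avoid re-deriving each equality separately and instead lean entirely on the hierarchy already established in Lemma~\ref{thm1}. For $A=M_X$, $B=S$ that lemma gives
\[
-H_K(M_X|S)\leq \Ebb_K^{M_X| S}\leq \Dl_K^{\overrightarrow{M_X|S}},\ \Dl_K^{\overrightarrow{S|M_X}}\leq \Dl_K^{\overleftrightarrow{M_X|S}},
\]
so to force all five quantities to coincide it suffices to prove the single reverse inequality $-H_K(M_X|S)\geq \Dl_K^{\overleftrightarrow{M_X|S}}(\rhot_{M_X S})$ for premeasurement states; then the leftmost and rightmost terms of the chain are equal and the chain collapses, which is exactly \eqref{eqn30}.

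To prove that reverse inequality, I would introduce the operator $\Pi:=\sum_j \dya{j}_{M_X}\ot (X_j)_S$ built from the basis $\{\ket{j}\}$ and PVM $\{X_j\}$ appearing in $V_X$ of \eqref{eqn11}. Using $X_jX_k=\dl_{jk}X_j$ one checks at once that $\Pi^2=\Pi$ and $\Pi V_X=V_X$; hence $\Pi\rhot_{M_X S}=\rhot_{M_X S}$, so the range of the projector $\Pi$ contains $\supp(\rhot_{M_X S})$. Now fix an arbitrary $\sg_S\in\DC_S$ and apply \eqref{eqn28} with $P=\rhot_{M_X S}$, $Q=\id\ot\sg_S$, and support-containing projector $\Pi$:
\[
D_K(\rhot_{M_X S}\,\|\,\id\ot\sg_S)\geq D_K\!\big(\rhot_{M_X S}\,\|\,\Pi(\id\ot\sg_S)\Pi\big),
\]
and a one-line computation, again using orthogonality of the $X_j$, gives $\Pi(\id\ot\sg_S)\Pi=\sum_j \dya{j}\ot X_j\sg_S X_j$.

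The final step is to observe that this pinched operator lies in $\textsf{CC}$. Indeed the operators $X_j\sg_SX_j$ are positive semidefinite with supports in the mutually orthogonal subspaces $\mathrm{range}(X_j)$, hence are simultaneously diagonalizable in an orthonormal basis of $\HC_S$, which puts $\Pi(\id\ot\sg_S)\Pi$ into the form \eqref{eqn2}; it is normalized because $\sum_j X_j=\id$ forces $\Tr[\Pi(\id\ot\sg_S)\Pi]=\Tr(\sg_S)=1$. Therefore $D_K(\rhot_{M_X S}\,\|\,\Pi(\id\ot\sg_S)\Pi)\geq \Dl_K^{\overleftrightarrow{M_X|S}}(\rhot_{M_X S})$ by definition \eqref{eqn25}, and chaining this with the displayed bound and minimizing over $\sg_S$ yields, via definition \eqref{eqn26} of $H_K$, the claim $-H_K(M_X|S)\geq \Dl_K^{\overleftrightarrow{M_X|S}}(\rhot_{M_X S})$.

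The step demanding genuine care is verifying that $\Pi(\id\ot\sg_S)\Pi$ is a bona fide $\textsf{CC}$ state — i.e.\ that the blocks $X_j\sg_SX_j$ really share a common eigenbasis because the $X_j$ have orthogonal ranges — and checking that \eqref{eqn28} is legitimately applicable with this particular $\Pi$ (which reduces to the identities $\Pi^2=\Pi$ and $\Pi V_X=V_X$). Everything else is routine manipulation of the orthogonal projectors $X_j$, and no separate work is needed for the intermediate measures $\Ebb_K^{M_X|S}$, $\Dl_K^{\overrightarrow{M_X|S}}$, $\Dl_K^{\overrightarrow{S|M_X}}$, since they are trapped between the two ends of the Lemma~\ref{thm1} chain.
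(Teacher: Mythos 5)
Your proposal is correct and follows essentially the same route as the paper: the projector $\Pi=\sum_j\dya{j}\ot X_j$ you construct is exactly $V_XV_X\ad$, and the paper likewise applies \eqref{eqn28} with this projector, identifies the pinched operator $\sum_j\dya{j}\ot X_j\sg_SX_j$ as a $\textsf{CC}$ state via the eigenbases of the orthogonally supported blocks, and closes the argument with the reverse chain from Lemma~\ref{thm1}.
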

\begin{proof}
Let $\sg_S$ be the state that achieves the optimization in $H_{K}(M_X|S)$, then
\begin{align}
-H_K(M_X|S) &=D_K(\tilde{\rho}_{M_X S}|| \id \ot \sg_S ) \notag\\
&\geq D_K(\tilde{\rho}_{M_X S}|| V_XV_X\ad (\id \ot \sg_S) V_XV_X\ad ) \notag\\
&=D_K(\tilde{\rho}_{M_X S}|| \sum_j \dya{j}\ot X_j\sg_SX_j)\notag\\
&\geq \Dl_K^{\overleftrightarrow{M_X|S}}. \notag
\end{align}
We used \eqref{eqn28} in the second line, and the last inequality notes that $\sum_j \dya{j}\ot X_j\sg_SX_j \in \textsf{CC}$. (Expand the $X_j\sg_SX_j $ blocks in their eigenbasis to verify this.) But \eqref{eqn29} gives an inequality in the reverse direction, so the entire hierarchy in \eqref{eqn29} must collapse onto the same value.
\end{proof}

Theorem~\ref{thm2} applies to all the relative entropies listed in \eqref{eqn18} through \eqref{eqn21}. For example, in the case of von Neumann relative entropy, the quantities in \eqref{eqn30}, from left to right, are the coherent information \cite{NieChu00}, the relative entropy of entanglement \cite{VedrPlen1998}, the one-way information deficit \cite{HorEtAl05}, and the relative entropy of quantumness \cite{HorEtAl05}. In the next subsection, we further extend our results for these von Neumann measures, including other measures into the hierarchy collapse.

\subsection{Collapse of von Neumann measures}\label{sct5.3}

\subsubsection{Long list of measures involved in the collapse}

Here we elaborate on the hierarchy collapse for von Neumann measures, giving a long list of the measures involved. While operational or conceptual meanings of many of the measures can be found in \cite{HHHH09, ModiEtAl2011review}, this article is more concerned with the fact that they form a hierarchy and that this hierarchy collapses for $\MQ$ states. To illustrate the dramatic effect of the collapse, we attempt to demonstrate it for as many measures as possible here, even though it comes at the expense of having to define many quantities.

In the previous subsection, we considered the coherent information $I_c$ \cite{NieChu00}, relative entropy of entanglement $\Ebb_R$ \cite{VedrPlen1998}, one-way information deficit $\Dl^{\to}$ \cite{HorEtAl05}, and relative entropy of quantumness $\Dl^{\leftrightarrow}$ \cite{HorEtAl05}, respectively defined by:
\begin{align}
\label{eqn31}
I_c^{\overrightarrow {A|B}}(\rho_{AB})&:=-H(A|B)=D(\rho_{AB}|| \id \ot \rho_B),\notag\\
\Ebb_R^{A|B}(\rho_{AB})&:=\min_{\sg_{AB}\in \textsf{Sep}}D(\rho_{AB}|| \sg_{AB}), \notag\\
\Dl^{\overrightarrow{A|B}}(\rho_{AB})&:=\min_{\sg_{AB}\in \textsf{CQ}}D(\rho_{AB}|| \sg_{AB}), \notag\\
\Dl^{\overleftrightarrow{A|B}}(\rho_{AB})&:=\min_{\sg_{AB}\in \textsf{CC}}D(\rho_{AB}|| \sg_{AB}). 
\end{align}
We note that $I_c$ appears in the expression for the quantum capacity of a quantum channel \cite{Lloyd97}, is related to the entanglement distillable through one-way hashing \cite{DevWin05}, and has been interpreted as the entanglement gained in quantum state merging \cite{HorOppWin05}.

We will also consider discord measures \cite{OllZur01, WuEtAl2009} based on a difference of quantum mutual informations $I(\rho)$, defined as follows
\begin{align}
\label{eqn32}
\dl^{\overrightarrow{A|B}}(\rho_{AB}):=\min_{\YC} \{I(\rho_{AB})-I[(\YC \ot \id)(\rho_{AB})]\},\notag\\
\dl^{\overleftrightarrow{A|B}}(\rho_{AB}):= \min_{\YC,\YC'} \{I(\rho_{AB})-I[(\YC \ot \YC')(\rho_{AB})]\}.
\end{align}
Here, we suppose that $\{Y_j\}$ and $\{Y'_k\}$ are positive operator valued measures (POVMs) on $A$ and $B$, respectively, and the quantum channels $\YC$ and $\YC'$ associated with these POVMs are defined such that 
\begin{align}
(\YC \ot \id)(\rho_{AB})&:=\sum_j \dya{j}\ot \Tr_A(Y_j \rho_{AB}), \notag \\
(\YC \ot \YC')(\rho_{AB})&:=\sum_{j,k} \Tr [(Y_j\ot Y'_k) \rho_{AB}] \dya{j}\ot \dya{k},\notag 
\end{align}
with $\{\ket{j}\}$ being the standard (orthonormal) basis.

Now we define regularized versions of these measures:
\begin{align}
\label{eqn34}
I_{c,\infty}^{\overrightarrow {A|B}}(\rho_{AB})&:=\lim_{N\to \infty} (1/N) I_c^{\overrightarrow{A^{\ot N}|B^{\ot N}}}(\rho_{AB}^{\ot N}), \notag\\
\Ebb_{R,\infty}^{A|B}(\rho_{AB})&:= \lim_{N\to \infty}(1/N)E_R^{A^{\ot N}|B^{\ot N}}(\rho_{AB}^{\ot N}), \notag\\
\Dl_{\infty}^{\overrightarrow{A|B}}(\rho_{AB})&:= \lim_{N\to \infty}(1/N) \Dl^{\overrightarrow{A^{\ot N}|B^{\ot N}}}(\rho_{AB}^{\ot N}), \notag\\
\Dl_{\infty}^{\overleftrightarrow{A|B}}(\rho_{AB})&:= \lim_{N\to \infty}(1/N) \Dl^{\overleftrightarrow{A^{\ot N}|B^{\ot N}}}(\rho_{AB}^{\ot N}), \notag\\
\dl_{\infty}^{\overrightarrow{A|B}}(\rho_{AB})&:= \lim_{N\to \infty}(1/N) \dl^{\overrightarrow{A^{\ot N}|B^{\ot N}}}(\rho_{AB}^{\ot N}),\notag\\
\dl_{\infty}^{\overleftrightarrow{A|B}}(\rho_{AB})&:= \lim_{N\to \infty}(1/N) \dl^{\overleftrightarrow{A^{\ot N}|B^{\ot N}}}(\rho_{AB}^{\ot N}) .
\end{align}
From the additivity of the von Neumann relative entropy, we have
$$I_{c,\infty}^{\overrightarrow {A|B}}=I_c^{\overrightarrow {A|B}},$$
and it was shown in \cite{Devetak07} and discussed in \cite{HorEtAl05} that
\begin{equation}
\label{eqn35}
\dl_{\infty}^{\overrightarrow{A|B}}=\Dl_{\infty}^{\overrightarrow{A|B}}.
\end{equation}
In asymptotia, $\Ebb_{R,\infty}$ uniquely characterises the amount of entanglement in a state when all non-entangling transformations are allowed \cite{BranPlen2008}, while $\dl_{\infty}^{\to}$ has been linked to entanglement irreversibility (when dilution and distillation are respectively done by LOCC and hashing) in a tripartite scenario \cite{CorDeOFan11}.

In what follows, we also consider the distillable entanglement $\Ebb_D$ and the distillable secret key $K_D$ \cite{HHHH09}, both of which are asymptotic rates for conversion of many copies of $\rho_{AB}$ into some resource, where the resource is EPR pairs and bits of secret correlation, respectively, for $\Ebb_D$ and $K_D$.

Now we consider some hierarchies satisfied by the above measures. As mentioned, the basic structure for these hierarchies is that coherent information lower bounds entanglement which lower bounds one-way discord which lower bounds two-way discord, and indeed \eqref{eqn36}--\eqref{eqn38} below each have this form. Equation~\eqref{eqn36} involves discord based on relative entropy whereas \eqref{eqn37} involves discord based on a difference of mutual informations, and Eq.~\eqref{eqn38} involve regularised versions of these measures. Thus, individually, each equation in the following Lemma, proved in Appendix~\ref{app4}, can be regarded as a quantum correlation hierarchy. 
\begin{lemma}
\label{thm3}
For any bipartite state $\rho_{AB}$,
\begin{align}
\label{eqn36}&I_c^{\overrightarrow {A|B}}\leq \Ebb_{D}^{A|B}\leq K_{D}^{A|B} \leq \Ebb_{R}^{A|B}\leq \Dl^{\overrightarrow{A|B}}, \Dl^{\overrightarrow{B|A}}\leq \Dl^{\overleftrightarrow{A|B}},\\
\label{eqn37}&I_c^{\overrightarrow {A|B}}\leq \Ebb_{D}^{A|B}\leq K_{D}^{A|B}\leq \dl^{\overrightarrow{A|B}}, \dl^{\overrightarrow{B|A}} \leq \dl^{\overleftrightarrow{A|B}} \leq \Dl^{\overleftrightarrow{A|B}},\\
\label{eqn38}&I_c^{\overrightarrow {A|B}} \leq \Ebb_{R,\infty}^{A|B}  \leq \dl_{\infty}^{\overrightarrow{A|B}}, \dl_{\infty}^{\overrightarrow{B|A}} \leq \dl_{\infty}^{\overleftrightarrow{A|B}} \leq \Dl_{\infty}^{\overleftrightarrow{A|B}} \leq \Dl^{\overleftrightarrow{A|B}}.
\end{align}
\end{lemma}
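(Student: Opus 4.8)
The plan is to split each of the three chains in \eqref{eqn36}--\eqref{eqn38} into its individual pairwise inequalities and to dispatch them by four routes. Route one: several links are \emph{already} supplied by Lemma~\ref{thm1}. Taking $D_K=D$ (von Neumann) and noting that the maximum in $H_K$ is attained at $\sg_B=\rho_B$, so that $-H(A|B)=I_c^{\overrightarrow{A|B}}$, the chain \eqref{eqn29} becomes $I_c^{\overrightarrow{A|B}}\le\Ebb_R^{A|B}\le\Dl^{\overrightarrow{A|B}},\Dl^{\overrightarrow{B|A}}\le\Dl^{\overleftrightarrow{A|B}}$, which is exactly the right-hand portion of \eqref{eqn36}. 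So for \eqref{eqn36} it remains only to insert $\Ebb_D$ and $K_D$ between $I_c$ and $\Ebb_R$, i.e.\ to prove $I_c^{\overrightarrow{A|B}}\le\Ebb_D^{A|B}\le K_D^{A|B}\le\Ebb_R^{A|B}$.

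Route two handles the mutual-information discords by the same ``nested sets'' logic. For $\dl^{\overrightarrow{A|B}}\le\dl^{\overleftrightarrow{A|B}}$ I would note that $(\YC\ot\YC')(\rho_{AB})$ is obtained from $(\YC\ot\id)(\rho_{AB})$ by the further local channel $\id\ot\YC'$, under which mutual information cannot increase, so the bracket minimized in \eqref{eqn32} only grows when the second measurement is added. For $\dl^{\overleftrightarrow{A|B}}\le\Dl^{\overleftrightarrow{A|B}}$ I would restrict the $\dl^{\overleftrightarrow{}}$ minimization to rank-one projective measurements on both sides and observe that for the corresponding fully dephased state $\sg_{\mathrm{cl}}\in\textsf{CC}$ one has $I(\rho_{AB})-I(\sg_{\mathrm{cl}})\le S(\sg_{\mathrm{cl}})-S(\rho_{AB})=D(\rho_{AB}\|\sg_{\mathrm{cl}})$, because dephasing does not decrease the local entropies; minimizing over product bases reproduces $\Dl^{\overleftrightarrow{A|B}}$. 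The identical argument (dephasing only the $A$ side) gives $\dl^{\overrightarrow{A|B}}\le\Dl^{\overrightarrow{A|B}}$.

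Route three supplies the genuinely operational links from the literature: $I_c^{\overrightarrow{A|B}}\le\Ebb_D^{A|B}$ is the hashing inequality \cite{DevWin05}; $\Ebb_D^{A|B}\le K_D^{A|B}$ because measuring the distilled ebits in a common basis turns them into secret bits \cite{HHHH09}; $K_D^{A|B}\le\Ebb_R^{A|B}$ because the relative entropy of entanglement upper-bounds the distillable key \cite{HHHH09}; and $K_D^{A|B}\le\dl^{\overrightarrow{A|B}},\dl^{\overrightarrow{B|A}}$ because the one-way discord upper-bounds the distillable private/common randomness \cite{CorDeOFan11}. Together with routes one and two this closes \eqref{eqn36} and \eqref{eqn37}. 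Route four is the regularized chain \eqref{eqn38}: apply all the single-copy inequalities just obtained (and Lemma~\ref{thm1}) to $\rho_{AB}^{\ot N}$, divide by $N$, and let $N\to\infty$, the limits existing by Fekete's subadditivity lemma; additivity of the conditional entropy gives $I_{c,\infty}^{\overrightarrow{A|B}}=I_c^{\overrightarrow{A|B}}$, so $I_c^{\overrightarrow{A|B}}\le\Ebb_{R,\infty}^{A|B}$ follows from $I_c\le\Ebb_R$; $\Ebb_R\le\Dl^{\overrightarrow{A|B}}$ regularizes to $\Ebb_{R,\infty}^{A|B}\le\Dl_\infty^{\overrightarrow{A|B}}=\dl_\infty^{\overrightarrow{A|B}}$ via \eqref{eqn35}; the bounds $\dl^{\overrightarrow{}}\le\dl^{\overleftrightarrow{}}\le\Dl^{\overleftrightarrow{}}$ survive regularization; and feeding $\sg^{\ot N}$ into the $\Dl^{\overleftrightarrow{}}$ minimization shows $\Dl^{\overleftrightarrow{}}$ is subadditive, which gives $\Dl_\infty^{\overleftrightarrow{A|B}}\le\Dl^{\overleftrightarrow{A|B}}$.

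The hard part will be route three. The operational upper bounds $K_D\le\Ebb_R$ and especially $K_D\le\dl^{\overrightarrow{}}$ (and its $A\leftrightarrow B$ mirror) are the deep ingredients; the work is to locate or re-derive them in the exact single-copy, non-regularized form used in \eqref{eqn36}--\eqref{eqn37} --- for instance routing $K_D\le\dl^{\overrightarrow{}}$ through the state-merging / Koashi--Winter picture of discord rather than through the easier but weaker bound $K_D\le\Ebb_R\le\Dl^{\overrightarrow{}}$. Everything else --- the Lemma~\ref{thm1} specializations, the data-processing arguments for the mutual-information discords, and the limit bookkeeping in \eqref{eqn38} --- is routine.
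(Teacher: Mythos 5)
Your decomposition is essentially the paper's: Lemma~\ref{thm1} with $D_K=D$ for the relative-entropy links, the hashing inequality for $I_c\le\Ebb_D$, e-bits-to-secret-bits for $\Ebb_D\le K_D$, data processing of mutual information for $\dl^{\overrightarrow{A|B}}\le\dl^{\overleftrightarrow{A|B}}$, the dephasing/entropy comparison for $\dl^{\overleftrightarrow{A|B}}\le\Dl^{\overleftrightarrow{A|B}}$, and regularization of everything for \eqref{eqn38}. All of that is fine and matches the paper's Appendix~\ref{app4}.

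The one genuine gap is the link you yourself flag as the hard part: $K_D^{A|B}\le\dl^{\overrightarrow{A|B}},\dl^{\overrightarrow{B|A}}$ in \eqref{eqn37}. You propose to re-derive it through a Koashi--Winter/state-merging picture, which you do not carry out and which is not a standard citable fact in the single-copy form you need; the attribution to \cite{CorDeOFan11} of ``one-way discord upper-bounds the distillable key'' is not something you can lean on without proof. What you have missed is that the pieces you already wrote down in routes three and four close this gap with no new ideas: $K_D$ is an asymptotic rate, so $K_D\le\Ebb_R$ applied to $\rho_{AB}^{\ot N}$ gives $K_D\le\Ebb_{R,\infty}$ (this is the result of \cite{HHHOprl05} that the paper cites directly); regularizing $\Ebb_R\le\Dl^{\overrightarrow{A|B}}$ gives $\Ebb_{R,\infty}\le\Dl_{\infty}^{\overrightarrow{A|B}}$; the Devetak equality \eqref{eqn35} gives $\Dl_{\infty}^{\overrightarrow{A|B}}=\dl_{\infty}^{\overrightarrow{A|B}}$; and subadditivity of $\dl^{\overrightarrow{A|B}}$ (product measurements plus additivity of mutual information) gives $\dl_{\infty}^{\overrightarrow{A|B}}\le\dl^{\overrightarrow{A|B}}$. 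Chaining these yields $K_D\le\dl^{\overrightarrow{A|B}}$, and the mirrored chain through $\Dl^{\overrightarrow{B|A}}$ yields the $B|A$ version. This detour through the regularized quantities is exactly how the paper proves \eqref{eqn37}; so the bound you dismissed as ``easier but weaker'' does, in its regularized incarnation, deliver the single-copy inequality you wanted. Incidentally, the same observation also discharges $K_D\le\Ebb_R$ in \eqref{eqn36} via $K_D\le\Ebb_{R,\infty}\le\Ebb_R$, so you do not need a separate literature citation for that link either.
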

We now see that each of these hierarchies \eqref{eqn36}--\eqref{eqn38} collapses in the special case where the state is $\MQ$. In fact, the hierarchies themselves are useful in proving the collapse. In Theorem~\ref{thm2}, we showed that, if $\tilde{\rho}_{M_XS}\in \textsf{MQ}$, then
$$I_c^{\overrightarrow{M_X|S}}= \Dl^{\overleftrightarrow{M_X|S}},$$
so combining this with Lemma~\ref{thm3} immediately implies the following result.
\begin{theorem}
\label{thm4}
For any state in $\textsf{MQ}$, i.e., any premeasurement state $\tilde{\rho}_{M_XS}= V_X\rho_S V_X\ad$,
\begin{align}
\label{eqn39}
I_c^{\overrightarrow{M_X|S}}&=  \Ebb_{D}^{M_X|S}=K_{D}^{M_X|S}=\Ebb_{R,\infty}^{M_X|S}=\Ebb_{R}^{M_X|S}\notag\\
&= \dl_{\infty}^{\overrightarrow{M_X|S}}= \dl_{\infty}^{\overrightarrow{S|M_X}}=\dl^{\overrightarrow{M_X|S}}= \dl^{\overrightarrow{S|M_X}} \notag\\
&= \Dl^{\overrightarrow{M_X|S}}= \Dl^{\overrightarrow{S|M_X}}=\dl_{\infty}^{\overleftrightarrow{M_X|S}}=\Dl_{\infty}^{\overleftrightarrow{M_X|S}} \notag\\
&=\dl^{\overleftrightarrow{M_X|S}}=\Dl^{\overleftrightarrow{M_X|S}}.
\end{align}
\end{theorem}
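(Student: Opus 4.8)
The plan is to reduce everything to a one-step squeeze between two quantities that Theorem~\ref{thm2} has already equated. First I would specialize that theorem to the von Neumann relative entropy $D$ of \eqref{eqn18}, which plainly satisfies properties~\ref{a} and~\ref{b}. For this choice the conditional entropy \eqref{eqn26} becomes the ordinary conditional von Neumann entropy (the optimal $\sg_S$ is $\rho_S$ by nonnegativity of relative entropy), so by \eqref{eqn31} one has $-H_K(M_X|S)=I_c^{\overrightarrow{M_X|S}}$, $\Ebb_K^{M_X|S}=\Ebb_R^{M_X|S}$, $\Dl_K^{\overrightarrow{M_X|S}}=\Dl^{\overrightarrow{M_X|S}}$, and $\Dl_K^{\overleftrightarrow{M_X|S}}=\Dl^{\overleftrightarrow{M_X|S}}$. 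Applying Theorem~\ref{thm2} to $\tilde\rho_{M_XS}=V_X\rho_S V_X\ad\in\textsf{MQ}$ then yields
\[
I_c^{\overrightarrow{M_X|S}}=\Ebb_R^{M_X|S}=\Dl^{\overrightarrow{M_X|S}}=\Dl^{\overrightarrow{S|M_X}}=\Dl^{\overleftrightarrow{M_X|S}},
\]
and in particular the extreme members $I_c^{\overrightarrow{M_X|S}}$ and $\Dl^{\overleftrightarrow{M_X|S}}$ coincide.

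Next I would invoke Lemma~\ref{thm3}. Each of the three chains \eqref{eqn36}, \eqref{eqn37}, \eqref{eqn38} runs, when read with $A=M_X$ and $B=S$, from the left endpoint $I_c^{\overrightarrow{M_X|S}}$ up to the right endpoint $\Dl^{\overleftrightarrow{M_X|S}}$ (for \eqref{eqn37} through the step $\dl^{\overleftrightarrow{M_X|S}}\leq\Dl^{\overleftrightarrow{M_X|S}}$, and for \eqref{eqn38} through $\dl_\infty^{\overleftrightarrow{M_X|S}}\leq\Dl_\infty^{\overleftrightarrow{M_X|S}}\leq\Dl^{\overleftrightarrow{M_X|S}}$). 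Since those endpoints are now known to be equal, every quantity sitting between them is squeezed to the common value. Reading the chains off directly, this pins down $\Ebb_D^{M_X|S}$, $K_D^{M_X|S}$, $\Ebb_R^{M_X|S}$ and the pair $\Dl^{\overrightarrow{M_X|S}},\Dl^{\overrightarrow{S|M_X}}$ from \eqref{eqn36}; the pair $\dl^{\overrightarrow{M_X|S}},\dl^{\overrightarrow{S|M_X}}$ and $\dl^{\overleftrightarrow{M_X|S}}$ from \eqref{eqn37}; and $\Ebb_{R,\infty}^{M_X|S}$ together with $\dl_\infty^{\overrightarrow{M_X|S}},\dl_\infty^{\overrightarrow{S|M_X}},\dl_\infty^{\overleftrightarrow{M_X|S}},\Dl_\infty^{\overleftrightarrow{M_X|S}}$ from \eqref{eqn38}. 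The only remaining identities in \eqref{eqn39} are bookkeeping: $I_{c,\infty}^{\overrightarrow{M_X|S}}=I_c^{\overrightarrow{M_X|S}}$ by additivity of $D$, and $\dl_\infty^{\overrightarrow{M_X|S}}=\Dl_\infty^{\overrightarrow{M_X|S}}$ by \eqref{eqn35}.

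There is essentially no mathematical obstacle left: all the analytic content has been absorbed into Lemma~\ref{thm3} (proved in Appendix~\ref{app4}) and into the data-processing plus null-space argument of Theorem~\ref{thm2}. The only thing that needs care is \emph{completeness} — verifying that each of the fifteen measures appearing in \eqref{eqn39} genuinely lies on one of the three chains between the two equated endpoints, in particular that the ``reversed'' directions $\Dl^{\overrightarrow{S|M_X}}$, $\dl^{\overrightarrow{S|M_X}}$, $\dl_\infty^{\overrightarrow{S|M_X}}$ are covered (they are: each chain lists both orientations of the one-way discord just below its top), and that the chains terminating at $\dl^{\overleftrightarrow{M_X|S}}$ or $\Dl_\infty^{\overleftrightarrow{M_X|S}}$ are brought under the $\Dl^{\overleftrightarrow{M_X|S}}$ umbrella via the inequalities already recorded in \eqref{eqn37}--\eqref{eqn38}. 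Once that list-checking is done, the theorem follows immediately.
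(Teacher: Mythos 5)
Your proposal is correct and follows exactly the paper's own route: specialize Theorem~\ref{thm2} to the von Neumann relative entropy to get $I_c^{\overrightarrow{M_X|S}}=\Dl^{\overleftrightarrow{M_X|S}}$, then squeeze every intermediate quantity using the three chains of Lemma~\ref{thm3}. The extra list-checking you do at the end is a useful sanity check but adds nothing beyond what the paper's one-line argument already implies.
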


While the list in Theorem~\ref{thm4} is very long, we note that not all measures participate in the collapse for $\textsf{MQ}$. For example, $I_c^{\overrightarrow{S|M_X}}=-H(S|M_X)$ need not be equal to the other correlation measures appearing above. One can see this as follows. For the state $\tilde{\rho}_{M_XS}$ which is purified by $E$ to the state $\tilde{\rho}_{M_XSE}$, we have:
\begin{equation}
H(S|M_X)-H(M_X|S)=H(E|M_X)=\sum_j p_j H(\rho_{E,j}),\notag
\end{equation}
where $\tilde{\rho}_{M_XE}=\sum_j p_j \dya{j}\ot \rho_{E,j}\in \textsf{CQ}$. Hence if the $\rho_{E,j}$ are non-pure, then $[-H(S|M_X)]$ will not collapse onto the other measures. Also, see \cite{CorDeOFan11} for a discussion of entanglement of formation and entanglement cost.

\subsubsection{Is the collapse unique to $\textsf{MQ}$?}

Here we give a simple argument that $\textsf{MQ} $ is the \textit{only} set of bipartite states for which $I_c^{\overrightarrow{A|B}}= \Dl^{\overleftrightarrow{A|B}} $, and hence the only set that collapses the \textit{full} hierarchy as in Theorem~\ref{thm4}. Let $C$ purify $\rho_{AB}$, then it is straightforward to show that:
\begin{align}
I_c^{\overrightarrow{A|B}}= \dl^{\overrightarrow{A|B}}-\dl^{\overrightarrow{A|C}},
\end{align}
by noting that the optimization in $\dl^{\to}$ is achieved by a rank-one POVM, and in fact the same rank-one POVM achieves the optimization in both $\dl^{\overrightarrow{A|B}}$ and $\dl^{\overrightarrow{A|C}}$ \cite{ColesEtAl}.  From Ref.~\cite{DattaArxiv2010} and the definition of $\MQ$, we have:
$$\dl^{\overrightarrow{A|C}}=0 \Leftrightarrow \rho_{AC}\in \textsf{CQ} \Leftrightarrow \rho_{AB}\in \textsf{MQ} .$$
Therefore, for any $\rho_{AB}$ that is \textit{not} in $\textsf{MQ}$, we have $\dl^{\overrightarrow{A|C}}>0$ (for all purifications $\rho_{ABC}$ of $\rho_{AB}$) and 
\begin{align}
\label{eqn40}
I_c^{\overrightarrow{A|B}} < \dl^{\overrightarrow{A|B}} \leq \Dl^{\overleftrightarrow{A|B}},
\end{align}
showing that $\textsf{MQ}$ is the only set of states for which $I_c^{\overrightarrow{A|B}} = \dl^{\overrightarrow{A|B}}$, and hence the only set for which $I_c^{\overrightarrow{A|B}}=\Dl^{\overleftrightarrow{A|B}}$.

We wish to emphasize that other states besides $\textsf{MQ}$ states may collapse ``part" of the hierarchy. For example, consider a tensor product of maximally mixed states, say, of the form $\rho_{AB}= ( \id / d) \ot ( \id / d)$. Clearly all measures of entanglement and discord are zero for this state. But the coherent information is $I_c^{\overrightarrow{A|B}} = - \log d$, and this state is not an $\textsf{MQ}$ state.
 
Likewise, as mentioned in Section~\ref{sct22b}, a superset of $\MQ$, denoted $\mQ$, partially collapses the hierarchy, as shown in \cite{CorDeOFan11}. Specifically, Ref.~\cite{CorDeOFan11} showed that
\begin{align}
\label{eqn41}
I_c^{\overrightarrow{A|B}}= \Ebb_{D}^{A|B}=K_{D}^{A|B}=\Ebb_{R,\infty}^{A|B}= \dl^{\overrightarrow{B|A}}_{\infty} =\dl^{\overrightarrow{B|A}}
\end{align}
for $\rho_{AB}\in \textsf{mQ}$. However, \eqref{eqn40} indicates that, for those states in $\mQ $ that are not in $\MQ $, there is a gap between the ``collapsed measures" appearing in \eqref{eqn41} and a particular one-way discord, $\dl^{\overrightarrow{A|B}}$.

\subsection{Collapse of smooth measures}\label{sct5.4}

While there are various correlation hierarchies that we could investigate, we have been focusing on those that involve a conditional entropy as one of the measures. This is because we will ultimately be interested in using the hierarchy collapse to reinterpret entropic uncertainty relations (EURs), which are often formulated using conditional entropies. One such EUR has been formulated for smooth entropies \cite{TomRen2010}, and so we will consider the correlation hierarchy related to smooth entropies in this subsection, again with the intention of giving a reinterpretation of this EUR. 

Smooth entropies pose a dilemma in that they are highly powerful tools relevant to non-asymptotic information-processing tasks \cite{RennerThesis05URL, TomamichelThesis2012}, yet they are quite technical. We therefore give only the main results in this section, and relegate all proofs to (a lengthy) Appendix~\ref{app44}.

We start with the min- and max-entropies \cite{KonRenSch09}, 
\begin{align}
&H_{\min}(A|B)_{\rho}:= \max_{\sg_B}[-D_{\max}(\rho_{AB}|| \id \ot \sg_B)], \notag\\
&H_{\max}(A|B)_{\rho}:= \max_{\sg_B}[-D_{\fid}(\rho_{AB}|| \id \ot \sg_B)], \notag
\end{align}
where the maximization is over all normalized density operators $\sg_B$, and $D_{\max}$ and $D_{\fid}$ were defined in \eqref{eqn20} and \eqref{eqn21}.

To define the smooth entropy of $\rho$, we optimize the entropy over a ball of radius $\ep$ centered around $\rho$ in the space of subnormalized positive operators, denoted $\BC^{\ep}(\rho)$. We use the purified distance to define this ball \cite{TomColRen10}, again with all the details in Appendix~\ref{app44}. Then, the smooth min- and max-entropies are defined as \cite{TomamichelThesis2012, TomColRen10}:
\begin{align}
&H^{\ep}_{\min}(A|B)_{\rho}:=\max_{\sg_{AB} \in \BC^{\ep}(\rho)} H_{\min}(A|B)_{\sg}, \notag\\
&H^{\ep}_{\max}(A|B)_{\rho}:= \min_{\sg_{AB} \in \BC^{\ep}(\rho)} H_{\max}(A|B)_{\sg}. \notag
\end{align}
Note that a maximisation (minimisation) is performed for the smooth min (max) entropy; in this form these are the relevant quantities for characterising the operational tasks involved in quantum key distribution \cite{RennerThesis05URL, TomamichelThesis2012}. 

To obtain results that are mathematically analogous to Lemma~\ref{thm1} and Theorem~\ref{thm2}, we will need to define smooth measures of entanglement and discord. We note that smooth measures of entanglement were considered, e.g., in \cite{BusDatPRL2011, BranDatt2011}. Consider first the unsmooth measures of entanglement and discord (one-way and two-way) based on the max relative entropy, respectively given by:
\begin{align}
\Ebb_{\max}^{A|B}(\rho_{AB}):=\min_{\sg_{AB}\in \textsf{Sep}}D_{\max}(\rho_{AB}|| \sg_{AB}),\notag\\
\Dl_{\max}^{\overrightarrow{A|B}}(\rho_{AB}):=\min_{\sg_{AB}\in \textsf{CQ}}D_{\max}(\rho_{AB}|| \sg_{AB}),\notag\\
\Dl_{\max}^{\overleftrightarrow{A|B}}(\rho_{AB}):=\min_{\sg_{AB}\in \textsf{CC}}D_{\max}(\rho_{AB}|| \sg_{AB}).\notag
\end{align}
and consider analogous quantities $\Ebb_{\fid}^{A|B}$, $\Dl_{\fid}^{\overrightarrow{A|B}}$, and $\Dl_{\fid}^{\overleftrightarrow{A|B}}$ defined similarly but with $D_{\max}$ replaced by $D_{\fid}$. We note that $\Ebb_{\fid}$ and $\Ebb_{\max}$ are non-increasing under LOCC due to Property~\ref{a}, and $\Ebb_{\max}$ was characterised in \cite{Datta09}.

We now define smooth versions of these quantum correlation measures, as follows:
\begin{align}
\label{eqn42}
_{\ep}\Ebb_{\max}^{A|B}(\rho_{AB}):= \min_{\sg_{AB} \in \BC^{\ep}(\rho)} \Ebb_{\max}^{A|B}(\sg_{AB}),\notag\\
_{\ep}\Dl_{\max}^{\overrightarrow{A|B}}(\rho_{AB}):= \min_{\sg_{AB} \in \BC^{\ep}(\rho)} \Dl_{\max}^{\overrightarrow{A|B}}(\sg_{AB}),\notag\\
_{\ep}\Dl_{\max}^{\overleftrightarrow{A|B}}(\rho_{AB}):= \min_{\sg_{AB} \in \BC^{\ep}(\rho)} \Dl_{\max}^{\overleftrightarrow{A|B}}(\sg_{AB}),
\end{align}
and 
\begin{align}
\label{eqn43}
_{\ep}\Ebb_{\fid}^{A|B}(\rho_{AB}):= \max_{\sg_{AB} \in \BC^{\ep}(\rho)} \Ebb_{\fid}^{A|B}(\sg_{AB}),\notag\\
_{\ep}\Dl_{\fid}^{\overrightarrow{A|B}}(\rho_{AB}):= \max_{\sg_{AB} \in \BC^{\ep}(\rho)} \Dl_{\fid}^{\overrightarrow{A|B}}(\sg_{AB}),\notag\\
_{\ep}\Dl_{\fid}^{\overleftrightarrow{A|B}}(\rho_{AB}):= \max_{\sg_{AB} \in \BC^{\ep}(\rho)} \Dl_{\fid}^{\overleftrightarrow{A|B}}(\sg_{AB}).
\end{align}
A smooth max entanglement defined similarly to the one in \eqref{eqn42} was previously given an operational meaning in terms of one-shot catalytic entanglement cost under non-entangling maps \cite{BranDatt2011}. We note that performing a minimisation in \eqref{eqn42} and a maximisation in \eqref{eqn43} appears to be necessary to obtain the generalisation of our results to smooth measures. 

We now state an analog of Lemma~\ref{thm1} for smooth measures, where Eqs.~\eqref{eqn44} and \eqref{eqn45} below can be viewed as quantum correlation hierarchies involving the smooth min and max entropies, respectively.
\begin{lemma}\label{thm5}
For any bipartite state $\rho_{AB}$,
\begin{align}
\label{eqn44}&-H^{\ep}_{\min}(A|B) \leq  {}_{\ep}\Ebb^{A|B}_{\max} \leq  {}_{\ep}\Dl^{\overrightarrow{A|B}}_{\max}, {}_{\ep}\Dl^{\overrightarrow{B|A}}_{\max} \leq  {}_{\ep}\Dl^{\overleftrightarrow{A|B}}_{\max}, \\
\label{eqn45}&-H^{\ep}_{\max}(A|B) \leq  {}_{\ep}\Ebb^{A|B}_{\fid} \leq  {}_{\ep}\Dl^{\overrightarrow{A|B}}_{\fid},  {}_{\ep}\Dl^{\overrightarrow{B|A}}_{\fid}  \leq  {}_{\ep}\Dl^{\overleftrightarrow{A|B}}_{\fid}.
\end{align}
\end{lemma}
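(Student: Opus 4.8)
The plan is to mimic the proof of Lemma~\ref{thm1}, but to be careful about the fact that the $\ep$-smoothing is a \emph{minimisation} for the max-relative-entropy quantities (Eq.~\eqref{eqn42}) and a \emph{maximisation} for the fidelity-relative-entropy quantities (Eq.~\eqref{eqn43}). Both $D_{\max}$ and $D_{\fid}$ satisfy properties~\ref{a} and~\ref{b}, so the unsmooth chain
$$
-H_{\max}(A|B) \leq \Ebb_{\max}^{A|B} \leq \Dl_{\max}^{\overrightarrow{A|B}}, \Dl_{\max}^{\overrightarrow{B|A}} \leq \Dl_{\max}^{\overleftrightarrow{A|B}}
$$
(where $-H_{\max}$ here abbreviates $-D_{\max}(\cdot||\id\ot\sg_B)$ maximised over $\sg_B$, i.e.\ $H_{\min}$) is exactly Lemma~\ref{thm1} applied to $D_{\max}$; likewise for $D_{\fid}$. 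So the only work is to push the smoothing through these inequalities.

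First I would handle the three right-hand inequalities in each line, which come essentially for free. For \eqref{eqn44}, the inclusions $\textsf{CC}\subset\textsf{CQ}\subset\textsf{Sep}$ give $\Ebb_{\max}(\sg_{AB})\leq\Dl_{\max}^{\overrightarrow{}}(\sg_{AB})\leq\Dl_{\max}^{\overleftrightarrow{}}(\sg_{AB})$ pointwise for every $\sg_{AB}$, and since all three smoothed quantities are obtained by the \emph{same} operation (minimisation over the common ball $\BC^{\ep}(\rho)$), the ordering is preserved: $\min$ of a smaller function is $\leq\min$ of a larger function. The two one-way terms $\Dl_{\max}^{\overrightarrow{A|B}}$ and $\Dl_{\max}^{\overrightarrow{B|A}}$ each sit between $\Ebb_{\max}$ and $\Dl_{\max}^{\overleftrightarrow{}}$ pointwise, so they too survive. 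For \eqref{eqn45} the same pointwise ordering holds, and now all smoothed quantities use a \emph{maximisation} over $\BC^{\ep}(\rho)$, which again preserves the ordering. The leftmost inequality $-H^{\ep}_{\min}(A|B)\leq {}_{\ep}\Ebb_{\max}^{A|B}$ (resp.\ $-H^{\ep}_{\max}\leq{}_{\ep}\Ebb_{\fid}$) is the delicate one: here the smoothing conventions on the two sides differ in spirit — the smooth conditional entropy optimises $H$ directly over the ball, whereas ${}_{\ep}\Ebb_{\max}$ optimises $\Ebb_{\max}$ over the ball. I would argue as follows for \eqref{eqn44}: let $\sg_{AB}\in\BC^{\ep}(\rho)$ achieve the minimum defining ${}_{\ep}\Ebb_{\max}^{A|B}(\rho)$. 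By the unsmooth Lemma~\ref{thm1} for $D_{\max}$, $\Ebb_{\max}^{A|B}(\sg)\geq -H_{\min}(A|B)_{\sg}$. But $H_{\min}^{\ep}(A|B)_{\rho}=\max_{\tau\in\BC^{\ep}(\rho)}H_{\min}(A|B)_{\tau}\geq H_{\min}(A|B)_{\sg}$, hence $-H_{\min}(A|B)_{\sg}\geq -H_{\min}^{\ep}(A|B)_{\rho}$, and chaining gives ${}_{\ep}\Ebb_{\max}^{A|B}(\rho)=\Ebb_{\max}^{A|B}(\sg)\geq -H^{\ep}_{\min}(A|B)_{\rho}$. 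For \eqref{eqn45}, with the maximisation convention: let $\sg_{AB}\in\BC^{\ep}(\rho)$ achieve the \emph{minimum} defining $H^{\ep}_{\max}(A|B)_{\rho}$; then $-H^{\ep}_{\max}(A|B)_{\rho}=-H_{\max}(A|B)_{\sg}\leq \Ebb_{\fid}^{A|B}(\sg)\leq \max_{\tau\in\BC^{\ep}(\rho)}\Ebb_{\fid}^{A|B}(\tau)={}_{\ep}\Ebb_{\fid}^{A|B}(\rho)$, where the middle step is Lemma~\ref{thm1} for $D_{\fid}$ and the reading of $-H_{\max}$ as $\max_{\sg_B}[-D_{\fid}(\cdot||\id\ot\sg_B)]$ combined with the separable-state inequality $\id\ot\sg_B\geq\sg_{AB}$ plus \eqref{eqn27}.

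The main obstacle I anticipate is bookkeeping the directions of all the optimisations so that each inequality genuinely composes — in particular making sure that for the leftmost inequality the choice of extremal $\sg$ on one side is \emph{compatible} with the optimisation convention on the other side (as in the two displayed arguments above, where in \eqref{eqn44} I pick the optimiser of the \emph{entanglement} side and in \eqref{eqn45} I pick the optimiser of the \emph{entropy} side). One should also verify that $\BC^{\ep}(\rho)$, being defined via the purified distance on subnormalised operators, is the same ball used in all of \eqref{eqn42}, \eqref{eqn43} and in the definitions of $H^{\ep}_{\min}, H^{\ep}_{\max}$, so that ``same ball'' reasoning is legitimate; this is a matter of consistent definitions rather than a real difficulty, and the detailed verification is deferred to Appendix~\ref{app44}.
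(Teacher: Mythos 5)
Your proposal is correct and follows essentially the same route as the paper: the unsmooth hierarchy of Lemma~\ref{thm1} (instantiated for $D_{\max}$ and $D_{\fid}$) is applied pointwise to every state in $\BC^{\ep}(\rho)$, and since each line of the smooth hierarchy uses a single common optimisation direction over that ball, the ordering survives. The only step the paper makes explicit that you leave implicit is that Lemma~\ref{thm1} must first be extended to \emph{subnormalised} states (the paper's Lemma~\ref{thm11}), since $\BC^{\ep}(\rho)$ contains such states; the extension is immediate because the proof of Lemma~\ref{thm1} never uses normalisation.
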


Analogous to Theorem~\ref{thm2}, we find that the hierarchies of smooth quantum correlation measures in \eqref{eqn44} and \eqref{eqn45} collapse in the special case of premeasurement states.
\begin{theorem}\label{thm6}
For any state in $\textsf{MQ}$, i.e., any premeasurement state $\tilde{\rho}_{M_XS}= V_X\rho_S V_X\ad$,
\begin{align}
\label{eqn46} -H^{\ep}_{\min}(M_X|S) &=  {}_{\ep}\Ebb^{M_X|S}_{\max} =  {}_{\ep}\Dl^{\overrightarrow{M_X|S}}_{\max}\notag\\
&={}_{\ep}\Dl^{\overrightarrow{S|M_X}}_{\max} =  {}_{\ep}\Dl^{\overleftrightarrow{M_X|S}}_{\max},\\
\label{eqn47} -H^{\ep}_{\max}(M_X|S) & =  {}_{\ep}\Ebb^{M_X|S}_{\fid} =  {}_{\ep}\Dl^{\overrightarrow{M_X|S}}_{\fid}\notag\\
&={}_{\ep}\Dl^{\overrightarrow{S|M_X}}_{\fid} =  {}_{\ep}\Dl^{\overleftrightarrow{M_X|S}}_{\fid}.
\end{align}
\end{theorem}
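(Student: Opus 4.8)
The plan is to mirror the two-part structure used for Theorem~\ref{thm2}, but now applied to the smooth quantities, with the extra wrinkle that smoothing is a minimisation (for the min-entropy side) or a maximisation (for the max-entropy side) over the ball $\BC^{\ep}(\rho)$. For each of \eqref{eqn46} and \eqref{eqn47} the strategy is: (i) invoke Lemma~\ref{thm5} to get the inequality chain running in one direction, and (ii) close the loop by showing the outermost quantity (the negative smooth conditional entropy) is bounded by the innermost (the smooth two-way discord) in the \emph{reverse} direction, which then forces all the intermediate quantities to coincide. Since $D_{\max}$ and $D_{\fid}$ each satisfy properties~\ref{a} and~\ref{b} (this was noted just before Lemma~\ref{thm1}), the unsmooth reverse bound $-H_{\max}(M_X|S)_{\sg} \geq \Dl_{\max}^{\overleftrightarrow{M_X|S}}(\sg)$, respectively $-H_{\fid}^{\,}(\cdots) \geq \Dl_{\fid}^{\overleftrightarrow{\cdots}}$, holds for \emph{every} state $\sg$ by exactly the computation in the proof of Theorem~\ref{thm2} (apply \eqref{eqn28} with the projector $V_XV_X\ad$, then observe $\sum_j \dya{j}\ot X_j\sg_S X_j \in \textsf{CC}$)—but one must be slightly careful here, because a nearby state $\sg_{AB}\in\BC^{\ep}(\rho)$ need not itself be a premeasurement state.

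The clean way around this is to realise that the reverse bound in Theorem~\ref{thm2} only uses that $\tilde\rho$ \emph{is} of premeasurement form, whereas the smoothing optimiser ranges over arbitrary nearby states. So I would instead argue at the level of the smoothed quantities directly. First I would handle the min-entropy line \eqref{eqn46}. By Lemma~\ref{thm5}, $-H^{\ep}_{\min} \leq {}_{\ep}\Ebb_{\max} \leq {}_{\ep}\Dl^{\overrightarrow{}}_{\max}, {}_{\ep}\Dl^{\overleftrightarrow{}}_{\max}$ for the relevant orderings; it remains to show $-H^{\ep}_{\min}(M_X|S) \geq {}_{\ep}\Dl^{\overleftrightarrow{M_X|S}}_{\max}$. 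Unfolding the definitions, $-H^{\ep}_{\min}(M_X|S) = \min_{\sg\in\BC^{\ep}(\tilde\rho)} \max_{\tau_S} D_{\max}(\sg || \id\ot\tau_S)$ and ${}_{\ep}\Dl^{\overleftrightarrow{}}_{\max} = \min_{\sg\in\BC^{\ep}(\tilde\rho)} \min_{\omega\in\textsf{CC}} D_{\max}(\sg||\omega)$. For each fixed $\sg$ in the ball, I would apply the same pinching/projection argument: let $\tau_S$ achieve the inner max for $\sg$, then $D_{\max}(\sg||\id\ot\tau_S) \geq D_{\max}(\sg \,||\, V_XV_X\ad(\id\ot\tau_S)V_XV_X\ad) = D_{\max}(\sg\,||\, \sum_j\dya{j}\ot X_j\tau_S X_j)$ via \eqref{eqn28}, and the latter target is a $\textsf{CC}$ state, so this is $\geq \min_{\omega\in\textsf{CC}}D_{\max}(\sg||\omega) = \Dl_{\max}^{\overleftrightarrow{}}(\sg)$. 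Taking the min over $\sg\in\BC^{\ep}(\tilde\rho)$ on both ends gives $-H^{\ep}_{\min}(M_X|S) \geq {}_{\ep}\Dl^{\overleftrightarrow{M_X|S}}_{\max}$, completing the collapse for \eqref{eqn46}. Crucially this works because the projector $V_XV_X\ad$ is fixed by $\tilde\rho$, not by $\sg$—the pinching uses only that the \emph{centre} of the ball is a premeasurement state.

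For the max-entropy line \eqref{eqn47} the same logic applies but with the smoothing direction reversed: here $-H^{\ep}_{\max}$, ${}_{\ep}\Ebb_{\fid}$, etc.\ all involve a \emph{maximisation} over $\BC^{\ep}(\tilde\rho)$ (cf.\ \eqref{eqn43}), and $D_{\fid}$ still obeys~\ref{a},~\ref{b}, so the per-$\sg$ inequality $-H_{\fid}(M_X|S)_{\sg} \geq \Dl_{\fid}^{\overleftrightarrow{}}(\sg)$ goes through verbatim; taking $\max_{\sg\in\BC^{\ep}}$ on both sides yields $-H^{\ep}_{\max}(M_X|S) \geq {}_{\ep}\Dl^{\overleftrightarrow{M_X|S}}_{\fid}$, and Lemma~\ref{thm5} supplies the opposite chain, forcing equality throughout. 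The main obstacle I anticipate is bookkeeping around the smoothing ball: one has to check that interchanging/aligning the two layers of optimisation (smoothing vs.\ the inner optimisations defining $H$, $\Ebb$, $\Dl$) is legitimate, i.e.\ that the per-$\sg$ inequalities survive taking the outer $\min$ (resp.\ $\max$) uniformly—which they do, since each inequality holds pointwise in $\sg$—and that the direction of smoothing in \eqref{eqn42}--\eqref{eqn43} is exactly what makes both the Lemma~\ref{thm5} chain and the reverse bound compatible. The technical verification that $\BC^{\ep}$ behaves well (purified-distance ball, subnormalised operators, existence of optimisers) is where the length of Appendix~\ref{app44} will be spent, but conceptually the argument is a faithful ``smoothed'' copy of the proof of Theorem~\ref{thm2}.
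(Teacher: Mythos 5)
There is a genuine gap, and it sits exactly at the point you flagged and then waved away. Your reverse bound rests on the claim that, for \emph{every} $\sg$ in the ball $\BC^{\ep}(\rhot)$, one has $D_{\max}(\sg\,\|\,\id\ot\tau_S)\geq D_{\max}(\sg\,\|\,V_XV_X\ad(\id\ot\tau_S)V_XV_X\ad)$ via \eqref{eqn28}, "because the projector $V_XV_X\ad$ is fixed by $\rhot$, not by $\sg$." But \eqref{eqn28} requires the projector to contain the support of the \emph{first} argument of the relative entropy, which in your per-$\sg$ application is $\sg$ itself, not the centre of the ball. A generic $\sg\in\BC^{\ep}(\rhot)$ is not supported in $V_XV_X\ad$ and is not a premeasurement state, so the pointwise inequality $-H_{\min}(M_X|S)_{\sg}\geq \Dl^{\overleftrightarrow{M_X|S}}_{\max}(\sg)$ is not merely unproven — it is false in general, as the paper's own Section~\ref{sct5.3} shows (non-$\MQ$ states have a strict gap between the conditional entropy and the discord). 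Since your entire argument is "prove the reverse inequality pointwise in $\sg$, then take the outer $\min$ (resp.\ $\max$)," the collapse does not follow.

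The paper's actual proof repairs this differently for the two halves, and the asymmetry matters. For \eqref{eqn46} it takes the \emph{optimizer} $\sgbar$ of $H^{\ep}_{\min}$ and projects it, $\sgbar\mapsto\Pi\sgbar\Pi$ with $\Pi=V_XV_X\ad$, using the two-sided pinching bound \eqref{eqn81} (valid for \emph{any} projector, unlike \eqref{eqn28}); the projected state stays in the ball (purified distance is non-increasing under trace-non-increasing maps) and is now a subnormalized $\MQ$ state, so the unsmoothed collapse (extended to subnormalized states) applies to it, and because ${}_{\ep}\Dl_{\max}$ is a minimum over the ball a single good witness suffices. For \eqref{eqn47} this trick is not enough: ${}_{\ep}\Dl_{\fid}$ is a \emph{maximum} over the ball, so one must control every $\sg$ in it, and the paper does this via the special identity $D_{\fid}(P\|Q)=D_{\fid}(\Pi_Q P\Pi_Q\|Q)$ from \eqref{eqn80}, which lets the maximization over the full ball be replaced by one over the $\Pi$-restricted ball at no cost once the $\textsf{CC}$ states are confined to $\Pi$. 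Neither of these two ingredients — projecting the state rather than only the second argument, and the $D_{\fid}$ support-invariance — appears in your proposal, and the second in particular is not a bookkeeping detail but the step that makes the max-entropy half work at all.
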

We note that these smooth measures reduce to the corresponding non-smooth measures for $\ep = 0$. Hence, we had already proved Lemma~\ref{thm5} and Theorem~\ref{thm6} for the special case of $\ep = 0$ in Section~\ref{sct5.2}, but the smooth versions of these results, valid for any $\ep \geq 0$, are a significant generalization. While superficially it seems simple to add an $\ep$ as a superscript or subscript, let the reader beware that the proof of this result for smooth measures is non-trivial.

\section{Connection to uncertainty}\label{sct6}

We have investigated several quantum correlation hierarchies, and in each case we found that premeasurement states collapse the hierarchy. We would now like to take advantage of the dynamic view, shown schematically in Fig.~\ref{fgr1}, that these states are produced during the measurement process. In principle, premeasurement states can range from being maximally entangled to being only classically correlated to being completely uncorrelated. What features of the state \textit{prior} to the controlled-shift operation in Fig.~\ref{fgr1} determine the correlations of the premeasurement state? As we will see, it is the \textit{uncertainty} of the observable being measured that ultimately determines the correlations produced during the premeasurement.

The key property that allows us to connect uncertainty to the quantum correlations of premeasurement states is the tripartite duality of conditional entropy functions. For example, for the von Neumann entropy, we have:
\begin{equation}
\label{eqn48}
H(A|B) = - H(A|C)
\end{equation}
for any pure state on $\HC_{ABC}$. Let us apply this duality to the pure state $\rhot_{M_XSE} = V_X\dya{\psi}V_X\ad$ shown in Fig.~\ref{fgr1}, giving:
\begin{equation}
\label{eqn49}
H(M_X|E)_{\rhot} = - H(M_X|S)_{\rhot}
\end{equation}
Now we note that the left side of \eqref{eqn49} is the standard way of defining the uncertainty of an observable conditioned on quantum memory \cite{RenesBoileau, BertaEtAl, TomRen2010, ColesEtAl, ColesColbeckYuZwolak2012PRL}. That is, $H(X|E)_{\rho} := H(M_X|E)_{\rhot}$, the uncertainty of observable $X$ when the observer is given access to system $E$ is defined as the quantum conditional entropy of $M_X$ given $E$ at the end of the process depicted in Fig.~\ref{fgr1}. In addition, Theorem~\ref{thm4} showed that the right side of \eqref{eqn49} is equal to a long list of other quantum correlation measures, so we have:
\begin{align}
\label{eqn50}
H(X|E)&=  \Ebb_{D}^{M_X|S}=K_{D}^{M_X|S}=\Ebb_{R,\infty}^{M_X|S}=\Ebb_{R}^{M_X|S}\notag\\
&= \dl_{\infty}^{\overrightarrow{M_X|S}}= \dl_{\infty}^{\overrightarrow{S|M_X}}=\dl^{\overrightarrow{M_X|S}}= \dl^{\overrightarrow{S|M_X}} \notag\\
&= \Dl^{\overrightarrow{M_X|S}}= \Dl^{\overrightarrow{S|M_X}} =\dl_{\infty}^{\overleftrightarrow{M_X|S}}= \Dl_{\infty}^{\overleftrightarrow{M_X|S}}\notag\\
&=\dl^{\overleftrightarrow{M_X|S}}=\Dl^{\overleftrightarrow{M_X|S}},
\end{align}
where it should be understood that the measures on the right side are applied to the state $\rhot_{M_XS}$. We note that a preliminary version of \eqref{eqn50} appeared in Theorem 2 of Ref.~\cite{ColesDecDisc2012}, but our results here go significantly beyond the related results in \cite{ColesDecDisc2012}.

This is a fascinating connection. It says that the uncertainty of an observable, given the environment, is a measure of quantum correlations (entanglement, discord, etc.) produced when that observable is measured. When the system's initial state is pure, the environment $E$ can be ignored and the left side of \eqref{eqn50} becomes $H(X)$, the Shannon entropy of the $X$ observable.

We remark that we have assumed $X$ is a projective (but not necessarily fine-grained) observable, i.e., a PVM. To generalize \eqref{eqn50} to the case where $X$ is a POVM, simply replace $S$ on the right side with a Naimark extension $\Sbf$, i.e., an enlargement of the system's Hilbert space that allows $X$ to be thought of as a projective observable. Such an extension can be found for any POVM.

Now let us consider the analog of \eqref{eqn50} for other entropies. Consider the generic conditional entropy $H_K(A|B)$ introduced in Sect.~\ref{sct5.2}, based on a generic relative entropy $D_K$. In \cite{ColesColbeckYuZwolak2012PRL}, it was shown that, because of Properties~\ref{a} and \ref{b}, $H_K$ is guaranteed to have a \emph{dual} entropy $H_{\Khat}$ that is well-defined by
\begin{equation}
\label{eqn51}
H_{\Khat}(A|B):=-H_K(A|C),
\end{equation}
where $\rho_{ABC}$ is a purification of $\rho_{AB}$. Again let us apply this duality to $\rhot_{M_XSE}$ to obtain $H_{\Khat}(M_X|E)_{\rhot}=-H_K(M_X|S)_{\rhot}$, invoke the standard definition for uncertainty with quantum memory $H_{\Khat}(X|E)_{\rho} := H_{\Khat}(M_X|E)_{\rhot}$, and combine this with Theorem~\ref{thm2} to find:  
\begin{align}
\label{eqn52}
H_{\Khat}(X|E)= \Ebb_K^{M_X| S}=  \Dl_K^{\overrightarrow{M_X|S}}=\Dl_K^{\overrightarrow{S|M_X}}= \Dl_K^{\overleftrightarrow{M_X|S}}.
\end{align}
This gives a fairly general connection between an observable's uncertainty and the quantum correlations created upon its measurement, e.g., applicable when the correlation measures are based on any of the relative entropies in \eqref{eqn18}--\eqref{eqn21}. For example, because the min- and max-entropies are dual to each other \cite{KonRenSch09}, \eqref{eqn52} implies that
\begin{align}
\label{eqn53}
H_{\min}(X|E)&= \Ebb_{\fid}^{M_X| S}=  \Dl_{\fid}^{\overrightarrow{M_X|S}}=\Dl_{\fid}^{\overrightarrow{S|M_X}}= \Dl_{\fid}^{\overleftrightarrow{M_X|S}},\notag\\
H_{\max}(X|E)&= \Ebb_{\max}^{M_X| S}=  \Dl_{\max}^{\overrightarrow{M_X|S}}=\Dl_{\max}^{\overrightarrow{S|M_X}}= \Dl_{\max}^{\overleftrightarrow{M_X|S}}.
\end{align}

Now consider the \textit{smooth} min- and max- entropies discussed in Section~\ref{sct5.4}. They are dual to each other \cite{TomColRen10} in that
$$H^{\ep}_{\max}(A|B) = -H^{\ep}_{\min}(A|C),$$
for pure $\rho_{ABC}$. Again applying this duality to $\rhot_{M_XSE}$ and combining with Theorem~\ref{thm6} gives
\begin{align}
\label{eqn54}
H^{\ep}_{\min}(X|E) =  {}_{\ep}\Ebb^{M_X|S}_{\fid} =  {}_{\ep}\Dl^{\overrightarrow{M_X|S}}_{\fid}={}_{\ep}\Dl^{\overrightarrow{S|M_X}}_{\fid} =  {}_{\ep}\Dl^{\overleftrightarrow{M_X|S}}_{\fid},\notag\\
H^{\ep}_{\max}(X|E) =  {}_{\ep}\Ebb^{M_X|S}_{\max} =  {}_{\ep}\Dl^{\overrightarrow{M_X|S}}_{\max}={}_{\ep}\Dl^{\overrightarrow{S|M_X}}_{\max} =  {}_{\ep}\Dl^{\overleftrightarrow{M_X|S}}_{\max}.
\end{align}
This reduces to \eqref{eqn53} in the case $\ep = 0$, and hence generalizes the connection between uncertainty and the creation of quantum correlations to any $\ep \geq 0$. It is worth remarking that the smooth entropies on the left side of \eqref{eqn54} have important operational meanings in one-shot randomness extraction and data compression \cite{TRSS10, RenesRenner2012, RennerThesis05URL}, which is typically the motivation for their study. While we have not yet established operational meanings for the quantities on the right side of \eqref{eqn54} (though a similar smooth max-entanglement was given an operational meaning in \cite{BranDatt2011}), the connection nonetheless seems interesting, one reason being the validity for any value of $\ep$, suggesting that there truly is a deep connection between uncertainty and the quantum correlations produced in measurements.

\section{Reinterpreting Entropic Uncertainty Relations}\label{sct7}

\subsection{Introduction}\label{sct7.1}

The uncertainty principle plays a crucial role in our understanding of quantum mechanics, expressing a fundamental limit on our knowledge of certain pairs of observables. This idea, with no classical analog, has been captured quantitatively by so-called uncertainty relations, which in modern times typically have the form of a lower bound on the sum of the entropies of different observables and hence called entropic uncertainty relations (EURs). Though the field dates back to Heisenberg, research on the uncertainty principle has seen a sort of revolution in recent years as it was realized \cite{RenesBoileau, BertaEtAl} that the observer can possess quantum memory (a quantum system that could be entangled with the system of interest), and hence we should try to formulate the uncertainty principle within this more general context. This, along with the rise of quantum information theory, has led to a wide variety of EURs \cite{EURreview1} expressed using various entropy functions, some of which allow for quantum memory \cite{RenesBoileau, BertaEtAl, TomRen2010, ColesEtAl, ColesColbeckYuZwolak2012PRL}. 

The results of this article imply that there exists an interpretation of these EURs that is quite different from the typical one as constraints on our knowledge. The uncertainties appearing in these EURs have the form, for example, of the left-hand-sides of \eqref{eqn50}, \eqref{eqn52}, and \eqref{eqn54}. But we have shown that the uncertainty of an observable is quantitatively connected to, e.g., the entanglement created when that observable is measured. Hence, EURs have an interpretation that has nothing to do with uncertainty: they are lower bounds on the entanglement created when incompatible observables are measured! We illustrate this alternative view with a game, in what follows.

\subsection{Entanglement distillation game} \label{sct7.2}

Here we focus on \eqref{eqn50}, in particular, the portion that reads:
\begin{equation}
\label{eqn55}
H(X|E)=  \Ebb_{D}^{M_X|S}
\end{equation}
where $\Ebb_{D}$ is the distillable entanglement \cite{HHHH09}, i.e., the optimal rate to distill EPR pairs using LOCC in the asymptotic limit (infinitely many copies of the state). Again, note that when the initial state of the system, $\rho_S$ in Fig.~\ref{fgr1}, is pure then \eqref{eqn55} becomes $H(X)=  \Ebb_{D}^{M_X|S}$.

Equation~\eqref{eqn55} gives an operational meaning to uncertainty relations written in terms of Shannon entropies \cite{EURreview1}, or ``Shannon uncertainty relations". We illustrate this with the following game, where Alice wants to establish entanglement with Bob but Eve (the adversary) wants to prevent this. Suppose the game is set up such that Eve feeds Alice an (unknown to Alice) pure state $\ket{\psi}_S$ of a qubit $S$ and a register qubit $M$ known to be in the $\ket{0}$ state. Alice is allowed to perform a CNOT between $S$ and $M$, such that some basis on $S$ controls the NOT on the $M$ qubit, and then she can send the $M$ qubit (over a perfect quantum channel) to Bob.
The only freedom Alice is allowed is to change the basis that controls the CNOT. Suppose they repeat this $3N$ times, where $N$ is very large ($N\to \infty$), and each time Eve feeds Alice the same states. At the end of the game, Eve announces the state $\ket{\psi}_S$, and Alice's and Bob's task is now to distill at least $2N$ EPR pairs from their $3N$ pairs of qubits using LOCC, \textit{no matter what $\ket{\psi}_S$ was}. A winning strategy is for Alice to use the $X$, $Y$, and $Z$ bases (three mutually orthogonal axes of the Bloch sphere) each $N$ times. Then, since $\Ebb_D $ is additive here (see Appendix~\ref{app1}), the number of distillable EPR pairs is $N(\Ebb_D^{M_X|S}+\Ebb_D^{M_Y|S}+\Ebb_D^{M_Z|S})$. From \eqref{eqn55} and an uncertainty relation from \cite{SanchezRuiz1995} we have:
$$\Ebb_D^{M_X|S}+\Ebb_D^{M_Y|S}+\Ebb_D^{M_Z|S} =H(X)+H(Y)+H(Z)\geq 2.$$
So, regardless of $\ket{\psi}_S$, the number of distillable EPR pairs is lower-bounded by $2N$. (This also gives an operational meaning to minimum uncertainty states of Shannon uncertainty relations \cite{ColesYuZwo2011}; in this example they have an EPR yield of precisely $2N$.) However, Eve can beat this protocol by feeding Alice a \textit{mixed} state $\rho_S$ and keeping the purifying system $E$. On the other hand, Alice can partially salvage the situation if she can somehow get a hold of a subsystem $E_1$ of $E=E_1E_2$ such that $SE_2$ is in a separable state. In this case, the uncertainty principle with quantum memory \cite{BertaEtAl}, combined with \eqref{eqn55}, gives:
\begin{align}
&\Ebb_D^{M_X|SE_1}+\Ebb_D^{M_Y|SE_1}+\Ebb_D^{M_Z|SE_1} =\notag\\
& H(X|E_2) + H(Y|E_2)+H(Z|E_2) \geq \frac{3}{2}[1+H(S|E_2)]\geq \frac{3}{2},\notag
\end{align}
since the von Neumann conditional entropy $H(S|E_2)\geq 0$ for separable states \cite{NieChu00}. So at least, in this case, Alice and Bob are assured to get $(3/2)N$ EPR pairs. This game illustrates that Shannon uncertainty relations are useful for designing protocols to create entanglement, whenever the state of one's system is unknown.

\subsection{Entanglement creation view of other EURs} \label{sct7.3}

We discussed EURs written in terms of Shannon entropies in the previous subsection, but EURs have been found for other entropies as well. Of particular interest are the min- and max-entropies because they have operational meanings \cite{KonRenSch09}, and more generally, the smooth min- and max-entropies have operational meanings \cite{TRSS10, RenesRenner2012, RennerThesis05URL}.

Let us consider an EUR proved by Tomamichel and Renner for the min- and max-entropies \cite{TomRen2010}. Consider any two POVMs $X=\{X_j\}$ and $Z=\{Z_k\}$ on system $S$ and any tripartite state $\rho_{SE_1E_2}$, then
\begin{equation}
\label{eqn56}
 H_{\min}(X|E_1)+ H_{\max}(Z|E_2) \geq \log\frac{1}{c(X,Z)},
\end{equation}
where $c(X,Z)=\max_{j,k} \| \sqrt{Z_{k}} \sqrt{X_{j}}\|_\infty^2$ (the infinity norm of an operator is its largest singular value). Let us specialize to pure $\rho_{SE_1E_2}$, and combine \eqref{eqn56} with \eqref{eqn53} to obtain
\begin{equation}
\label{eqn57}
 \Ebb_{\fid}^{M_X| \Sbf E_2} + \Ebb_{\max}^{M_Z| \Sbf E_1} \geq \log\frac{1}{c(X,Z)},
\end{equation}
where $\Sbf$ extends $S$ to allow $X$ and $Z$ to be projective. It is interesting that \eqref{eqn57} has nothing to do with uncertainty, and conceptually is just about entanglement, where $\Ebb_{\max}$ has been given an operational meaning as a one-shot entanglement cost \cite{BranDatt2011}, and $\Ebb_{\fid}$ is closely related to the geometric entanglement \cite{StreltsovEtAl2010NJP}. We note that \eqref{eqn57} is stronger than the inequality obtained from replacing $E_1$ and $E_2$ in \eqref{eqn57} with the joint system $E=E_1E_2$, since $\Ebb_{\fid}$ and $\Ebb_{\max}$ are non-increasing under local partial trace, e.g., $\Ebb^{M_X|\Sbf E}_{\fid} \geq \Ebb^{M_X|\Sbf E_2}_{\fid}$. This strengthening of the inequality [i.e, restricting $E$ to its subsystems as in \eqref{eqn57}] corresponds precisely to the strengthening obtained from allowing quantum memory in the uncertainty relation, \eqref{eqn56}.

In addition to proving \eqref{eqn56}, Tomamichel and Renner \cite{TomRen2010} generalized the uncertainty relation to the case of smooth entropies ($\ep \geq 0$):
\begin{equation}
\label{eqn58}
H^{\ep}_{\min}(X|E_1) + H^{\ep}_{\max}(Z|E_2) \geq \log \frac{1}{c(X,Z)}
\end{equation}
This uncertainty relation has been received with significant excitement due to its application in proving the security of QKD even in the non-asymptotic case, where Alice and Bob do only a finite number of measurements \cite{TLGR}. It therefore seems interesting that we can rewrite \eqref{eqn58} in a way that takes on a completely different conceptual meaning. For pure $\rho_{SE_1E_2}$, this uncertainty relation combined with \eqref{eqn54} becomes 
\begin{equation}
\label{eqn59}
{}_{\ep}\Ebb^{M_X|\Sbf E_2}_{\fid} + {}_{\ep}\Ebb^{M_Z |\Sbf  E_1}_{\max} \geq \log \frac{1}{c(X,Z)},
\end{equation}
which reduces to \eqref{eqn57} for the special case of $\ep=0$. Again, we note that \eqref{eqn59} is stronger than the inequality obtained from replacing $E_1$ and $E_2$ in \eqref{eqn59} with the joint system $E=E_1E_2$, since ${}_{\ep}\Ebb_{\fid}$ and ${}_{\ep}\Ebb_{\max}$ are non-increasing under local partial trace (see Appendix~\ref{app44}).

Inequalities of the form of \eqref{eqn57} and \eqref{eqn59} bring to mind the paradigm of entanglement distribution, similar to the discussion in the previous subsection. Here one wishes to establish entanglement between distant locations by, for example, sending a carrier quantum system. A scenario that \eqref{eqn57} and \eqref{eqn59} would be relevant to is the following. Suppose that Alice, Bob, and Charlie, respectively, possess the $S$, $E_1$, and $E_2$ portions of two copies of a tripartite pure state $\ket{\psi}_{SE_1E_2}$, i.e., the overall state is $\ket{\psi}_{SE_1E_2}^{\ot 2}$. Suppose that they do not initially know what the state $\ket{\psi}_{SE_1E_2}$ is, but at the end of the protocol $\ket{\psi}_{SE_1E_2}$ is revealed to them. Perhaps Alice wishes to establish entanglement with Bob or with Charlie. She can perform a premeasurement of observable $X$ on one of her $S$ systems, keep the register $M_X$, and send the resulting $S$ system to Bob. On the other $S$ system, she premeasures observable $Z$, keeps the register $M_Z$, and sends the resulting $S$ system to Charlie.  If $c(X,Z) < 1$, then she will have established entanglement with Bob and/or with Charlie (at least one of the two). This fact is guaranteed by \eqref{eqn57} and \eqref{eqn59}, and of course these inequalities quantitatively bound the amount of entanglement that is established.

\section{Implications and future outlook}\label{sct8}

\subsection{Entanglement complementarity relations}

Our main technical results were given in Section~\ref{sct5}, as theorems stating that a certain class of bipartite states called premeasurement states cause the quantum correlation hierarchy to collapse. However, our most important contribution may be the conceptual insight about the nature of uncertainty and uncertainty relations, discussed in Sections~\ref{sct6} and \ref{sct7}. 

Apparently, many uncertainty relations, which are typically thought of as bounds on our knowledge of incompatible observables, can be reinterpreted as bounds on the entanglement created when incompatible observables are measured. This reinterpretation holds for any EUR for a finite-dimensional quantum system written, e.g., in terms of the Shannon entropy, smooth min-entropy, or smooth max-entropy. 

Perhaps the most important implication is the idea that entanglement creation exhibits complementarity. Of course, researchers are somewhat familiar with the idea because of the so-called ``measurement problem" and the fact that Schrodinger's cat will get produced when a measurement device interacts with a system that is initially in a superposition state. But much of that discussion has been qualitative, whereas, we have shown here that there are \textit{precise} and \textit{general} lower-bounds on entanglement creation during measurement. It seems very interesting that complementarity, the idea that there are certain observables that are incompatible, can be expressed in a manner that has nothing to do with uncertainty. We think it is worthwhile to give these inequalities a name, say, entanglement complementarity relations (ECRs).

Even though each of the ECRs that we have presented in this article is equivalent to some EUR, it seems extremely likely that researchers will find ECRs in the future that have no obvious connection to an EUR. In other words, we think that ECRs are their own class of inequalities, and we believe there is plenty of room to explore them!  This is especially true given that there is a vast zoo of entanglement measures \cite{HHHH09}. More generally, there is a vast zoo of quantum correlation measures \cite{ModiEtAl2011review}, and so we should open to possibly finding complementarity relations for entanglement, discord, and other related correlation measures. 

As discussed in Section~\ref{sct7}, it is possible that these ECRs could be useful for developing strategies to create and distribute entanglement, particularly if they are formulated with entanglement measures that have operational meanings.

\subsection{Implications for quantum correlations}

Here we mention a few more implications of our results in the field of quantum correlations, emphasizing that the connection of EURs to ECRs is the main implication. 

One reason that pure states are so nice is that their entanglement, discord, and relative entropy of quantumness are so easy to calculate - just the entropy of the reduced state. The collapse of the hierarchy for $\textsf{MQ}$ states (which include but go beyond pure states) implies that their entanglement, discord, and relative entropy of quantumness are also quite easy to calculate. Calculating them simply involves calculating a conditional entropy, which, in the von Neumann case, does not involve any optimization process.

Another implication of the hierarchy collapse is that operational meanings get shared. That is, for $\textsf{MQ}$ states, entanglement measures inherit operational meanings of discord \cite{ModiEtAl2011review}, and vice-versa. 

Finally, we note that the entanglement created in premeasurements has been studied previously as a fairly general strategy to quantify discord \cite{PianiEtAl11, StrKamBru11, PianiAdessoPRA.85.040301}. The idea is that a state $\rho_{AB}$ is classical with respect to system $A$ if and only if there exists a premeasurement in some orthonormal basis $W$ on $\HC_A$ that creates no entanglement between the register $M_W$ and the $AB$ system, i.e., if and only if $\rhot_{M_W|AB}\in \textsf{Sep}$. On the other hand, our results (for example, Theorem~\ref{thm4}) imply that the following four conditions are equivalent: $\rhot_{M_W |AB}\in \textsf{Sep} \Leftrightarrow \rhot_{M_W |AB}\in \textsf{CQ} \Leftrightarrow \rhot_{M_W |AB}\in \textsf{QC} \Leftrightarrow \rhot_{M_W |AB}\in \textsf{CC} $. Thus, we have four equivalent classicality conditions. This naturally leads one to think of quantitative measures of the form
\begin{equation}
\label{eqn60}
\DC^{\overrightarrow{A|B}} (\rho_{AB}) = \min_{W\in \WC_A }Q^{M_W |AB}(\rhot_{M_W AB})
\end{equation}
where $\WC_A$ is the set of all orthonormal bases on $\HC_A$, and where $Q$ is any non-negative correlation measure that vanishes only on either $\textsf{Sep}$, $\textsf{CQ}$, $\textsf{QC}$, or $\textsf{CC}$. The quantity $\DC^{\to}$ in \eqref{eqn60} can be thought of as a general one-way discord measure, with the generality of $Q$ giving a slightly more general framework than the case where $Q$ is restricted to be an entanglement measure.

\section{Conclusions}\label{sct9}

We have investigated the hierarchical ordering of quantum correlation measures, in which two-way discord is the broadest kind of quantum correlation (i.e., giving the largest value), and then becoming progressively more narrow (i.e., smaller in value) is one-way discord, then entanglement, and finally coherent information. Each of these four kinds of correlations can be quantified with different measures, for example, we have considered measures related to the von Neumann entropy, measures related to the smooth min- and max-entropies, and measures based on a generic relative entropy. In each case, we find a hierarchical ordering, and furthermore, we find that this hierarchy collapses to a single value for a special class of bipartite states called premeasurement states. In the case of measures related to von Neumann entropy, Section~\ref{sct5.3}, we showed that these states are the \textit{only} states that fully collapse the quantum correlation hierarchy, as in Theorem~\ref{thm4}.

In addition to collapsing the hierarchy, these states are interesting because they can be thought of as being produced from the interaction of a system with a measurement device, schematically shown in Fig.~\ref{fgr1}. Indeed, they have been studied previously in the context of measurement, decoherence, and einselection \cite{ZurekReview}. Maximally correlated states are a special example of premeasurement states, though more generally premeasurement states can be asymmetric with respect to the two subsystems (one-way maximally correlated). In Section~\ref{sct22b} we discussed the relation of premeasurement states to a broader class of states considered in Ref.~\cite{CorDeOFan11}.

Considering the dynamic view that, indeed, the premeasurement state arose from the measurement process in Fig.~\ref{fgr1}, we made the very interesting connection that the quantum correlations of the premeasurement state is precisely connected to the \textit{uncertainty} of the observable being measured. As discussed in Section~\ref{sct6}, this connection holds when the uncertainty is quantified, e.g., with Shannon / von Neumann entropy, smooth min-entropy, or smooth max-entropy. Though we gave a few preliminary results on this idea in \cite{ColesDecDisc2012} (see Theorem 2 of that article), the present article dramatically extends and generalizes this idea. We are left with the realization that uncertainty prior to a measurement implies that entanglement will be created in that measurement (one may need access to the purifying system to see the entanglement), and conversely the production of entanglement implies the lack of certainty about the observable being measured.

This intimate connection between uncertainty and the creation of entanglement (more generally, quantum correlations) has immediate consequences. Researchers have been proving stronger and stronger entropic uncertainty relations (EURs) over the past few decades. But these bounds on our knowledge of incompatible observables can be completely reinterpreted as bounds on the entanglement created when incompatible observables are measured. In Section~\ref{sct7}, we illustrated this idea with a game, where Alice wanted to create and distribute entanglement to Bob, even when she has no idea what state she possesses. Measuring incompatible observables on different copies of her system, and then sending the registers to Bob, is a strategy for Alice to win this game, as guaranteed by ``entanglement complementarity relations", i.e., our reinterpretation of EURs in terms of entanglement creation. Section~\ref{sct7} discussed the reinterpretation of several EURs, including ones allowing for quantum memory \cite{BertaEtAl}, and ones for the smooth min- and max-entropies \cite{TomRen2010}.

Section~\ref{sct8} gives an optimistic future outlook for entanglement complementarity relations (ECRs).  Even though every ECR presented here is linked to some EUR, is it possible to find ECRs that are not linked to some EUR? The present work shows that entanglement creation exhibits the phenomenon of complementarity - if this is a basic principle, then we would expect that there could be a whole class of yet-to-be-discovered inequalities that have nothing to do with uncertainty. These ECRs (or more generally, one can substitute any measure of quantum correlations in place of entanglement, and there is a vast zoo of such measures) offer a new way of capturing the complementarity of quantum mechanics. Exploration into ECRs could inspire strategies to generate and distribute entanglement, and perhaps more importantly, give deeper insight into the complementarity of quantum processes.

\acknowledgments

I thank Shiang Yong Looi, Roger Colbeck, and Marco Piani for helpful discussions, and I especially thank Eric Chitambar for helpful discussions as well as helpful comments on Section~\ref{sct22b}. I note that this work was partly inspired by Refs.~\cite{PianiEtAl11, StrKamBru11, PianiAdessoPRA.85.040301}. Finally, I acknowledge support from the U.S.\ Office of Naval Research.

\appendix

\section{$\textsf{MQ}$ states}\label{app2}

Here we show the equivalence of two alternative definitions of the set of $\textsf{MQ}$ states. Denote the definition given in \eqref{eqn14} and \eqref{eqn6}, respectively, as $ \textsf{MQ}_1 $ and $ \textsf{MQ}_2 $. We will now show that $\textsf{MQ}_1 = \textsf{MQ}_2 $. Consider some $\rho_{AB}\in \textsf{MQ}_1$ given by \eqref{eqn15}, then there exists an orthonormal basis $W=\{\ket{W_j}\}$ on $\HC_A$ whose information is perfectly present in $B$ in the sense that the (unnormalized) conditional density operators on $B$:
\begin{equation}
\label{eqn61}
\tau_{B,j}^W:=\Tr_A(\dya{W_j}\rho_{AB})=X_j\sg_B X_j
\end{equation}
are all orthogonal (i.e., for distinct $j$). In \cite{ColesDecDisc2012}, it was shown that, for pure $\rho_{ABC}$, $\rho_{AC}\in \textsf{CQ}$ iff the information about some orthonormal basis of $\HC_A$ is perfectly present in $B$ [see Eq.~\eqref{eqn62} below for the argument], hence we have shown that $\rho_{AB}\in \textsf{MQ}_2$ and that $\textsf{MQ}_1\subseteq \textsf{MQ}_2$. 

To show the converse that $\textsf{MQ}_2 \subseteq \textsf{MQ}_1$ is significantly more difficult. Suppose $\rho_{AB}\in \textsf{MQ}_2$, i.e., for some purification $\rho_{ABC}$, $\rho_{AC}$ has the form $\sum_j p_j\dya{W_j}\ot \rho_{C,j}$, for some $W\in \WC_A$. It was shown in \cite{ColesDecDisc2012} that for pure $\rho_{ABC}$,
\begin{equation}
\label{eqn62}
H(W|B) = D(\rho_{AC}|| \sum_j \dya{W_j}\rho_{AC}\dya{W_j}),
\end{equation}
where $\dya{W_j}$ is short-hand for $\dya{W_j} \ot \id$. By our assumption that $\rho_{AC}=\sum_j p_j\dya{W_j}\ot \rho_{C,j}$, the right-hand-side of \eqref{eqn62} is zero, and hence $H(W|B)=0$, implying that the $W$ information is perfectly present in $B$, i.e., the conditional density operators
$$\tau_{B,j}^W =\Tr_{AC}(\dya{W_j}\rho_{ABC})$$
are orthogonal for distinct $j$. 

The task now is to show that this condition, $H(W|B)=0$, implies that $\rho_{AB}$ is of the form of \eqref{eqn15}. Our proof of this relies on the conditions for the relative-entropy monotonicity to be satisfied with equality \cite{Petz2003,HaydenEtAl04} and is closely related to the study of minimum uncertainty states of entropic uncertainty relations \cite{ColesYuZwo2011, RenesBoileau}. Let $Z=\{\ket{Z_k}\}$ be the orthonormal basis on $\HC_A$ that is related to $W$ by the Fourier transform, i.e.,
\begin{equation}
\label{eqn63}
\ket{Z_k}=\sum_j \frac{\om^{jk}}{\sqrt{d}}\ket{W_j},\quad \ket{W_j}=\sum_k \frac{\om^{-jk}}{\sqrt{d}}\ket{Z_k},
\end{equation} 
where $d=\dim(\HC_A)$ and $\om=e^{2\pi i/d}$. In general the following uncertainty relation holds for any bipartite state $\rho_{AB}$ \cite{BertaEtAl},
\begin{equation}
\label{eqn64}
H(W|B)+H(Z|B)\geq \log d + H(A|B).
\end{equation} 
Notice that if $H(W|B)=0$, then \eqref{eqn64} is satisfied with equality since, in general, $H(Z|B)\leq \log d + H(A|B)$, implying in this case that $H(Z|B) = \log d + H(A|B)$. Under these conditions, i.e.\ when \eqref{eqn64} is satisfied with equality, we say that $\rho_{AB}$ is a minimum uncertainty state (MUS) of \eqref{eqn64}. Thus, all states for which $H(W|B)=0$ are MUSs of \eqref{eqn64}, so we now proceed to find an analytical form for the MUSs of \eqref{eqn64}. In fact, these MUSs were found previously in \cite{ColesYuZwo2011}, but we repeat some of the discussion here for completeness.

Let $\EC_Z(\cdot) = \sum_k \dya{Z_k}(\cdot )\dya{Z_k}$ be the quantum channel that decoheres in the $Z$ basis. Then (see \cite{ColesYuZwo2011} for more details) the MUSs of \eqref{eqn64} are the states for which
\begin{align}
\label{eqn65}
&D(\rho_{AB} || \sum_j \dya{W_j} \rho_{AB} \dya{W_j} )\notag\\
& = D(\EC_Z(\rho_{AB}) || \EC_Z(\sum_j \dya{W_j} \rho_{AB} \dya{W_j} )),
\end{align}
since, for any $\rho_{AB}$, the left-hand-side of \eqref{eqn65} equals $H(W|B)-H(A|B)$, and the right-hand-side equals $ \log d - H(Z|B)$.  For some quantum channel $\EC$, Petz showed \cite{Petz2003,HaydenEtAl04} that $D(\rho||\sg)=D(\EC(\rho)||\EC(\sg))$ if and only if there exists a quantum channel $\hat \EC$ that undoes the action of $\EC$ on $\rho$ and $\sg$:
\begin{equation}
\label{eqn66}
\hat\EC\EC\rho=\rho,\quad \hat\EC\EC \sg = \sg.
\end{equation} 
The construction given \cite{HaydenEtAl04} for this, defined on the support of $\EC(\sg)$, is
\begin{equation}
\label{eqn67}
\hat\EC(\rho)=\sqrt{\sg}\EC\ad(\EC(\sg)^{-1/2}\rho\EC(\sg)^{-1/2})\sqrt{\sg},
\end{equation} 
which automatically satisfies $\hat\EC\EC \sg = \sg$, so one just needs to solve $\hat\EC\EC\rho=\rho$. To apply this formula to \eqref{eqn65}, we set $\rho=\rho_{AB}$, $\sg= \sum_j \dya{W_j} \rho_{AB} \dya{W_j}$, and $\EC = \EC_Z$. Solving $\rho_{AB} = \hat\EC\EC\rho_{AB}$ gives
\begin{align}
\label{eqn68}
\rho_{AB}=& \sum_{j,j',k} \frac{\om^{(j-j')k}}{d} \dyad{W_j}{W_{j'}} \ot \notag\\
&\sqrt{\tau_{B,j}^W}\rho_B^{-1/2} \tau_{B,k}^Z \rho_B^{-1/2}\sqrt{\tau_{B,j'}^W}
\end{align} 
where $\rho_B=\Tr_A(\rho_{AB})$ and $\tau_{B,k}^Z:= \Tr_A(\dya{Z_k}\rho_{AB})$. Again, the idea is that \eqref{eqn68} is the general form for all MUSs of \eqref{eqn64}, and so we can specialize this formula to the special case where $H(W|B)=0$. This corresponds to all the $\tau_{B,j}^W $ being orthogonal, hence $\rho_B = \bigoplus_j \tau_{B,j}^W$, and
$$X_j:=\sqrt{\tau_{B,j}^W}\rho_B^{-1/2}= \rho_B^{-1/2} \sqrt{\tau_{B,j}^W}$$
is the projector onto the support of $\tau_{B,j}^W$. Thus the $\{X_j\}$ form a set of orthogonal projectors that sum to $\id_B$ provided, if $\rho_B$ is not full rank, then we enlarge one of the projectors, say $X_1$, so that $X_1$ also includes the space orthogonal to $\rho_B$.

Thus, under the condition $H(W|B)=0$, \eqref{eqn68} becomes
\begin{align}
\label{eqn69}
\rho_{AB}&= \sum_{j,j',k} \frac{\om^{(j-j')k}}{d} \dyad{W_j}{W_{j'}} \ot X_j \tau_{B,k}^Z X_{j'}\notag\\
&= \sum_{j,j'} \dyad{W_j}{W_{j'}} \ot X_j \sg_B X_{j'}
\end{align} 
where $\sg_B$ can be defined block-by-block with $X_j\sg_B X_{j'} = \sum_k (1/d) \om^{(j-j')k} \tau_{B,k}^Z$. One can verify that $\sg_B$ is a normalized density operator with $\sg_B = \sum_{j,j'} X_j\sg_B X_{j'} = d \tau_{B,0}^Z$, noting that $\Tr (\tau_{B,0}^Z) = (1/d)$ since $H(W|B)=0$ forces $Z$ to be uniformly distributed by the uncertainty relation. Thus, we have shown that $\rho_{AB}$ has the form of \eqref{eqn15}, so $\rho_{AB}\in \textsf{MQ}_1 $ and $\textsf{MQ}_2 \subseteq \textsf{MQ}_1 $, proving that $\textsf{MQ}_1 = \textsf{MQ}_2 $.

\section{Bures distance for $\textsf{MQ}$ states}\label{app5}

Here we show that the closest separable state to a $\textsf{MQ}$ state is a $\textsf{CC}$ state, as measured by the Bures distance, defined as \cite{BenZyc06}:
\begin{equation}
\label{eqn70}
D_{B}(\rho ,\sg):= \sqrt{2-2F(\rho , \sg)}.
\end{equation}
where $\rho$ and $\sg$ are (normalized) density operators, and $F(\rho , \sg)= \Tr (\sqrt{\rho} \sg\sqrt{\rho})^{1/2} $ is the fidelity.

Consider the following properties of the fidelity. For positive-semidefinite operators $P$ and $Q$, if $\tilde{Q}\geq Q$, then
\begin{equation}
\label{eqn71}
F(P , Q)\leq F(P , \tilde{Q}).
\end{equation}
Also, suppose that $\Pi_{P}$ is a projector onto a space that includes the support of $P$, then
\begin{equation}
\label{eqn72}
F(P ,Q)= F(P , \Pi_{P} Q\Pi_{P}).
\end{equation}
Now consider a general $\textsf{MQ}$ state, which is of the form $\tilde{\rho}_{M_XS}=V_X\rho_S V_X\ad$, for some density operator $\rho_S$ on $S$ and some premeasurement isometry $V_X: \HC_S\to\HC_{M_XS}$, which has the form given in \eqref{eqn11}. In what follows, we first use \eqref{eqn71}, noting that if $\sg_{M_XS}\in\textsf{Sep}$, then $\id \ot \sg_S\geq \sg_{M_XS}$, with $\sg_{S} =\Tr_{M_X}(\sg_{M_XS})$. We then use \eqref{eqn72}, noting that $V_X V_X\ad$ projects onto a space that includes the support of $\tilde{\rho}_{M_XS}$. For some $\sg_{M_XS} \in \textsf{Sep}$, we find:
\begin{align}
\label{eqn73}
F(\tilde{\rho}_{M_XS} ,\sg_{M_XS})&\leq F(\tilde{\rho}_{M_XS} , \id \ot \sg_{S}) \notag\\
&= F(\tilde{\rho}_{M_XS} , V_X V_X\ad (\id \ot \sg_{S}) V_X V_X\ad )\notag\\
&= F(\tilde{\rho}_{M_XS} ,\al_{M_XS} ),
\end{align}
where
\begin{align}
\label{eqn74}
\al_{M_XS} &:= V_X \sum_j X_j \sg_{S} X_j V_X\ad\notag\\
&= \sum_j \dya{j}\ot X_j \sg_{S} X_j  
\end{align}
is a $\textsf{CC}$ state, which can be verified by expanding the $X_j \sg_{S} X_j$ blocks in their eigenbasis. Therefore, \eqref{eqn73} shows that, for any separable state $\sg_{M_XS}$, there is a $\textsf{CC}$ state $\al_{M_XS}$ that is closer to $\tilde{\rho}_{M_XS}$, according to the fidelity. Since $D_B$ varies monotonically with $F$, then this statement also holds for $D_B$. Thus, the closest $\textsf{Sep}$ state, according to $D_B$, is a $\textsf{CC}$ state.

\section{Proof of Lemma~\ref{thm3}}\label{app4}

\begin{proof}
For \eqref{eqn36}, $I_c\leq \Ebb_D$ was shown in \cite{DevWin05} and the fact that a bit of secret key can be obtained from an e-bit implies $\Ebb_D\leq K_D$. Now it is obvious from \eqref{eqn4} that $E_R\leq \Dl^{\to} \leq \Dl^{\leftrightarrow}$, and the additivity of the von Neumann relative entropy implies that regularisation cannot increase these measures: $E_{R,\infty} \leq E_R$, $\Dl^{\to}_{\infty}\leq \Dl^{\to}$, and $\Dl^{\leftrightarrow}_{\infty} \leq \Dl^{\leftrightarrow}$. Combining this with a result from \cite{HHHOprl05} that $K_D\leq E_{R,\infty}$ implies \eqref{eqn36}.

For \eqref{eqn37}, we have from \eqref{eqn35} that $E_{R,\infty}\leq \Dl^{\to}_{\infty}=\dl^{\to}_{\infty}\leq \dl^{\to}$, where the last inequality follows from the additivity of the mutual information. Also, the relation $\dl^{\to}\leq \dl^{\leftrightarrow}$ follows from the Holevo bound \cite{NieChu00}. The right-most inequality in \eqref{eqn37} goes as follows. Note that $\dl^{\leftrightarrow}$ is smaller than the case where the minimization is performed over all rank-one projective measurements $\WC$ and $\WC'$ on $A$ and $B$, respectively, so
\begin{align}
\dl^{\overleftrightarrow{A|B}}&\leq \min_{\WC, \WC'} \{ I(\rho_{AB}) - I[(\WC \ot \WC')(\rho_{AB})] \} \notag\\
&\leq \min_{\WC, \WC'} \{ H[ (\WC \ot \WC' ) (\rho_{AB})] -  H(\rho_{AB}) \} = \Dl^{\overleftrightarrow{A|B}}. \notag
\end{align}

For \eqref{eqn38}, simply apply the above arguments, such as Eq.~\eqref{eqn37}, to the state $\rho_{AB}^{\ot N}$ for $N\to \infty$.

\end{proof}

\section{Smooth measures}\label{app44}

\subsection{Subnormalized states}

Let $\textsf{P}(\HC)$ denote the set of positive semi-definite operators on Hilbert space $\HC$. Let $\textsf{S}_{\le}(\HC)$ and $\textsf{S}_{=}(\HC)$, respectively, denote the sets of subnormalized and normalized positive operators on $\HC$, i.e., 
\begin{align}
\textsf{S}_{\le}(\HC) & = \{\sg \in \textsf{P}(\HC): \Tr(\sg) \leq 1 \},\notag\\
\textsf{S}_{=}(\HC) &= \{\sg \in \textsf{P}(\HC): \Tr(\sg) = 1 \}.\notag
\end{align}
Sometimes we may drop the explicit dependence on the Hilbert space for $\textsf{S}_{\le}$ and $\textsf{S}_{=} $ when the space is obvious.

It also useful to generalize the notion of $\textsf{MQ}$ states to subnormalized states. We denote this broader set as $\textsf{MQ}_{\le}$, containing all states of the form given in \eqref{eqn14} but allowing $\sg_B\in \textsf{S}_{\le}(\HC_B)$ to be subnormalized, or equivalently, defined by \eqref{eqn6} but allowing $\rho_{AC}$ to be subnormalized.

\subsection{Purified distance and $\ep$-balls}

Smooth measures involve optimizing over a ball of states within some chosen distance $\ep$ from the state of interest. These balls of states are called $\ep$-balls, and the distance measure of choice for constructing them is the purified distance \cite{TomColRen10}, which ensures that the $\ep$-balls are, to some degree, invariant under purifications or extensions (e.g., see Lemma~\ref{thm9}). The definitions and lemmas in this subsection are mostly due to the work of Tomamichel, Colbeck, and Renner \cite{TomColRen10}. They note that the purified distance between $\rho \in \textsf{S}_{\le}$ and $\sg \in \textsf{S}_{\le}$ can be written as

\begin{equation}
P(\rho, \sg) = \sqrt{1-\Fbar(\rho, \sg)^2},
\end{equation}
where $\Fbar$ is a generalized fidelity, 
$$\Fbar(\rho, \sg):=F(\rho,\sg)+\sqrt{(1-\Tr \rho)(1-\Tr \sg)}$$
with the standard fidelity given by
$$F(\rho,\sg) = \Tr [(\sqrt{\sg}\rho \sqrt{\sg})^{1/2}]. $$ 
Several useful properties of the purified distance are worked out in \cite{TomColRen10}. For example, they give the following lemma.
\begin{lemma}\label{thm7}
The purified distance is non-increasing under trace-nonincreasing completely positive maps (TNICPMs). Since a projector $\Pi$ gives rise to a TNICPM of the form $\rho \to \Pi\rho \Pi$, we have that
$$P(\rho , \sg) \geq P( \Pi \rho \Pi , \Pi \sg \Pi).$$\openbox
\end{lemma}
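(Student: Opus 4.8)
The plan is to recast the statement in terms of the generalized fidelity $\Fbar$, for which monotonicity runs in the opposite direction, and then reduce to the well-known monotonicity of the \emph{ordinary} fidelity under trace-preserving completely positive maps by attaching an auxiliary one-dimensional ``drain'' register. Since $P(\rho,\sg)=\sqrt{1-\Fbar(\rho,\sg)^2}$ is a decreasing function of $\Fbar$, it suffices to prove that for any TNICPM $\EC$ one has $\Fbar(\EC(\rho),\EC(\sg))\geq \Fbar(\rho,\sg)$.

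First I would set up the embedding. Given $\HC$, adjoin a one-dimensional space spanned by a unit vector $\ket{e}$, and to each $\rho\in\textsf{S}_{\le}(\HC)$ associate the \emph{normalized} state $\hat\rho:=\rho\oplus(1-\Tr\rho)\dya{e}$ on $\HC\oplus\mathbb{C}$ (and analogously $\hat\sg$, $\widehat{\EC(\rho)}$, $\widehat{\EC(\sg)}$ on the appropriate spaces). Because the two blocks have orthogonal supports, the ordinary fidelity factorizes,
$$F(\hat\rho,\hat\sg)=F(\rho,\sg)+\sqrt{(1-\Tr\rho)(1-\Tr\sg)}=\Fbar(\rho,\sg),$$
and likewise $F(\widehat{\EC(\rho)},\widehat{\EC(\sg)})=\Fbar(\EC(\rho),\EC(\sg))$.

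Next I would build a genuine channel $\EC_{+}$ on the enlarged spaces that reproduces $\EC$ on the embedded states. Writing $\EC(\cdot)=\sum_i K_i(\cdot)K_i\ad$ with $\sum_i K_i\ad K_i\le\id$, set $L:=(\id-\sum_i K_i\ad K_i)^{1/2}$ and take as Kraus operators from $\HC\oplus\mathbb{C}$ to $\HC'\oplus\mathbb{C}$: each $K_i$ extended by zero on $\ket{e}$; the operators $\dyad{e'}{f_j}L$ for an orthonormal basis $\{\ket{f_j}\}$ of $\HC$ (which dump the ``lost'' weight into the drain); and $\dyad{e'}{e}$ (which passes the drain through). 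A short computation shows the Kraus operators sum to the identity, so $\EC_{+}$ is trace-preserving, and using $\Tr(L^2\rho)=\Tr\rho-\Tr\EC(\rho)$ one checks $\EC_{+}(\hat\rho)=\widehat{\EC(\rho)}$ (and the same for $\sg$). Applying the standard monotonicity of fidelity under CPTP maps to $\EC_{+}$ then gives
$$\Fbar(\EC(\rho),\EC(\sg))=F\big(\EC_{+}(\hat\rho),\EC_{+}(\hat\sg)\big)\geq F(\hat\rho,\hat\sg)=\Fbar(\rho,\sg),$$
hence $P(\EC(\rho),\EC(\sg))\leq P(\rho,\sg)$. For the displayed special case, observe that $\rho\mapsto\Pi\rho\Pi$ is completely positive with the single Kraus operator $\Pi$, and $\Pi\ad\Pi=\Pi\le\id$, so it is a TNICPM and the general result applies directly.

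The main obstacle is purely bookkeeping: verifying that $\EC_{+}$ is trace-preserving and, crucially, that it maps $\hat\rho$ to $\widehat{\EC(\rho)}$ \emph{exactly} (rather than to a state differing in how the trace deficit is distributed between $\HC'$ and the drain). Everything else — the block-additivity of $F$ on orthogonal supports and the monotonicity of ordinary fidelity under channels — is standard, and the construction is just the usual ``Stinespring with a sink'' dilation of a trace-nonincreasing map.
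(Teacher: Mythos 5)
Your proof is correct: the embedding $\hat\rho=\rho\oplus(1-\Tr\rho)\dya{e}$ does satisfy $F(\hat\rho,\hat\sg)=\Fbar(\rho,\sg)$ by block-additivity of the fidelity, your Kraus operators do sum to the identity and do map $\hat\rho$ to $\widehat{\EC(\rho)}$ exactly (the drain receives $(\Tr\rho-\Tr\EC(\rho))+(1-\Tr\rho)=1-\Tr\EC(\rho)$), so monotonicity of the ordinary fidelity under CPTP maps finishes the argument. Note that the paper itself supplies no proof of this lemma --- it is quoted verbatim from Tomamichel, Colbeck, and Renner --- and your ``Stinespring with a sink'' dilation is essentially the standard argument given in that reference, so there is nothing to compare against beyond confirming that your self-contained version is sound.
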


Now let us define the $\ep$-ball around $\rho\in\textsf{S}_{\le}$,
$$\BC^{\ep}(\rho):=\{\sg \in \textsf{S}_{\le}: P(\rho, \sg)\leq \ep \}.$$
The ball grows monotonically with $\ep$ and $\BC^0(\rho) = \{\rho \}$.

Also, Lemma~\ref{thm7} implies that if $\sg \in \BC^{\ep}(\rho)$, then $\Pi \sg \Pi \in \BC^{\ep}(\rho)$ if $\Pi$ projects onto a space that includes the support of $\rho$. This fact is helpful when considering a subset of the $\ep$-ball that includes only those states confined to a particular subspace $\Pi$ that includes the support $\rho$, defined as follows 
$$\BC^{\ep}_{\Pi}(\rho):=\{\sg \in \textsf{S}_{\le}: P(\rho, \sg)\leq \ep , \Pi  \sg \Pi =\sg \}. $$
Clearly $\BC^{\ep}_{\Pi}(\rho)\subseteq \BC^{\ep}(\rho)$, and setting $\Pi$ to the identity recovers the full ball, $\BC^{\ep}_{\id}(\rho)=\BC^{\ep}(\rho)$.

It will also useful to define another subset of the $\ep$-ball in the special case where $\rho$ is pure, and that is a ball of pure states \cite{TomColRen10}:
 $$\BC^{\ep}_p(\rho):=\{\sg \in \textsf{S}_{\le}: P(\rho, \sg)\leq \ep,\text{rank}(\sg)=1 \}. $$
Again, $\BC^{\ep}_p(\rho) \subseteq \BC^{\ep}(\rho)$. In fact it is helpful to combine the notions of $\BC^{\ep}_{\Pi}(\rho)$ and $\BC^{\ep}_p(\rho)$ as follows:
 $$\BC^{\ep}_{p,\Pi}(\rho):=\{\sg \in \textsf{S}_{\le}: P(\rho, \sg)\leq \ep,\text{rank}(\sg)=1, \Pi  \sg \Pi =\sg \},$$
assuming $\rho$ is pure and $\Pi  \rho \Pi =\rho$.

The following two lemmas are from \cite{TomColRen10}.
\begin{lemma} \label{thm8}
Let $\rho, \tau \in \textsf{S}_{\le}(\HC)$ and let $\phi_{\rho}\in \textsf{S}_{\le}(\HC\ot\HC ' )$ with $\dim(\HC)\leq \dim(\HC')$ be a purification of $\rho$, then there exists a purification of $\tau$, $\phi_{\tau}\in \textsf{S}_{\le}(\HC\ot\HC ' )$, such that $P(\phi_{\rho}, \phi_{\tau}) = P(\rho,\tau)$.
\end{lemma}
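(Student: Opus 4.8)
The plan is to reduce the statement to the ordinary Uhlmann theorem for the fidelity, using the fact that the generalized fidelity $\Fbar$ differs from $F$ only by a term controlled entirely by traces, and that traces are preserved under purification.

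First I would record the reduction. Since $\phi_\rho$ purifies $\rho$, $\Tr\phi_\rho=\Tr(\Tr_{\HC'}\phi_\rho)=\Tr\rho$, and any purification $\phi_\tau$ of $\tau$ likewise satisfies $\Tr\phi_\tau=\Tr\tau$. Writing the rank-$\leq 1$ purifications as $\phi_\rho=\ket{\phi_\rho}\!\bra{\phi_\rho}$, $\phi_\tau=\ket{\phi_\tau}\!\bra{\phi_\tau}$, one has $F(\phi_\rho,\phi_\tau)=|\ip{\phi_\rho}{\phi_\tau}|$, hence
\begin{align}
\Fbar(\phi_\rho,\phi_\tau)&=|\ip{\phi_\rho}{\phi_\tau}|+\sqrt{(1-\Tr\rho)(1-\Tr\tau)},\notag\\
\Fbar(\rho,\tau)&=F(\rho,\tau)+\sqrt{(1-\Tr\rho)(1-\Tr\tau)}.\notag
\end{align}
Because $x\mapsto\sqrt{1-x^2}$ is injective on $[0,1]$, the target identity $P(\phi_\rho,\phi_\tau)=P(\rho,\tau)$ is equivalent to $|\ip{\phi_\rho}{\phi_\tau}|=F(\rho,\tau)$, so it suffices to exhibit a purification $\phi_\tau$ of $\tau$ in $\HC\ot\HC'$ realizing this.

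Next I would dispose of the degenerate cases $\Tr\rho=0$ or $\Tr\tau=0$: then the relevant operator is $0$, its purification is $0$, and both sides of the reduced identity vanish. For $\Tr\rho,\Tr\tau>0$ I would rescale, setting $\hat\rho:=\rho/\Tr\rho$, $\hat\tau:=\tau/\Tr\tau$ (normalized states on $\HC$) and $\ket{\hat\phi_\rho}:=\ket{\phi_\rho}/\sqrt{\Tr\rho}$ (a purification of $\hat\rho$ in $\HC\ot\HC'$). Since $\dim\HC\leq\dim\HC'$, the space $\HC'$ can host a purification of any state on $\HC$, so the standard Uhlmann theorem gives a purification $\ket{\hat\phi_\tau}$ of $\hat\tau$ in $\HC\ot\HC'$ with $|\ip{\hat\phi_\rho}{\hat\phi_\tau}|=F(\hat\rho,\hat\tau)$. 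Then $\ket{\phi_\tau}:=\sqrt{\Tr\tau}\,\ket{\hat\phi_\tau}$ purifies $\tau$ and, by homogeneity of $F$, $|\ip{\phi_\rho}{\phi_\tau}|=\sqrt{\Tr\rho\,\Tr\tau}\,F(\hat\rho,\hat\tau)=F(\rho,\tau)$, which closes the argument.

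There is no real obstacle here — this is a small lemma — and the substantive ingredient is just the usual Uhlmann theorem, which I would either cite (e.g.\ via \cite{NieChu00}) or reprove by taking $\ket{\hat\phi_\rho}=(\sqrt{\hat\rho}\ot\id)\ket{\Omega}$ for the fixed vector $\ket{\Omega}=\sum_i\ket{i}\ket{i}$, parametrizing purifications of $\hat\tau$ as $(\sqrt{\hat\tau}\,U\ot\id)\ket{\Omega}$ over unitaries $U$ on $\HC'$, and maximizing $|\Tr[\sqrt{\hat\rho}\sqrt{\hat\tau}\,U]|=\Tr|\sqrt{\hat\rho}\sqrt{\hat\tau}|=F(\hat\rho,\hat\tau)$ through the polar decomposition of $\sqrt{\hat\rho}\sqrt{\hat\tau}$. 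The only point where I would be careful is the subnormalization bookkeeping: checking that the flag term $\sqrt{(1-\Tr\rho)(1-\Tr\tau)}$ is genuinely common to $\Fbar(\phi_\rho,\phi_\tau)$ and $\Fbar(\rho,\tau)$ (this is precisely the trace-preservation remark), and that the hypothesis $\dim\HC\leq\dim\HC'$ is exactly what Uhlmann needs after rescaling. An alternative route — adjoining a one-dimensional flag to each of $\HC,\HC'$ so that $\rho,\tau$ become normalized and $\Fbar$ becomes $F$ — also works, but then one must argue that the Uhlmann-optimal purification of the flagged $\tau$ can be chosen in flag-respecting form, so I prefer the rescaling argument above.
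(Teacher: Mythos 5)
Your argument is correct. Note, however, that the paper does not prove this lemma at all: it is imported verbatim from Ref.~\cite{TomColRen10} (the text preceding it reads ``The following two lemmas are from \cite{TomColRen10}''), so there is no in-paper proof to compare against. Your reduction --- observing that purification preserves traces so the flag term $\sqrt{(1-\Tr\rho)(1-\Tr\tau)}$ cancels from both sides, that $P$ determines $\Fbar$ and vice versa, and then rescaling to normalized states and invoking Uhlmann's theorem (with $\dim\HC\leq\dim\HC'$ guaranteeing $\HC'$ is large enough to host the optimal purification of $\hat\tau$) --- is the standard route taken in the cited reference and is a sound, self-contained justification of the lemma.
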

\begin{lemma} \label{thm9}
Let $\rho \in \textsf{S}_{\le}(\HC)$ and let $\phi_{\rho}\in \textsf{S}_{\le}(\HC\ot\HC ' )$ be a purification of $\rho$, then
$$\BC^{\ep}(\rho) \supseteq \{\sg\in \textsf{S}_{\le}(\HC): \exists \phi_{\sg}\in \BC^{\ep}_p(\phi_{\rho}) , \sg =\Tr_{\HC '}(\phi_{\sg}) \}$$
with the two sets being identical if $\dim(\HC)\leq \dim(\HC')$.
\end{lemma}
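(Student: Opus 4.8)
The plan is to prove the two inclusions separately, reflecting the asymmetry in the statement: the forward inclusion $\supseteq$ holds with no constraint on dimensions, whereas equality additionally needs $\dim(\HC)\le\dim(\HC')$.

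First I would establish $\supseteq$. Let $\sg$ belong to the right-hand set, so there is a rank-one $\phi_\sg\in\textsf{S}_{\le}(\HC\ot\HC')$ with $P(\phi_\rho,\phi_\sg)\le\ep$ and $\Tr_{\HC'}(\phi_\sg)=\sg$. Subnormalization is immediate, $\Tr\sg=\Tr\phi_\sg\le 1$, so $\sg\in\textsf{S}_{\le}(\HC)$. For the distance bound, observe that the partial trace $\Tr_{\HC'}$ is a trace-preserving completely positive map, hence in particular a TNICPM, so Lemma~\ref{thm7} gives $P(\rho,\sg)=P(\Tr_{\HC'}\phi_\rho,\Tr_{\HC'}\phi_\sg)\le P(\phi_\rho,\phi_\sg)\le\ep$, i.e.\ $\sg\in\BC^\ep(\rho)$.

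Next, under the hypothesis $\dim(\HC)\le\dim(\HC')$, I would prove the reverse inclusion $\subseteq$. Take $\sg\in\BC^\ep(\rho)$, so $\sg\in\textsf{S}_{\le}(\HC)$ and $P(\rho,\sg)\le\ep$. Applying Lemma~\ref{thm8} with $\tau=\sg$ and the purification $\phi_\rho$ already in hand (it is exactly the dimension inequality that makes Lemma~\ref{thm8} applicable), we obtain a purification $\phi_\sg\in\textsf{S}_{\le}(\HC\ot\HC')$ of $\sg$ with $P(\phi_\rho,\phi_\sg)=P(\rho,\sg)\le\ep$. A purification is by definition a rank-one (pure) subnormalized state, so $\phi_\sg\in\BC^\ep_p(\phi_\rho)$, and $\Tr_{\HC'}(\phi_\sg)=\sg$ by construction, which exhibits $\sg$ as a member of the right-hand set.

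The proof is short, and once Lemmas~\ref{thm7} and~\ref{thm8} are available, Lemma~\ref{thm9} is essentially a repackaging of Uhlmann's theorem (in its subnormalized, purified-distance form) together with monotonicity of the purified distance under partial trace. The one point that deserves care is precisely why the reverse inclusion can fail without $\dim(\HC)\le\dim(\HC')$: if the purifying space $\HC'$ is too small, there may be no purification of $\sg$ inside $\HC\ot\HC'$ that stays within purified distance $\ep$ of the fixed $\phi_\rho$, so Lemma~\ref{thm8} cannot be invoked and only $\supseteq$ survives. Organizing the argument this way makes clear where, and why, the dimension condition enters.
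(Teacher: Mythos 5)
Your proof is correct. The paper does not actually prove this lemma itself --- it imports it from Ref.~\cite{TomColRen10} --- but your argument is the standard one and coincides with the template the paper does use when proving its generalization, Lemma~\ref{thm10}: the inclusion $\supseteq$ from monotonicity of the purified distance under partial trace (Lemma~\ref{thm7}), and the reverse inclusion from the Uhlmann-type Lemma~\ref{thm8}, which is exactly where the condition $\dim(\HC)\leq\dim(\HC')$ enters. Your closing remark correctly isolates why only $\supseteq$ survives without that condition.
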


We will need a generalization of Lemma~\ref{thm9}. 
\begin{lemma} \label{thm10}
Let $\rho \in \textsf{S}_{\le}(\HC)$ and let $\phi_{\rho}\in \textsf{S}_{\le}(\HC\ot\HC ' )$ be a purification of $\rho$, let $\Pi$ be a projector such that $\Pi\rho\Pi = \rho$, then
\begin{equation}
\label{eqn77}
\BC^{\ep}_{\Pi}(\rho) \supseteq \{\sg\in \textsf{S}_{\le}(\HC): \exists \phi_{\sg}\in \BC^{\ep}_{p,\Pi}(\phi_{\rho}) , \sg =\Tr_{\HC '}(\phi_{\sg}) \}
\end{equation}
with the two sets being identical if $\dim(\HC)\leq \dim(\HC')$.
\end{lemma}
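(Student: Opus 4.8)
The plan is to mirror the proof of Lemma~\ref{thm9} (due to \cite{TomColRen10}), carrying the subspace constraint along at each step; throughout, the projector ``$\Pi$'' appearing in $\BC^{\ep}_{p,\Pi}(\phi_\rho)$ is understood as $\Pi\ot\id_{\HC'}$ on the purifying space. The argument rests on two elementary support facts. First, any purification $\phi_\tau\in\textsf{S}_{\le}(\HC\ot\HC')$ of $\tau\in\textsf{S}_{\le}(\HC)$ satisfies $\supp(\phi_\tau)\subseteq\supp(\tau)\ot\HC'$, as one sees by writing a Schmidt decomposition of $\phi_\tau$ and noting that its $\HC$-side vectors are eigenvectors of $\tau$. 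Second, and consequently, if $\Pi\tau\Pi=\tau$ then $(\Pi\ot\id)\phi_\tau(\Pi\ot\id)=\phi_\tau$. In particular $(\Pi\ot\id)\phi_\rho(\Pi\ot\id)=\phi_\rho$, so the set $\BC^{\ep}_{p,\Pi}(\phi_\rho)$ is well defined in the sense required by the definition of that object.

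First I would prove the inclusion ``$\supseteq$'' in \eqref{eqn77}. Let $\sg=\Tr_{\HC'}(\phi_\sg)$ with $\phi_\sg\in\BC^{\ep}_{p,\Pi}(\phi_\rho)$. Since the partial trace $\Tr_{\HC'}$ is a TNICPM, Lemma~\ref{thm7} gives $P(\rho,\sg)=P(\Tr_{\HC'}\phi_\rho,\Tr_{\HC'}\phi_\sg)\leq P(\phi_\rho,\phi_\sg)\leq\ep$, so $\sg\in\BC^{\ep}(\rho)$. Tracing out $\HC'$ in the identity $(\Pi\ot\id)\phi_\sg(\Pi\ot\id)=\phi_\sg$ yields $\Pi\sg\Pi=\sg$, since partial trace commutes with multiplication by operators supported on $\HC$. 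Hence $\sg\in\BC^{\ep}_{\Pi}(\rho)$. Note $\sg$ need not be pure here, so this direction uses no dimension hypothesis, exactly as in Lemma~\ref{thm9}.

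Next I would prove ``$\subseteq$'' assuming $\dim(\HC)\leq\dim(\HC')$. Given $\sg\in\BC^{\ep}_{\Pi}(\rho)$, i.e.\ $P(\rho,\sg)\leq\ep$ and $\Pi\sg\Pi=\sg$, Uhlmann's theorem in the form of Lemma~\ref{thm8} (which is where $\dim(\HC)\leq\dim(\HC')$ enters) produces a purification $\phi_\sg\in\textsf{S}_{\le}(\HC\ot\HC')$ of $\sg$ with $P(\phi_\rho,\phi_\sg)=P(\rho,\sg)\leq\ep$. Then $\phi_\sg$ is rank one, $\Tr_{\HC'}(\phi_\sg)=\sg$, and by the second support fact applied to $\sg$ we get $(\Pi\ot\id)\phi_\sg(\Pi\ot\id)=\phi_\sg$; therefore $\phi_\sg\in\BC^{\ep}_{p,\Pi}(\phi_\rho)$ and $\sg$ belongs to the right-hand set of \eqref{eqn77}. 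Combined with the previous paragraph, this establishes equality of the two sets when $\dim(\HC)\leq\dim(\HC')$.

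I do not expect a genuine obstacle: the result is essentially a bookkeeping refinement of Lemma~\ref{thm9}, and the only subtlety is the consistent reading of $\Pi$ on the purifying system together with the observation that support containment is automatically preserved under purification, so that no hypothesis beyond $\Pi\rho\Pi=\rho$ is needed for the statement to be well posed. The sole place where $\dim(\HC)\leq\dim(\HC')$ is used is the appeal to Lemma~\ref{thm8}, mirroring its role in Lemma~\ref{thm9}.
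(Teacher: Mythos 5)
Your proposal is correct and follows essentially the same route as the paper's proof: the key observation that $\Pi\tau\Pi=\tau$ holds if and only if $(\Pi\ot\id)\phi_\tau(\Pi\ot\id)=\phi_\tau$ for any purification, combined with monotonicity of the purified distance under partial trace (Lemma~\ref{thm7}) for the inclusion and Lemma~\ref{thm8} for the reverse direction. You simply spell out the support argument and the two directions in more detail than the paper does.
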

\begin{proof}
Note that if $\sg\in \textsf{S}_{\le}(\HC)$, if $\Pi$ is a projector on $\HC$, and if $\phi_{\sg}\in \textsf{S}_{\le}(\HC\ot \HC ') $ is a purification of $\sg$, then $\Pi \sg\Pi = \sg$ if and only if $\Pi \phi_{\sg}\Pi = \phi_{\sg} $. So the set on the right side of \eqref{eqn77} will be contained in the $\Pi$ subspace and will be within $\ep$ of $\rho$ since the purified distance is non-increasing under partial trace, so this proves \eqref{eqn77}. The equality for $\dim(\HC)\leq \dim(\HC')$ follows from Lemma~\ref{thm8}.
\end{proof}

\subsection{Additional properties of $D_{\max}$ and $D_{\fid}$}

In addition to Properties~\ref{a} and \ref{b}, it is useful note the following properties of $D_{\max}$ and $D_{\fid}$.

For positive operators $P$ and $Q$, if $P' \geq P$, then
\begin{align}
\label{eqn78}D_{\max}(P || Q ) &\leq D_{\max}(P' ||Q) \\
\label{eqn79}D_{\fid}(P || Q ) &\geq D_{\fid}(P' ||Q )
\end{align}
Let $\Pi_P$ be a projector that includes the support of $P$, and define $\Pi_Q$ analogously, then
\begin{align}
\label{eqn80}D_{\fid}(P || Q ) &= D_{\fid}(P || \Pi_P Q \Pi_P)\notag\\
&= D_{\fid}(\Pi_Q P \Pi_Q || Q )
\end{align}

Let $\Pi$ be any projector, then
\begin{align}
\label{eqn81}D_{\max}(P || Q ) &\geq D_{\max}(\Pi  P \Pi || \Pi Q \Pi )
\end{align}
\begin{proof}
Consider the quantum channel $\FC(\cdot ) := \Pi(\cdot )\Pi +(\id - \Pi)(\cdot )(\id - \Pi)$. We have
\begin{align}
D_{\max}(P || Q) &\geq D_{\max}(\FC(P) || \FC( Q))\notag \\
& \geq D_{\max}(\Pi P \Pi || \Pi Q \Pi) \notag
\end{align}
where the second line follows by first invoking \eqref{eqn78} with $\FC(P)\geq  \Pi P \Pi $, and then invoking Property~\ref{b}.
\end{proof}

\subsection{Proof of Lemma~\ref{thm5}}

In Lemma~\ref{thm1}, we proved a correlation hierarchy for the the min- and max-entropies (among others). The proof was for normalized states $\rho_{AB}$, but the exact same proof applies to subnormalized states. Hence we have the following Lemma.
\begin{lemma}\label{thm11}
For any bipartite state $\rho_{AB}\in \textsf{S}_{\le}$, 
\begin{align}
\label{eqn82}&-H_{\min}(A|B) \leq  \Ebb^{A|B}_{\max} \leq  \Dl^{\overrightarrow{A|B}}_{\max}, \Dl^{\overrightarrow{B|A}}_{\max} \leq \Dl^{\overleftrightarrow{A|B}}_{\max},\\ 
\label{eqn83}&-H_{\max}(A|B) \leq  \Ebb^{A|B}_{\fid} \leq  \Dl^{\overrightarrow{A|B}}_{\fid}, \Dl^{\overrightarrow{B|A}}_{\fid} \leq  \Dl^{\overleftrightarrow{A|B}}_{\fid}.
\end{align}
\end{lemma}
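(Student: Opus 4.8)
The plan is to observe that the proof of Lemma~\ref{thm1} never used the normalization of $\rho_{AB}$ or of the operators entering the optimizations, so it carries over unchanged to $\rho_{AB}\in\textsf{S}_{\le}$, and then to obtain \eqref{eqn82} and \eqref{eqn83} by specializing the generic $D_K$ of Lemma~\ref{thm1} to $D_{\max}$ and to $D_{\fid}$ --- both satisfy Properties~\ref{a} and~\ref{b}, hence obey \eqref{eqn27}, which is the only fact about $D_K$ that that proof actually invokes. Concretely, the argument of Lemma~\ref{thm1} rests on three ingredients: (i) the anti-monotonicity \eqref{eqn27} of $D_K$ in its second slot; (ii) the operator inequality $\id\ot\sg_B\geq\sg_{AB}$ for any separable $\sg_{AB}$, with $\sg_B=\Tr_A(\sg_{AB})$, which follows by writing $\sg_{AB}$ as a sum $\sum_i r_i\dya{\al_i}\ot\dya{\bt_i}$ with $r_i\geq 0$ and using $\dya{\al_i}\leq\id$ --- a fact about normalized vectors that is insensitive to $\Tr(\sg_{AB})$; and (iii) the inclusions $\textsf{CC}\subset\textsf{CQ}\subset\textsf{Sep}$, equally valid for their subnormalized analogues. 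Since none of these depends on normalization, I would simply re-run the proof of Lemma~\ref{thm1} with $D_K\in\{D_{\max},D_{\fid}\}$.

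The one point meriting a line of care is the leftmost inequality. Let $\sg_{AB}\in\textsf{Sep}$ achieve the minimum defining $\Ebb^{A|B}_{\max}(\rho_{AB})$; then by (ii) and \eqref{eqn27}, $\Ebb^{A|B}_{\max}(\rho_{AB})=D_{\max}(\rho_{AB}||\sg_{AB})\geq D_{\max}(\rho_{AB}||\id\ot\sg_B)$. The entropy $H_{\min}(A|B)$ optimizes over \emph{normalized} $\sg_B$, whereas here $\sg_B$ may be subnormalized; but $\Tr(\sg_B)\leq 1$ gives $\sg_B\leq\sg_B/\Tr(\sg_B)$, so $\id\ot\sg_B\leq\id\ot(\sg_B/\Tr\sg_B)$, and \eqref{eqn27} again yields $D_{\max}(\rho_{AB}||\id\ot\sg_B)\geq D_{\max}(\rho_{AB}||\id\ot(\sg_B/\Tr\sg_B))\geq -H_{\min}(A|B)$. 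Chaining gives $-H_{\min}(A|B)\leq\Ebb^{A|B}_{\max}$, and the verbatim computation with $D_{\fid}$ in place of $D_{\max}$ (still only using \eqref{eqn27}) gives $-H_{\max}(A|B)\leq\Ebb^{A|B}_{\fid}$.

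The remaining inequalities of \eqref{eqn82} and \eqref{eqn83} are then immediate from (iii): $\Ebb_{\max}$ (resp.\ $\Ebb_{\fid}$) minimizes $D_{\max}$ (resp.\ $D_{\fid}$) over $\textsf{Sep}$, the one-way quantities minimize over the smaller sets $\textsf{CQ}$ or $\textsf{QC}$, and the two-way quantity over the smallest set $\textsf{CC}$, so a minimum over a subset can only be larger. I do not expect a serious obstacle; the only ``hard part'' is bookkeeping --- tracking the normalized-$\sg_B$ convention in the entropy definitions against the possibly-subnormalized $\sg_B$ the separability step produces (handled as above), and confirming that \eqref{eqn27} is genuinely available for $D_{\max}$ and $D_{\fid}$ on subnormalized operators. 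It is: both satisfy~\ref{a} and~\ref{b}; for $D_{\max}$ it is also elementary from $D_{\max}(P||Q)=\log\min\{\lm:P\leq\lm Q\}$, and for $D_{\fid}$ it follows from \eqref{eqn71}.
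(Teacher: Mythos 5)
Your proposal is correct and matches the paper's own (one-line) proof, which simply observes that the argument of Lemma~\ref{thm1} never uses normalization and applies verbatim with $D_K\in\{D_{\max},D_{\fid}\}$. Your extra care about the normalized-$\sg_B$ convention in $H_{\min}$ and $H_{\max}$ versus the possibly subnormalized marginal of the optimal separable state is a sensible bit of bookkeeping that the paper leaves implicit, but it does not change the route.
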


Now since Lemma~\ref{thm11} applies to each state in the ball, $\sg_{AB}\in \BC^{\ep}(\rho_{AB})$, Lemma~\ref{thm5} follows as a direct corollary.

\subsection{Proof of Theorem~\ref{thm6}}

Here we prove the collapse of the smooth correlation hierarchy for premeasurement states. It is helpful to note the following lemma, which extends Theorem~\ref{thm2} to subnormalized states. The proof is exactly the same as that given for Theorem~\ref{thm2}, i.e., the proof of Theorem~\ref{thm2} did not rely on the normalization of the state.
\begin{lemma}\label{thm12}
For any premeasurement state $\rhot_{M_XS}\in \textsf{MQ}_{\le}$,
\begin{align}
\label{eqn84}&-H_{\min}(M_X|S) = \Ebb^{M_X|S}_{\max} = \Dl^{\overrightarrow{M_X|S}}_{\max}= \Dl^{\overrightarrow{S|M_X}}_{\max} = \Dl^{\overleftrightarrow{M_X|S}}_{\max},\\
\label{eqn85}&-H_{\max}(M_X|S) = \Ebb^{M_X|S}_{\fid} = \Dl^{\overrightarrow{M_X|S}}_{\fid}= \Dl^{\overrightarrow{S|M_X}}_{\fid} = \Dl^{\overleftrightarrow{M_X|S}}_{\fid}.
\end{align}\end{lemma}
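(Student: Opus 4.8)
The plan is to re-run the proof of Theorem~\ref{thm2} essentially verbatim, once for the pair $(D_{\max}, H_{\min})$ to obtain \eqref{eqn84} and once for $(D_{\fid}, H_{\max})$ to obtain \eqref{eqn85}, checking at each step that normalization of $\rhot_{M_XS}$ is never used. Only two ingredients are required: (i) the subnormalized hierarchy of Lemma~\ref{thm11}, which supplies the chain of inequalities $-H_{\min}(M_X|S) \leq \Ebb^{M_X|S}_{\max} \leq \Dl^{\overrightarrow{M_X|S}}_{\max}, \Dl^{\overrightarrow{S|M_X}}_{\max} \leq \Dl^{\overleftrightarrow{M_X|S}}_{\max}$ (and the $D_{\fid}$ analogue); and (ii) a single reverse inequality $-H_{\min}(M_X|S) \geq \Dl^{\overleftrightarrow{M_X|S}}_{\max}$, which then squeezes the hierarchy onto one value.

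First I would fix a normalized $\sg_S$ achieving the maximum in $H_{\min}(M_X|S)$, so that $-H_{\min}(M_X|S) = D_{\max}(\rhot_{M_XS} || \id\ot\sg_S)$; the maximization defining $H_{\min}$ ranges over normalized $\sg_S$ regardless of whether $\rhot_{M_XS}$ is normalized, so this step is unchanged. Next, apply the projector-sandwiching property \eqref{eqn81} for $D_{\max}$ (respectively \eqref{eqn80} for $D_{\fid}$) with the projector $\Pi = V_X V_X\ad$: since $\supp(\rhot_{M_XS})$ lies in the range of $V_X V_X\ad$ we have $\Pi\rhot_{M_XS}\Pi = \rhot_{M_XS}$, while $\Pi(\id\ot\sg_S)\Pi = V_X V_X\ad(\id\ot\sg_S)V_X V_X\ad = \sum_j \dya{j}\ot X_j\sg_S X_j$, which is a $\textsf{CC}$ state (expand each block $X_j\sg_S X_j$ in its eigenbasis) and is in fact normalized since $\sum_j \Tr(X_j\sg_S X_j) = \Tr\sg_S = 1$. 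Hence $-H_{\min}(M_X|S) \geq D_{\max}(\rhot_{M_XS} || \sum_j \dya{j}\ot X_j\sg_S X_j) \geq \Dl^{\overleftrightarrow{M_X|S}}_{\max}$, and combining with Lemma~\ref{thm11} forces all five quantities in \eqref{eqn84} to coincide. The derivation of \eqref{eqn85} is word-for-word the same with $D_{\max}\to D_{\fid}$, $H_{\min}\to H_{\max}$, and \eqref{eqn81}$\to$\eqref{eqn80}.

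The only place where ``subnormalized'' does real work is in checking that every auxiliary fact invoked survives the weakening, and I expect this bookkeeping to be the main (and essentially the only) obstacle. Lemma~\ref{thm11} is already stated for $\rho_{AB}\in\textsf{S}_{\le}$, and its proof — that of Lemma~\ref{thm1} — uses only \eqref{eqn78}/\eqref{eqn79} and the fact that $\id\ot\sg_B \geq \sg_{AB}$ for separable $\sg_{AB}$ with $\sg_B = \Tr_A\sg_{AB}$, neither of which needs $\rho_{AB}$ to have unit trace, together with the inclusions $\textsf{CC}\subset\textsf{CQ}\subset\textsf{Sep}$. The sandwiching relations \eqref{eqn80} and \eqref{eqn81} are stated in the appendix for arbitrary positive operators, so they apply to the subnormalized $\rhot_{M_XS}$ directly. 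Thus no genuinely new argument is needed: the lemma simply records that the proof of Theorem~\ref{thm2} is ``normalization-blind'', each cited sub-result already being available in its subnormalized form.
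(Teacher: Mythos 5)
Your proposal is correct and follows exactly the paper's route: the paper's own proof of this lemma is the one-line observation that the argument of Theorem~\ref{thm2} never uses normalization of the state, which is precisely what you verify (your substitution of the specific properties \eqref{eqn80}--\eqref{eqn81} for the generic \eqref{eqn28} is an immaterial variation, since $D_{\max}$ and $D_{\fid}$ satisfy Properties~(a) and (b)). Your explicit bookkeeping of where subnormalization could matter is more detailed than the paper's, but not a different argument.
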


In what follows, we make use of $\BC^{\ep}_{\Pi}(\rhot_{M_XS})$ where $\Pi =V_X V_X\ad$ includes the support of $\rhot_{M_XS} = V_X\rho_S V_X\ad$, so it is helpful to state the following lemma.
\begin{lemma} \label{thm13}
For $\rhot_{M_XS} = V_X\rho_S V_X\ad \in \textsf{MQ}$ and $\Pi =V_X V_X\ad$, the ball $\BC^{\ep}_{\Pi}(\tilde{\rho}_{M_XS})$ only contains $\textsf{MQ}_{\le}$ states, of the form $V_X \tau_S V_X\ad$ for some $\tau_S\in \textsf{S}_{\le}(\HC_S)$.
\end{lemma}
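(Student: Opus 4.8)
The plan is to exploit the defining support constraint of $\BC^{\ep}_{\Pi}$ directly, reducing the statement to the elementary observation that any subnormalized state supported on the range of the isometry $V_X$ is automatically of premeasurement form. First I would take an arbitrary $\sg_{M_XS}\in\BC^{\ep}_{\Pi}(\rhot_{M_XS})$, so that by the definition of $\BC^{\ep}_{\Pi}$ we have $\sg_{M_XS}\in\textsf{S}_{\le}$ and $\Pi\sg_{M_XS}\Pi=\sg_{M_XS}$, where $\Pi=V_XV_X\ad$. Using $\Pi=V_XV_X\ad$ I would then define $\tau_S:=V_X\ad\sg_{M_XS}V_X$ and rewrite $\sg_{M_XS}=\Pi\sg_{M_XS}\Pi=V_X(V_X\ad\sg_{M_XS}V_X)V_X\ad=V_X\tau_SV_X\ad$, so that the original state is recovered by conjugating $\tau_S$ with the \emph{same} premeasurement isometry $V_X$.

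Next I would verify that $\tau_S$ is a legitimate subnormalized state. Positivity is immediate from $\bra{\phi}\tau_S\ket{\phi}=(\bra{\phi}V_X\ad)\sg_{M_XS}(V_X\ket{\phi})\ge 0$ for every $\ket{\phi}\in\HC_S$, and the trace bound follows from $\Tr(\tau_S)=\Tr(V_X\ad\sg_{M_XS}V_X)=\Tr(\sg_{M_XS}V_XV_X\ad)=\Tr(\Pi\sg_{M_XS}\Pi)=\Tr(\sg_{M_XS})\le 1$, where the third equality uses cyclicity of the trace together with the support condition $\Pi\sg_{M_XS}\Pi=\sg_{M_XS}$ and the last inequality is subnormalization of $\sg_{M_XS}$. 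Hence $\tau_S\in\textsf{S}_{\le}(\HC_S)$, and since $V_X\in\VC$, the representation $\sg_{M_XS}=V_X\tau_SV_X\ad$ is precisely the general form of an $\textsf{MQ}_{\le}$ state (the subnormalized analogue of \eqref{eqn14} defined just before this lemma).

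As $\sg_{M_XS}$ was an arbitrary element of the ball, this establishes that every state in $\BC^{\ep}_{\Pi}(\rhot_{M_XS})$ is an $\textsf{MQ}_{\le}$ state of the asserted form $V_X\tau_SV_X\ad$. I do not anticipate a genuine obstacle here: all the content sits in the support condition $\Pi\sg\Pi=\sg$ that is part of the definition of $\BC^{\ep}_{\Pi}$, and the $\ep$-closeness to $\rhot_{M_XS}$ plays no role in the argument — it is simply inherited by the states in the ball. The only point that deserves explicit mention is that the projector $\Pi=V_XV_X\ad$ is fixed, so the \emph{same} isometry $V_X$ furnishes the premeasurement representation for every element of the ball.
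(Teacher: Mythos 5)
Your argument is exactly the paper's proof: both extract $\tau_S=V_X\ad\sg_{M_XS}V_X$ from the support condition $\Pi\sg_{M_XS}\Pi=\sg_{M_XS}$ built into the definition of $\BC^{\ep}_{\Pi}$ and rewrite $\sg_{M_XS}=V_X\tau_SV_X\ad$. Your explicit verification that $\tau_S\in\textsf{S}_{\le}(\HC_S)$ is a small elaboration the paper leaves implicit; the proposal is correct.
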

\begin{proof}
By definition, all states in $\BC^{\ep}_{\Pi}(\tilde{\rho}_{M_XS})$ are of the form $\sg_{M_XS} = V_X V_X\ad \sg_{M_XS} V_X V_X\ad$ and hence of the form $V_X \tau_S V_X\ad$ where $\tau_S = V_X\ad \sg_{M_XS} V_X \in \textsf{S}_{\le}(\HC_S)$.
\end{proof}

Now we are ready to prove Theorem~\ref{thm6}. Let $\Pi = V_XV_X\ad$ in what follows. We first show the proof of \eqref{eqn46}. Let $\sgbar \in\textsf{S}_{\le}(\HC_{M_XS})$ and $\taubar \in \textsf{S}_{=}(\HC_S)$ be the two states that achieve the optimization in $H^{\ep}_{\min}(M_X|S)_{\rhot}$, i.e., let $- H^{\ep}_{\min}(M_X|S)_{\rhot} = D_{\min}(\sgbar || \id \ot \taubar )$, where $\rhot$ is short-hand for $\rhot_{M_XS}$. Then we have
\begin{align}
\label{eqn86}- H^{\ep}_{\min}(M_X|S)_{\rhot} &= D_{\min}(\sgbar || \id \ot \taubar )  \\
\label{eqn87}& \geq D_{\min}(\Pi \sgbar \Pi || \Pi (\id \ot \taubar) \Pi) \\
\label{eqn88}&= D_{\min}(\Pi \sgbar \Pi || \id \ot \sum_j X_j \taubar X_j)\\
\label{eqn89}&\geq  \min_{\sg\in\BC^{\ep}_{\Pi}(\rhot)}  [-H_{\min}(M_X|S)_{\sg}] \\
\label{eqn90}&=  \min_{\sg\in\BC^{\ep}_{\Pi}(\rhot)}  \Dl^{\overleftrightarrow{M_X|S}}_{\max}(\sg) \\
\label{eqn91}&\geq {}_{\ep}\Dl^{\overleftrightarrow{M_X|S}}_{\max}(\rhot)
 \end{align}
Equation~\eqref{eqn87} invoked \eqref{eqn81}, \eqref{eqn88} invoked Property~\ref{b}, \eqref{eqn90} invoked Lemmas~\ref{thm12} and \ref{thm13}, and \eqref{eqn91} notes that $\BC^{\ep}_{\Pi}(\rhot) \subset \BC^{\ep}(\rhot)$. Now note that \eqref{eqn44} gave an inequality in the reverse direction, so the inequalities must be equalities and the hierarchy in \eqref{eqn44} must collapse.

For the proof of \eqref{eqn47}, let us define $\textsf{CC}_{\Pi}\subset \textsf{CC}$ as the set $\{\tau\in \textsf{CC}: \Pi \tau \Pi = \tau\}$, i.e., only those $\textsf{CC}$ states that live in the subspace $\Pi$. Then we have
\begin{align}
\label{eqn92}- H^{\ep}_{\max}(M_X|S)_{\rhot} &\geq \max_{\sg\in\BC^{\ep}_{\Pi}(\rhot)}[-H_{\max}(M_X|S)_{\sg}]  \\
\label{eqn93}&= \max_{\sg\in\BC^{\ep}_{\Pi}(\rhot)}  \Dl^{\overleftrightarrow{M_X|S}}_{\fid}(\sg)\\
\label{eqn94}&= \max_{\sg\in\BC^{\ep}_{\Pi}(\rhot)}  \min_{\tau\in \textsf{CC}} D_{\fid}(\sg || \tau )\\
\label{eqn95}&= \max_{\sg\in\BC^{\ep}_{\Pi}(\rhot)}  \min_{\tau\in \textsf{CC}_{\Pi}} D_{\fid}(\sg || \tau )\\
\label{eqn96}&= \max_{\sg\in\BC^{\ep}(\rhot)}  \min_{\tau\in \textsf{CC}_{\Pi}} D_{\fid}(\sg || \tau )\\
\label{eqn97}&\geq \max_{\sg\in\BC^{\ep}(\rhot)}  \min_{\tau\in \textsf{CC}} D_{\fid}(\sg || \tau )\\
\label{eqn98}&= {}_{\ep}\Dl^{\overleftrightarrow{M_X|S}}_{\fid}({\rhot})
 \end{align}
Equation~\eqref{eqn93} invoked Lemmas~\ref{thm12} and \ref{thm13}, \eqref{eqn95} follows from \eqref{eqn73} and the surrounding discussion, \eqref{eqn96} invoked \eqref{eqn80}, and \eqref{eqn97} used $\textsf{CC}_{\Pi}\subset \textsf{CC}$. Again, note that \eqref{eqn45} gave an inequality in the reverse direction, so the inequalities must be equalities and the hierarchy in \eqref{eqn45} must collapse. This completes the proof.

As an aside, we note that, because the above inequalities must be equalities, the optimization in the smooth min- and max-entropy of a premeasurement state can be restricted to the ball $\BC^{\ep}_{\Pi}(\rhot_{M_XS})$, as in Eqs.~\eqref{eqn89} and \eqref{eqn92}.

\subsection{Properties of $_{\ep}\Ebb_{\max}$ and $_{\ep}\Ebb_{\fid}$}

Here we note a few useful properties of $_{\ep}\Ebb_{\max}$ and $_{\ep}\Ebb_{\fid}$. In particular, $_{\ep}\Ebb_{\max}$ is non-increasing under LOCC, $_{\ep}\Ebb_{\fid}$ is non-increasing under local quantum channels, and both $_{\ep}\Ebb_{\max}$ and $_{\ep}\Ebb_{\fid}$ are invariant under local isometries.
\begin{lemma}
Let $\rho_{AB}\in \textsf{S}_{=}(\HC_{AB})$,

(i)  Let $\Lm$ be an LOCC operation, denote $\rho_{A'B'}=\Lm (\rho_{AB})$, then
\begin{align}
{}_{\ep}\Ebb_{\max}^{A|B}(\rho_{AB}) &\geq {}_{\ep}\Ebb_{\max}^{A'|B'}(\rho_{A'B'})
\end{align}

(ii) Let $\EC_A:\HC_A\to\HC_{A'}$ and $\EC_B:\HC_B\to\HC_{B'}$ be local quantum channels on $A$ and $B$ respectively, denote $\rho_{A'B'}=(\EC_A\ot \EC_B) (\rho_{AB})$, then
\begin{align}
{}_{\ep}\Ebb_{\fid}^{A|B}(\rho_{AB}) &\geq {}_{\ep}\Ebb_{\fid}^{A'|B'}(\rho_{A'B'})
\end{align}

(iii) Let $V_A:\HC_A\to\HC_{A'}$ and $V_B:\HC_B\to\HC_{B'}$ be local isometries on $A$ and $B$ respectively, denote $\rho_{A'B'}=(V_A\ot V_B) \rho_{AB} (V_A\ad \ot V_B \ad )$, then
\begin{align}
{}_{\ep}\Ebb_{\max}^{A|B}(\rho_{AB}) &= {}_{\ep}\Ebb_{\max}^{A'|B'}(\rho_{A'B'})\\
{}_{\ep}\Ebb_{\fid}^{A|B}(\rho_{AB}) &= {}_{\ep}\Ebb_{\fid}^{A'|B'}(\rho_{A'B'})
\end{align}

\end{lemma}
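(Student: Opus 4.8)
The plan is to establish the three parts in the order (i), (iii), (ii) --- part (ii) will invoke part (iii) --- exploiting throughout the structural difference that ${}_{\ep}\Ebb_{\max}$ is a \emph{minimisation} over the $\ep$-ball $\BC^{\ep}(\rho)$ while ${}_{\ep}\Ebb_{\fid}$ is a \emph{maximisation}. \emph{Part (i)} is direct: let $\sg^{*}\in\BC^{\ep}(\rho_{AB})$ achieve the minimum in ${}_{\ep}\Ebb_{\max}^{A|B}(\rho_{AB})$; since an LOCC map $\Lm$ is CPTP, Lemma~\ref{thm7} gives $\Lm(\sg^{*})\in\BC^{\ep}(\Lm(\rho_{AB}))$, and the unsmoothed $\Ebb_{\max}$ is LOCC-monotone (Eq.~\eqref{eqn23}, as $D_{\max}$ obeys~\ref{a} and~\ref{b}), so $\Ebb_{\max}^{A'|B'}(\Lm(\sg^{*}))\le{}_{\ep}\Ebb_{\max}^{A|B}(\rho_{AB})$; as the target quantity is a minimum over $\BC^{\ep}(\Lm(\rho_{AB}))$ and $\Lm(\sg^{*})$ lies in that ball, (i) follows. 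The $D_{\max}$ half of \emph{part (iii)} uses the same template run both ways: $\mathrm{Ad}_{V_{A}\ot V_{B}}(X):=(V_{A}\ot V_{B})X(V_{A}\ad\ot V_{B}\ad)$ is a local CPTP map (giving ``$\ge$''), while its partial inverse $X\mapsto(V_{A}\ad\ot V_{B}\ad)X(V_{A}\ot V_{B})$ is a local trace-non-increasing CP map that sends $\rho_{A'B'}$ exactly back to $\rho_{AB}$ (because $V\ad V=\id$), does not increase the purified distance (Lemma~\ref{thm7}), is monotone for $D_{\max}$ (monotone under all positive maps), and maps separable states to subnormalised separable ones; re-normalising the image of the optimal separable state via \eqref{eqn27} closes ``$\le$''.

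The $D_{\fid}$ half of part (iii) is the subtle one. Set $\Pi:=V_{A}V_{A}\ad\ot V_{B}V_{B}\ad$, so $\rho_{A'B'}$ is supported on $\Pi$. Using isometry-invariance of $D_{\fid}$ together with \eqref{eqn27} and \eqref{eqn80}, I would first check that $\mathrm{Ad}_{V_{A}\ot V_{B}}$ restricts to a purified-distance bijection of $\BC^{\ep}(\rho_{AB})$ onto $\BC^{\ep}_{\Pi}(\rho_{A'B'})$ under which $\Ebb_{\fid}$ is unchanged: by \eqref{eqn27} and \eqref{eqn80} the minimisation defining $\Ebb_{\fid}^{A'|B'}$ of a $\Pi$-supported state may be reduced to subnormalised separable states supported on $\Pi$, which are exactly the images of subnormalised separable states on $AB$. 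This already gives ${}_{\ep}\Ebb_{\fid}^{A|B}(\rho_{AB})=\max_{\sg\in\BC^{\ep}_{\Pi}(\rho_{A'B'})}\Ebb_{\fid}^{A'|B'}(\sg)\le{}_{\ep}\Ebb_{\fid}^{A'|B'}(\rho_{A'B'})$ since $\BC^{\ep}_{\Pi}\subseteq\BC^{\ep}$. For the matching ``$\ge$'' I would show the maximum over $\BC^{\ep}(\rho_{A'B'})$ is already attained inside $\BC^{\ep}_{\Pi}(\rho_{A'B'})$: for any $\tau'$ one has $\Ebb_{\fid}^{A'|B'}(\Pi\tau'\Pi)\ge\Ebb_{\fid}^{A'|B'}(\tau')$, because restricting the minimisation for $\Ebb_{\fid}^{A'|B'}(\tau')$ to separable $\eta'$ supported on $\Pi$ only increases it while for such $\eta'$ Eq.~\eqref{eqn80} gives $D_{\fid}(\tau'||\eta')=D_{\fid}(\Pi\tau'\Pi||\eta')$, and $\Pi\tau'\Pi\in\BC^{\ep}_{\Pi}(\rho_{A'B'})$ by Lemma~\ref{thm7}; applying this to the $\tau'$ optimal for ${}_{\ep}\Ebb_{\fid}^{A'|B'}(\rho_{A'B'})$ finishes (iii).

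For \emph{part (ii)} I would reduce to a local partial trace via Stinespring dilations $\EC_{A}(\cdot)=\Tr_{A_{1}}[W_{A}(\cdot)W_{A}\ad]$ and $\EC_{B}(\cdot)=\Tr_{B_{1}}[W_{B}(\cdot)W_{B}\ad]$: with $\Xi:=(W_{A}\ot W_{B})\rho_{AB}(W_{A}\ad\ot W_{B}\ad)$ one has $\rho_{A'B'}=(\EC_{A}\ot\EC_{B})(\rho_{AB})=\Tr_{A_{1}B_{1}}\Xi$, and part (iii) gives ${}_{\ep}\Ebb_{\fid}^{A|B}(\rho_{AB})={}_{\ep}\Ebb_{\fid}^{A'A_{1}|B'B_{1}}(\Xi)$, so it suffices to prove ${}_{\ep}\Ebb_{\fid}^{A'A_{1}|B'B_{1}}(\Xi)\ge{}_{\ep}\Ebb_{\fid}^{A'|B'}(\Tr_{A_{1}B_{1}}\Xi)$. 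Take $\mu^{*}\in\BC^{\ep}(\rho_{A'B'})$ achieving the maximum on the right; purify $\Xi$ to $\phi_{\Xi}$ (which also purifies $\rho_{A'B'}$), invoke Lemma~\ref{thm8} to obtain a purification $\phi_{\mu}$ of $\mu^{*}$ on the same space with $P(\phi_{\Xi},\phi_{\mu})=P(\rho_{A'B'},\mu^{*})\le\ep$, and set $\eta:=\Tr_{R}\phi_{\mu}$. Then $\Tr_{A_{1}B_{1}}\eta=\mu^{*}$ and $\eta\in\BC^{\ep}(\Xi)$ by Lemma~\ref{thm7} (partial trace is purified-distance non-increasing); since discarding $A_{1}B_{1}$ is LOCC, Eq.~\eqref{eqn23} gives $\Ebb_{\fid}^{A'A_{1}|B'B_{1}}(\eta)\ge\Ebb_{\fid}^{A'|B'}(\mu^{*})={}_{\ep}\Ebb_{\fid}^{A'|B'}(\rho_{A'B'})$, while $\eta\in\BC^{\ep}(\Xi)$ gives $\Ebb_{\fid}^{A'A_{1}|B'B_{1}}(\eta)\le{}_{\ep}\Ebb_{\fid}^{A'A_{1}|B'B_{1}}(\Xi)$, completing (ii). All of the subnormalised-state bookkeeping is covered by Lemmas~\ref{thm11}--\ref{thm13} and properties~\eqref{eqn27}, \eqref{eqn78}--\eqref{eqn81}.

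The main obstacle is the ``$\ge$'' direction of part (iii) for ${}_{\ep}\Ebb_{\fid}$ (and, through the Stinespring reduction, part (ii) as well): because the smoothing is a \emph{maximisation} one cannot simply push the optimiser forward as in part (i), and because $\Ebb_{\fid}$ is essentially $-2\log$ of a fidelity, the naive moves --- projecting or re-normalising a nearby state onto the range $\Pi$ --- seem to move the inequality the wrong way. The point that makes it work is that the null-space identity \eqref{eqn80} for $D_{\fid}$ actually forces $\Ebb_{\fid}^{A'|B'}(\Pi\tau'\Pi)\ge\Ebb_{\fid}^{A'|B'}(\tau')$, so the optimisation can be confined to the restricted ball $\BC^{\ep}_{\Pi}$; this is the same phenomenon, handled by the same machinery (Lemma~\ref{thm10} and the support/null-space properties), that was used to confine the smooth-entropy optimisation of a premeasurement state to $\BC^{\ep}_{\Pi}$ in the proof of Theorem~\ref{thm6}.
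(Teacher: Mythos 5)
Your proposal is correct, and it reaches the same conclusions with the same toolkit (Lemma~\ref{thm7}, Lemma~\ref{thm8}, the null-space identity \eqref{eqn80}, the operator inequality \eqref{eqn27}, and the Stinespring dilation), but the logical architecture is inverted relative to the paper's. The paper proves (i) and (ii) independently --- for (ii) it runs a single chain through the dilated state $\rho_{E_AA'B'E_B}$, confining both the $\ep$-ball and the separable set to the range projector $\Pi$ --- and then obtains (iii) as a two-line corollary by applying the monotonicity of (i)/(ii) twice, once with the local isometries and once with local channels that undo them. You instead prove (iii) directly (for $D_{\max}$ via the trace-non-increasing partial inverse $(V_A\ad\ot V_B\ad)(\cdot)(V_A\ot V_B)$ plus renormalization through \eqref{eqn27}; for $D_{\fid}$ via the correspondence between $\BC^{\ep}(\rho_{AB})$ and $\BC^{\ep}_{\Pi}(\rho_{A'B'})$ together with $\Ebb_{\fid}(\Pi\tau'\Pi)\geq\Ebb_{\fid}(\tau')$), and then factor (ii) as isometry invariance plus a partial-trace step. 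Two remarks. First, your explicit use of Lemma~\ref{thm8} to extend the optimizer $\mu^{*}\in\BC^{\ep}(\rho_{A'B'})$ to a state in the ball around the dilated state is exactly what is needed to justify the paper's terse final inequality in its proof of (ii): a maximum over the ball of the extended state dominates a maximum over the ball of the reduced state only because the partial trace maps the former ball \emph{onto} the latter, so you are spelling out a step the paper compresses into one sentence. Second, in your key inequality $\Ebb_{\fid}(\Pi\tau'\Pi)\geq\Ebb_{\fid}(\tau')$ the chain as written needs one more link: you must also confine the minimization defining $\Ebb_{\fid}(\Pi\tau'\Pi)$ itself to separable states supported on $\Pi$, which follows from \eqref{eqn80} plus renormalizing $\Pi\eta'\Pi$ via \eqref{eqn27} --- precisely the paper's move from \eqref{eqn94} to \eqref{eqn95} --- so this is a presentational gap, not a mathematical one. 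The paper's route to (iii) is shorter; yours buys a self-contained proof of (iii) that can then carry the weight of (ii).
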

\begin{proof}

(i) \begin{align}
{}_{\ep}\Ebb_{\max}^{A|B}(\rho_{AB}) &= \min_{\sg \in \BC^{\ep}(\rho_{AB})} \Ebb_{\max}^{A|B}(\sg)\notag\\
& \geq \min_{\sg \in \BC^{\ep}(\rho_{AB})} \Ebb_{\max}^{A|B}(\Lm(\sg))\notag\\
& \geq {}_{\ep}\Ebb_{\max}^{A'|B'}(\rho_{A'B'})\notag
\end{align}
where the third line used the fact if $\sg \in \BC^{\ep}(\rho_{AB})$ then $\Lm(\sg) \in \BC^{\ep}(\Lm(\rho_{AB}))$ due to Lemma~\ref{thm7}.

(ii) Using the Stinespring dilation, write $\EC_A(\cdot) = \Tr_{E_A}[V_A(\cdot)V_A\ad ]$ and $\EC_B(\cdot) = \Tr_{E_B}[V_B(\cdot)V_B\ad ]$ where $E_A$ and $E_B$ are ancillas and $V_A:\HC_A \to \HC_{A'E_A}$ and $V_B:\HC_B \to \HC_{B'E_B}$ are local isometries. Define $\rho_{E_A A'B' E_B}:= (V_A\ot V_B)\rho_{AB}(V_A\ad \ot V_B\ad)$ and note that $\rho_{A'B' } = \Tr_{E_AE_B}(\rho_{E_A A'B' E_B}) = (\EC_A\ot \EC_B)(\rho_{AB})$. Also define $\Pi := V_A V_A\ad \ot V_BV_B\ad$, and denote $\textsf{Sep}_{\Pi}$ as the set of normalized separable states that live only in the subspace defined by $\Pi$. Then 
\begin{align}
{}_{\ep}\Ebb_{\fid}^{A|B}(\rho_{AB}) &= \max_{\sg \in \BC^{\ep}(\rho_{AB})} \min_{\tau\in \textsf{Sep}} D_{\fid}(\sg ||\tau ) \notag\\
&=\max_{\sg \in \BC^{\ep}_{\Pi}(\rho_{E_AA'B'E_B})} \min_{\tau\in \textsf{Sep}_{\Pi}} D_{\fid}(\sg ||\tau ) \notag\\
&=\max_{\sg \in \BC^{\ep}(\rho_{E_AA'B'E_B})} \min_{\tau\in \textsf{Sep}_{\Pi}} D_{\fid}(\sg ||\tau ) \notag\\
&\geq \max_{\sg \in \BC^{\ep}(\rho_{E_AA'B'E_B})} \min_{\tau\in \textsf{Sep}} D_{\fid}(\sg ||\tau ) \notag\\
&= \max_{\sg \in \BC^{\ep}(\rho_{E_AA'B'E_B})} E^{E_AA' | B'E_B}_{\fid}(\sg) \notag\\
&\geq \max_{\sg \in \BC^{\ep}(\rho_{A'B'})} E^{A' | B'}_{\fid}(\sg) = {}_{\ep}\Ebb_{\fid}^{A'|B'}(\rho_{A'B'})\notag
\end{align}
The third line follows from \eqref{eqn80} and Lemma~\ref{thm7}. The last line follows from Lemma~\ref{thm7} and the fact that $\Ebb_{\fid}$ is non-increasing under local partial traces.

(iii) This follows from parts (i) and (ii) of this Lemma, by invoking the fact that the entanglement measure is non-increasing under local quantum channels, \textit{twice} in succession. That is, invoke it first with the channel that applies the local isometries $V_A \ot V_B$, and invoke it again with the channel that undoes these local isometries to obtain $\Ebb(\rho_{AB}) \geq \Ebb[(V_A \ot V_B) \rho_{AB} ( V_A\ad \ot V_B\ad)] \geq \Ebb(\rho_{AB})$. Hence the inequalities are equalities.

\end{proof}

\section{Additivity of $\Ebb_D$ for $\textsf{MQ}$ states}\label{app1}

In general, $\Ebb_D$ is not additive \cite{ShorEtAl2001, ShorEtAl2003}, i.e., there exist states $\rho$ and $\sg$ for which $\Ebb_D(\rho\ot \sg)\neq \Ebb_D(\rho) +\Ebb_D(\sg)$. However, in the special case, e.g., when $\rho$ and $\sg$ are $\textsf{MQ}$ states, $\Ebb_D$ is additive. The basic idea is that if $\rho \in \textsf{MQ}$ and $\sg\in \textsf{MQ}$, then $(\rho \ot \sg) \in \textsf{MQ} $, and hence from Theorem~\ref{thm4}, $\Ebb_D(\rho\ot \sg)$ can be written as a conditional von Neumann entropy, and such entropies are additive, which in turn implies the additivity of $\Ebb_D$. This argument of course applies to the state $\rhot_{M_XS}\ot \rhot_{M_YS}\ot \rhot_{M_ZS}$, which is the state considered in the entanglement distillation game in Section~\ref{sct7.2}.

\bibliographystyle{naturemag}
\bibliography{EntanglementUR}

\begin{thebibliography}{10}
\expandafter\ifx\csname url\endcsname\relax
  \def\url#1{\texttt{#1}}\fi
\expandafter\ifx\csname urlprefix\endcsname\relax\def\urlprefix{URL }\fi
\providecommand{\bibinfo}[2]{#2}
\providecommand{\eprint}[2][]{\url{#2}}

\bibitem{DatShaCav08}
\bibinfo{author}{Datta, A.}, \bibinfo{author}{Shaji, A.} \&
  \bibinfo{author}{Caves, C.~M.}
\newblock \bibinfo{title}{{Quantum Discord and the Power of One Qubit}}.
\newblock \emph{\bibinfo{journal}{Phys. Rev. Lett.}}
  \textbf{\bibinfo{volume}{100}}, \bibinfo{pages}{050502}
  (\bibinfo{year}{2008}).

\bibitem{HHHH09}
\bibinfo{author}{Horodecki, R.}, \bibinfo{author}{Horodecki, P.},
  \bibinfo{author}{Horodecki, M.} \& \bibinfo{author}{Horodecki, K.}
\newblock \bibinfo{title}{Quantum entanglement}.
\newblock \emph{\bibinfo{journal}{Rev. Mod. Phys.}}
  \textbf{\bibinfo{volume}{81}}, \bibinfo{pages}{865--942}
  (\bibinfo{year}{2009}).

\bibitem{OllZur01}
\bibinfo{author}{Ollivier, H.} \& \bibinfo{author}{Zurek, W.~H.}
\newblock \bibinfo{title}{Quantum discord: A measure of the quantumness of
  correlations}.
\newblock \emph{\bibinfo{journal}{Phys. Rev. Lett.}}
  \textbf{\bibinfo{volume}{88}}, \bibinfo{pages}{017901}
  (\bibinfo{year}{2001}).

\bibitem{ModiEtAl2011review}
\bibinfo{author}{{Modi}, K.}, \bibinfo{author}{{Brodutch}, A.},
  \bibinfo{author}{{Cable}, H.}, \bibinfo{author}{{Paterek}, T.} \&
  \bibinfo{author}{{Vedral}, V.}
\newblock \bibinfo{note}{Eprint arXiv:1112.6238 [quant-ph]}.

\bibitem{PianiAdessoPRA.85.040301}
\bibinfo{author}{Piani, M.} \& \bibinfo{author}{Adesso, G.}
\newblock \bibinfo{title}{Quantumness of correlations revealed in local
  measurements exceeds entanglement}.
\newblock \emph{\bibinfo{journal}{Phys. Rev. A}} \textbf{\bibinfo{volume}{85}},
  \bibinfo{pages}{040301} (\bibinfo{year}{2012}).

\bibitem{HorEtAl05}
\bibinfo{author}{Horodecki, M.} \emph{et~al.}
\newblock \bibinfo{title}{Local versus nonlocal information in
  quantum-information theory: Formalism and phenomena}.
\newblock \emph{\bibinfo{journal}{Phys. Rev. A}} \textbf{\bibinfo{volume}{71}},
  \bibinfo{pages}{062307} (\bibinfo{year}{2005}).

\bibitem{DevWin05}
\bibinfo{author}{Devetak, I.} \& \bibinfo{author}{Winter, A.}
\newblock \bibinfo{title}{Distillation of secret key and entanglement from
  quantum states}.
\newblock \emph{\bibinfo{journal}{Proc. R. Soc. A}}
  \textbf{\bibinfo{volume}{461}}, \bibinfo{pages}{207--235}
  (\bibinfo{year}{2005}).

\bibitem{ZurekReview}
\bibinfo{author}{Zurek, W.~H.}
\newblock \bibinfo{title}{Decoherence, einselection, and the quantum origins of
  the classical}.
\newblock \emph{\bibinfo{journal}{Rev. Mod. Phys.}}
  \textbf{\bibinfo{volume}{75}}, \bibinfo{pages}{715--775}
  (\bibinfo{year}{2003}).

\bibitem{Heisenberg}
\bibinfo{author}{Heisenberg, W.}
\newblock \bibinfo{title}{{\"U}ber den anschaulichen {I}nhalt der
  quantentheoretischen {K}inematik und {M}echanik}.
\newblock \emph{\bibinfo{journal}{Zeitschrift f\"ur Physik}}
  \textbf{\bibinfo{volume}{43}}, \bibinfo{pages}{172--198}
  (\bibinfo{year}{1927}).

\bibitem{Robertson}
\bibinfo{author}{Robertson, H.~P.}
\newblock \bibinfo{title}{The uncertainty principle}.
\newblock \emph{\bibinfo{journal}{Phys. Rev.}} \textbf{\bibinfo{volume}{34}},
  \bibinfo{pages}{163--164} (\bibinfo{year}{1929}).

\bibitem{EURreview1}
\bibinfo{author}{{Wehner}, S.} \& \bibinfo{author}{{Winter}, A.}
\newblock \bibinfo{title}{Entropic uncertainty relations -- a survey}.
\newblock \emph{\bibinfo{journal}{New J. Phys.}} \textbf{\bibinfo{volume}{12}},
  \bibinfo{pages}{025009} (\bibinfo{year}{2010}).

\bibitem{RenesBoileau}
\bibinfo{author}{Renes, J.~M.} \& \bibinfo{author}{Boileau, J.-C.}
\newblock \bibinfo{title}{Conjectured strong complementary information
  tradeoff}.
\newblock \emph{\bibinfo{journal}{Phys. Rev. Lett.}}
  \textbf{\bibinfo{volume}{103}}, \bibinfo{pages}{020402}
  (\bibinfo{year}{2009}).

\bibitem{BertaEtAl}
\bibinfo{author}{{Berta}, M.}, \bibinfo{author}{{Christandl}, M.},
  \bibinfo{author}{{Colbeck}, R.}, \bibinfo{author}{{Renes}, J.~M.} \&
  \bibinfo{author}{{Renner}, R.}
\newblock \bibinfo{title}{The uncertainty principle in the presence of quantum
  memory}.
\newblock \emph{\bibinfo{journal}{Nature Physics}}
  \textbf{\bibinfo{volume}{6}}, \bibinfo{pages}{659} (\bibinfo{year}{2010}).

\bibitem{TomRen2010}
\bibinfo{author}{{Tomamichel}, M.} \& \bibinfo{author}{{Renner}, R.}
\newblock \bibinfo{title}{Uncertainty relation for smooth entropies}.
\newblock \emph{\bibinfo{journal}{Phys. Rev. Lett.}}
  \textbf{\bibinfo{volume}{106}}, \bibinfo{pages}{110506}
  (\bibinfo{year}{2011}).

\bibitem{ColesEtAl}
\bibinfo{author}{Coles, P.~J.}, \bibinfo{author}{Yu, L.},
  \bibinfo{author}{Gheorghiu, V.} \& \bibinfo{author}{Griffiths, R.~B.}
\newblock \bibinfo{title}{Information-theoretic treatment of tripartite systems
  and quantum channels}.
\newblock \emph{\bibinfo{journal}{Phys. Rev. A}} \textbf{\bibinfo{volume}{83}},
  \bibinfo{pages}{062338} (\bibinfo{year}{2011}).

\bibitem{ColesColbeckYuZwolak2012PRL}
\bibinfo{author}{Coles, P.~J.}, \bibinfo{author}{Colbeck, R.},
  \bibinfo{author}{Yu, L.} \& \bibinfo{author}{Zwolak, M.}
\newblock \bibinfo{title}{Uncertainty relations from simple entropic
  properties}.
\newblock \emph{\bibinfo{journal}{Phys. Rev. Lett.}}
  \textbf{\bibinfo{volume}{108}}, \bibinfo{pages}{210405}
  (\bibinfo{year}{2012}).

\bibitem{LXXLG}
\bibinfo{author}{Li, C.-F.}, \bibinfo{author}{Xu, J.-S.}, \bibinfo{author}{Xu,
  X.-Y.}, \bibinfo{author}{Li, K.} \& \bibinfo{author}{Guo, G.-C.}
\newblock \bibinfo{title}{Experimental investigation of the
  entanglement-assisted entropic uncertainty principle}.
\newblock \emph{\bibinfo{journal}{Nature Physics}}
  \textbf{\bibinfo{volume}{7}}, \bibinfo{pages}{752--756}
  (\bibinfo{year}{2011}).

\bibitem{PHCFR}
\bibinfo{author}{Prevedel, R.}, \bibinfo{author}{Hamel, D.~R.},
  \bibinfo{author}{Colbeck, R.}, \bibinfo{author}{Fisher, K.} \&
  \bibinfo{author}{Resch, K.~J.}
\newblock \bibinfo{title}{Experimental investigation of the uncertainty
  principle in the presence of quantum memory and its application to witnessing
  entanglement}.
\newblock \emph{\bibinfo{journal}{Nature Physics}}
  \textbf{\bibinfo{volume}{7}}, \bibinfo{pages}{757--761}
  (\bibinfo{year}{2011}).

\bibitem{TLGR}
\bibinfo{author}{Tomamichel, M.}, \bibinfo{author}{Lim, C. C.~W.},
  \bibinfo{author}{Gisin, N.} \& \bibinfo{author}{Renner, R.}
\newblock \bibinfo{title}{Tight finite-key analysis for quantum cryptography}.
\newblock \emph{\bibinfo{journal}{Nature Communications}}
  \textbf{\bibinfo{volume}{3}}, \bibinfo{pages}{634} (\bibinfo{year}{2012}).

\bibitem{VedralPRL2003}
\bibinfo{author}{Vedral, V.}
\newblock \bibinfo{title}{Classical correlations and entanglement in quantum
  measurements}.
\newblock \emph{\bibinfo{journal}{Phys. Rev. Lett.}}
  \textbf{\bibinfo{volume}{90}}, \bibinfo{pages}{050401}
  (\bibinfo{year}{2003}).

\bibitem{PianiEtAl11}
\bibinfo{author}{Piani, M.} \emph{et~al.}
\newblock \bibinfo{title}{All nonclassical correlations can be activated into
  distillable entanglement}.
\newblock \emph{\bibinfo{journal}{Phys. Rev. Lett.}}
  \textbf{\bibinfo{volume}{106}}, \bibinfo{pages}{220403}
  (\bibinfo{year}{2011}).

\bibitem{StrKamBru11}
\bibinfo{author}{Streltsov, A.}, \bibinfo{author}{Kampermann, H.} \&
  \bibinfo{author}{Bru\ss{}, D.}
\newblock \bibinfo{title}{{Linking Quantum Discord to Entanglement in a
  Measurement}}.
\newblock \emph{\bibinfo{journal}{Phys. Rev. Lett.}}
  \textbf{\bibinfo{volume}{106}}, \bibinfo{pages}{160401}
  (\bibinfo{year}{2011}).

\bibitem{NieChu00}
\bibinfo{author}{Nielsen, M.~A.} \& \bibinfo{author}{Chuang, I.~L.}
\newblock \emph{\bibinfo{title}{Quantum Computation and Quantum Information}}
  (\bibinfo{publisher}{Cambridge University Press},
  \bibinfo{address}{Cambridge}, \bibinfo{year}{2000}), \bibinfo{edition}{5th}
  edn.

\bibitem{RainsPhysRevA.60.179}
\bibinfo{author}{Rains, E.~M.}
\newblock \bibinfo{title}{Bound on distillable entanglement}.
\newblock \emph{\bibinfo{journal}{Phys. Rev. A}} \textbf{\bibinfo{volume}{60}},
  \bibinfo{pages}{179--184} (\bibinfo{year}{1999}).

\bibitem{CorDeOFan11}
\bibinfo{author}{Cornelio, M.~F.}, \bibinfo{author}{de~Oliveira, M.~C.} \&
  \bibinfo{author}{Fanchini, F.~F.}
\newblock \bibinfo{title}{{Entanglement Irreversibility from Quantum Discord
  and Quantum Deficit}}.
\newblock \emph{\bibinfo{journal}{Phys. Rev. Lett.}}
  \textbf{\bibinfo{volume}{107}}, \bibinfo{pages}{020502}
  (\bibinfo{year}{2011}).

\bibitem{VedrPlen1998}
\bibinfo{author}{Vedral, V.} \& \bibinfo{author}{Plenio, M.~B.}
\newblock \bibinfo{title}{{Entanglement measures and purification procedures}}.
\newblock \emph{\bibinfo{journal}{Phys. Rev. A}} \textbf{\bibinfo{volume}{57}},
  \bibinfo{pages}{1619--1633} (\bibinfo{year}{1998}).

\bibitem{BenZyc06}
\bibinfo{author}{Bengtsson, I.} \& \bibinfo{author}{\.Zyczkowski, K.}
\newblock \emph{\bibinfo{title}{Geometry of Quantum States}}
  (\bibinfo{publisher}{Cambridge University Press},
  \bibinfo{address}{Cambridge}, \bibinfo{year}{2006}).

\bibitem{ModiEtAl2010}
\bibinfo{author}{Modi, K.}, \bibinfo{author}{Paterek, T.},
  \bibinfo{author}{Son, W.}, \bibinfo{author}{Vedral, V.} \&
  \bibinfo{author}{Williamson, M.}
\newblock \bibinfo{title}{{Unified View of Quantum and Classical
  Correlations}}.
\newblock \emph{\bibinfo{journal}{Phys. Rev. Lett.}}
  \textbf{\bibinfo{volume}{104}}, \bibinfo{pages}{080501}
  (\bibinfo{year}{2010}).

\bibitem{VedralReview02}
\bibinfo{author}{Vedral, V.}
\newblock \bibinfo{title}{The role of relative entropy in quantum information
  theory}.
\newblock \emph{\bibinfo{journal}{Rev. Mod. Phys.}}
  \textbf{\bibinfo{volume}{74}}, \bibinfo{pages}{197--234}
  (\bibinfo{year}{2002}).

\bibitem{Renyi}
\bibinfo{author}{R\'{e}nyi, A.}
\newblock \bibinfo{title}{On measures of information and entropy}.
\newblock In \emph{\bibinfo{booktitle}{Proceedings 4th Berkeley Symposium on
  Mathematical Statistics and Probability}}, \bibinfo{pages}{547--561}
  (\bibinfo{year}{1961}).

\bibitem{Petz84}
\bibinfo{author}{Petz, D.}
\newblock \bibinfo{title}{Quasi-entropies for finite quantum systems}.
\newblock \emph{\bibinfo{journal}{Reports on Mathematical Physics}}
  \textbf{\bibinfo{volume}{23}}, \bibinfo{pages}{57--65}
  (\bibinfo{year}{1984}).

\bibitem{RennerThesis05}
\bibinfo{author}{Renner, R.}
\newblock \emph{\bibinfo{title}{Security of Quantum Key Distribution}}.
\newblock Ph.D. thesis, \bibinfo{school}{ETH Z\"{u}rich}
  (\bibinfo{year}{2005}).

\bibitem{KonRenSch09}
\bibinfo{author}{Konig, R.}, \bibinfo{author}{Renner, R.} \&
  \bibinfo{author}{Schaffner, C.}
\newblock \bibinfo{title}{The operational meaning of min- and max-entropy}.
\newblock \emph{\bibinfo{journal}{IEEE Trans. Inf. Theory}}
  \textbf{\bibinfo{volume}{55}}, \bibinfo{pages}{4337 --4347}
  (\bibinfo{year}{2009}).

\bibitem{Datta09}
\bibinfo{author}{Datta, N.}
\newblock \bibinfo{title}{{Min- and Max-Relative Entropies and a New
  Entanglement Monotone}}.
\newblock \emph{\bibinfo{journal}{Information Theory, IEEE Transactions on}}
  \textbf{\bibinfo{volume}{55}}, \bibinfo{pages}{2816 --2826}
  (\bibinfo{year}{2009}).

\bibitem{Lloyd97}
\bibinfo{author}{Lloyd, S.}
\newblock \bibinfo{title}{Capacity of the noisy quantum channel}.
\newblock \emph{\bibinfo{journal}{Phys. Rev. A}} \textbf{\bibinfo{volume}{55}},
  \bibinfo{pages}{1613--1622} (\bibinfo{year}{1997}).

\bibitem{HorOppWin05}
\bibinfo{author}{Horodecki, M.}, \bibinfo{author}{Oppenheim, J.} \&
  \bibinfo{author}{Winter, A.}
\newblock \bibinfo{title}{Partial quantum information}.
\newblock \emph{\bibinfo{journal}{Nature}} \textbf{\bibinfo{volume}{436}},
  \bibinfo{pages}{673--676} (\bibinfo{year}{2005}).

\bibitem{WuEtAl2009}
\bibinfo{author}{Wu, S.}, \bibinfo{author}{Poulsen, U.~V.} \&
  \bibinfo{author}{M\o{}lmer, K.}
\newblock \bibinfo{title}{Correlations in local measurements on a quantum
  state, and complementarity as an explanation of nonclassicality}.
\newblock \emph{\bibinfo{journal}{Phys. Rev. A}} \textbf{\bibinfo{volume}{80}},
  \bibinfo{pages}{032319} (\bibinfo{year}{2009}).

\bibitem{Devetak07}
\bibinfo{author}{Devetak, I.}
\newblock \bibinfo{title}{Distillation of local purity from quantum states}.
\newblock \emph{\bibinfo{journal}{Phys. Rev. A}} \textbf{\bibinfo{volume}{71}},
  \bibinfo{pages}{062303} (\bibinfo{year}{2005}).

\bibitem{BranPlen2008}
\bibinfo{author}{Brandao, F. G. S.~L.} \& \bibinfo{author}{Plenio, M.~B.}
\newblock \bibinfo{title}{Entanglement theory and the second law of
  thermodynamics}.
\newblock \emph{\bibinfo{journal}{Nat Phys}} \textbf{\bibinfo{volume}{4}},
  \bibinfo{pages}{873--877} (\bibinfo{year}{2008}).

\bibitem{DattaArxiv2010}
\bibinfo{author}{{Datta}, A.}
\newblock \bibinfo{title}{{A Condition for the Nullity of Quantum Discord}}.
\newblock \emph{\bibinfo{journal}{ArXiv e-prints}}  (\bibinfo{year}{2010}).
\newblock \eprint{1003.5256}.

\bibitem{RennerThesis05URL}
\bibinfo{author}{Renner, R.}
\newblock \emph{\bibinfo{title}{Security of Quantum Key Distribution}}.
\newblock Ph.D. thesis, \bibinfo{school}{ETH Z\"{u}rich}
  (\bibinfo{year}{2005}).
\newblock \urlprefix\url{http://arxiv.org/abs/quant-ph/0512258}.

\bibitem{TomamichelThesis2012}
\bibinfo{author}{{Tomamichel}, M.}
\newblock \emph{\bibinfo{title}{{A Framework for Non-Asymptotic Quantum
  Information Theory}}}.
\newblock Ph.D. thesis, \bibinfo{school}{ETH Z\"{u}rich}
  (\bibinfo{year}{2012}).
\newblock \urlprefix\url{http://arxiv.org/abs/1203.2142}.

\bibitem{TomColRen10}
\bibinfo{author}{Tomamichel, M.}, \bibinfo{author}{Colbeck, R.} \&
  \bibinfo{author}{Renner, R.}
\newblock \bibinfo{title}{Duality between smooth min- and max-entropies}.
\newblock \emph{\bibinfo{journal}{IEEE Trans. Inf. Theory}}
  \textbf{\bibinfo{volume}{56}}, \bibinfo{pages}{4674--4681}
  (\bibinfo{year}{2010}).

\bibitem{BusDatPRL2011}
\bibinfo{author}{Buscemi, F.} \& \bibinfo{author}{Datta, N.}
\newblock \bibinfo{title}{Entanglement cost in practical scenarios}.
\newblock \emph{\bibinfo{journal}{Phys. Rev. Lett.}}
  \textbf{\bibinfo{volume}{106}}, \bibinfo{pages}{130503}
  (\bibinfo{year}{2011}).

\bibitem{BranDatt2011}
\bibinfo{author}{Brandao, F.} \& \bibinfo{author}{Datta, N.}
\newblock \bibinfo{title}{One-shot rates for entanglement manipulation under
  non-entangling maps}.
\newblock \emph{\bibinfo{journal}{IEEE Trans. Inf. Theory}}
  \textbf{\bibinfo{volume}{57}}, \bibinfo{pages}{1754 --1760}
  (\bibinfo{year}{2011}).

\bibitem{ColesDecDisc2012}
\bibinfo{author}{Coles, P.~J.}
\newblock \bibinfo{title}{Unification of different views of decoherence and
  discord}.
\newblock \emph{\bibinfo{journal}{Phys. Rev. A}} \textbf{\bibinfo{volume}{85}},
  \bibinfo{pages}{042103} (\bibinfo{year}{2012}).

\bibitem{TRSS10}
\bibinfo{author}{Tomamichel, M.}, \bibinfo{author}{Renner, R.},
  \bibinfo{author}{Schaffner, C.} \& \bibinfo{author}{Smith, A.}
\newblock \bibinfo{title}{Leftover hashing against quantum side information}.
\newblock \emph{\bibinfo{journal}{IEEE Trans. Inf. Theory}}
  \textbf{\bibinfo{volume}{57}}, \bibinfo{pages}{2703--2707}
  (\bibinfo{year}{2010}).

\bibitem{RenesRenner2012}
\bibinfo{author}{Renes, J.} \& \bibinfo{author}{Renner, R.}
\newblock \bibinfo{title}{One-shot classical data compression with quantum side
  information and the distillation of common randomness or secret keys}.
\newblock \emph{\bibinfo{journal}{IEEE Trans. Inf. Theory}}
  \textbf{\bibinfo{volume}{58}}, \bibinfo{pages}{1985 --1991}
  (\bibinfo{year}{2012}).

\bibitem{SanchezRuiz1995}
\bibinfo{author}{S{\'a}nchez-Ruiz, J.}
\newblock \bibinfo{title}{Improved bounds in the entropic uncertainty and
  certainty relations for complementary observables}.
\newblock \emph{\bibinfo{journal}{Physics Letters A}}
  \textbf{\bibinfo{volume}{201}}, \bibinfo{pages}{125 -- 131}
  (\bibinfo{year}{1995}).

\bibitem{ColesYuZwo2011}
\bibinfo{author}{{Coles}, P.~J.}, \bibinfo{author}{{Yu}, L.} \&
  \bibinfo{author}{{Zwolak}, M.} (\bibinfo{year}{2011}).
\newblock \bibinfo{note}{Eprint {\tt arXiv:1105.4865} [quant-ph]}.

\bibitem{StreltsovEtAl2010NJP}
\bibinfo{author}{Streltsov, A.}, \bibinfo{author}{Kampermann, H.} \&
  \bibinfo{author}{Bru\ss{}, D.}
\newblock \bibinfo{title}{Linking a distance measure of entanglement to its
  convex roof}.
\newblock \emph{\bibinfo{journal}{New J. Phys.}} \textbf{\bibinfo{volume}{12}},
  \bibinfo{pages}{123004} (\bibinfo{year}{2010}).

\bibitem{Petz2003}
\bibinfo{author}{Petz, D.}
\newblock \bibinfo{title}{Monotonicity of quantum relative entropy revisited}.
\newblock \emph{\bibinfo{journal}{Rev. Math. Phys.}}
  \textbf{\bibinfo{volume}{15}}, \bibinfo{pages}{79--91}
  (\bibinfo{year}{2003}).

\bibitem{HaydenEtAl04}
\bibinfo{author}{Hayden, P.}, \bibinfo{author}{Jozsa, R.},
  \bibinfo{author}{Petz, D.} \& \bibinfo{author}{Winter, A.}
\newblock \bibinfo{title}{Structure of states which satisfy strong
  subadditivity of quantum entropy with equality}.
\newblock \emph{\bibinfo{journal}{Commun. Math. Phys.}}
  \textbf{\bibinfo{volume}{246}}, \bibinfo{pages}{359--374}
  (\bibinfo{year}{2004}).

\bibitem{HHHOprl05}
\bibinfo{author}{Horodecki, K.}, \bibinfo{author}{Horodecki, M.},
  \bibinfo{author}{Horodecki, P.} \& \bibinfo{author}{Oppenheim, J.}
\newblock \bibinfo{title}{Secure key from bound entanglement}.
\newblock \emph{\bibinfo{journal}{Phys. Rev. Lett.}}
  \textbf{\bibinfo{volume}{94}}, \bibinfo{pages}{160502}
  (\bibinfo{year}{2005}).

\bibitem{ShorEtAl2001}
\bibinfo{author}{Shor, P.~W.}, \bibinfo{author}{Smolin, J.~A.} \&
  \bibinfo{author}{Terhal, B.~M.}
\newblock \bibinfo{title}{Nonadditivity of bipartite distillable entanglement
  follows from a conjecture on bound entangled werner states}.
\newblock \emph{\bibinfo{journal}{Phys. Rev. Lett.}}
  \textbf{\bibinfo{volume}{86}}, \bibinfo{pages}{2681--2684}
  (\bibinfo{year}{2001}).

\bibitem{ShorEtAl2003}
\bibinfo{author}{Shor, P.~W.}, \bibinfo{author}{Smolin, J.~A.} \&
  \bibinfo{author}{Thapliyal, A.~V.}
\newblock \bibinfo{title}{Superactivation of bound entanglement}.
\newblock \emph{\bibinfo{journal}{Phys. Rev. Lett.}}
  \textbf{\bibinfo{volume}{90}}, \bibinfo{pages}{107901}
  (\bibinfo{year}{2003}).

\end{thebibliography}

\end{document}